\newcommand{\reals}{\mathbb{R}}
\newcommand{\nats}{\mathbb{N}}
\newcommand{\size}{\mathrm{size}}
\newcommand{\Pow}{\mathscr{P}}
\newcommand{\disjointunion}{\mathbin{\dot{\cup}}}
\newcommand{\commentout}[1]{}
\newtheorem{lemma}{Lemma}
\newtheorem{theorem}{Theorem}
\newtheorem{corollary}{Corollary}
\title{{\bf Size-Degree Trade-Offs \\
for Sums-of-Squares \\
and Positivstellensatz Proofs}}
\author{Albert Atserias \qquad
  Tuomas Hakoniemi \\
  Universitat Polit\`ecnica de Catalunya \\
  {\texttt{\{atserias,hakoniemi\}@cs.upc.edu}}
}
\begin{document}

\maketitle

\thispagestyle{empty}

\begin{abstract} 
We show that if a system of degree-$k$ polynomial constraints on~$n$
Boolean variables has a Sums-of-Squares (SOS) proof of unsatisfiability
with at most~$s$ many monomials, then it also has one whose degree is
of the order of the square root of~$n \log s$ plus~$k$. A similar
statement holds for the more general Positivstellensatz (PS) proofs.
This establishes size-degree trade-offs for SOS and PS that match
their analogues for weaker proof systems such as Resolution,
Polynomial Calculus, and the proof systems for the LP and SDP
hierarchies of Lov\'asz and Schrijver. As a corollary to this, and to
the known degree lower bounds, we get optimal integrality gaps for
exponential size SOS proofs for sparse random instances of the
standard NP-hard constraint optimization problems. We also get
exponential size SOS lower bounds for Tseitin and Knapsack formulas.
The proof of our main result relies on a zero-gap duality theorem for
pre-ordered vector spaces that admit an order unit, whose
specialization to PS and SOS may be of independent interest.

\end{abstract}

\clearpage
\setcounter{page}{1}

\section{Introduction} \label{sec:introduction}

A key result in semialgebraic geometry is the
Positivstellensatz \cite{Stengle,Krivine}, whose weak form gives a
version of the Nullstellensatz for semialgebraic sets: A system of
polynomial equations $p_1 = 0, \ldots, p_m = 0$ and polynomial
inequalities $q_1 \geq 0, \ldots, q_\ell \geq 0$ on $n$ commuting
variables $x_1,\ldots,x_n$ has no solution over reals if and only if
\begin{equation}
-1 = s_\emptyset + 
\sum_{J\subseteq [\ell] \atop J \not= \emptyset}s_J\prod_{j\in J}q_j + \sum_{j\in [m]}t_j p_j, \label{eqn:generalrefutation}
\end{equation}
where the $s_J$ are sums of squares of polynomials, and the $t_j$ are
arbitrary polynomials. Based on this, Grigoriev and Vorobjov
\cite{GrigorievVorobjov} defined the Positivstellensatz (PS) proof
system for certifying the unsatisfiability of systems of polynomial
inequalities, and initiated the study of its proof complexity.

For most cases of interest, the statement of the Positivstellensatz
stays true even if the first sum in~\eqref{eqn:generalrefutation}
ranges only over singleton sets~\cite{Putinar}. This special case of
PS yields a proof system called Sums-of-Squares (SOS). Starting with
the work in~\cite{BarakBrandaoHarrowKelnerSteurerZhou}, SOS has
received a good deal of attention for its applications
in algorithms and complexity theory. For the former, through the
connection with the hierarchies of SDP
relaxations~\cite{Lasserre,Parrilo,ODonnellZhou,TulsianiChlamtac}. For
the latter, through the lower bounds on the sizes of SDP lifts of
combinatorial
polytopes~\cite{Fiorinietal,LeeRaghavendraSteurerTan,LeeRaghavendraSteurer}.
We refer the reader to the introduction of~\cite{ODonnellZhou} for a
discussion on the history of these proof systems and their relevance
for combinatorial optimization.

In this paper we concentrate on the proof complexity of PS and SOS
when their variables range over the Boolean hypercube, i.e., the
variables come in pairs of twin variables $x_i$ and~$\bar{x}_i$, and
are restricted through the axioms $x_i^2-x_i = 0$, $\bar{x}_i^2 - \bar{x}_i = 0$ and
$x_i+\bar{x}_i-1=0$. This case is most relevant in combinatorial
contexts. It is also the starting point for a direct link with the
traditional proof systems for propositional logic, such as Resolution,
through the realization that monomials represent Boolean disjunctions,
i.e., clauses.
% 
% For example, if one represents clauses in the natural
% way through linear inequalities, and measures the complexity of SOS
% proofs by the number of monomials in an explicit expansion of the
% summands in~\eqref{eqn:generalproof}, then it is known that SOS
% polynomially simulates and is exponentially stronger than Resolution;
% even a certain weaker version of SOS called SA (for Sherali-Adams)
% achieves that (see \cite{Dantchev} and
% \cite{AtseriasLauriaNordstrom}). Indeed, the simulation preserves
% certain other structural properties of the Resolution proof, such as
% its \emph{width}, counted as the length of longest clause in the
% proof, which translates to \emph{degree} of the polynomials that make
% the summands of~\eqref{eqn:generalrefutation}.
%
In return, this link brings concepts and methods from the area of
propositional proof complexity to the study of PS and SOS proofs.

In analogy with the celebrated size-width trade-off for Resolution
\cite{BenSassonWigderson} or the size-degree trade-off for
Polynomial Calculus \cite{ImpagliazzoPudlakSgall}, a question that is
suggested by this link is whether the \emph{monomial size} of a PS
proof can be traded for its \emph{degree}. For a proof as
in~\eqref{eqn:generalrefutation}, the monomial size of the proof is
the number of monomials in an explicit representation
of the summands of the right-hand side. The degree of the proof is the
maximum of the degrees of those summands. These are the two most
natural measures of complexity for PS proofs (and precise definitions
for both these measures will be made in
Section~\ref{sec:preliminaries}). The importance of the question
whether size can be traded for degree stems from the fact that, at the
time of writing, the complexity of PS and SOS proofs is relatively
well understood when it is measured by degree, but rather poorly
understood when it is measured by monomial size. If size could be
traded for degree, then strong lower bounds on degree would transfer
to strong lower bounds on monomial size. The converse, namely that
strong lower bounds on monomial size transfer to strong lower bounds
on degree, has long been known by elementary linear algebra.

In this paper we answer the size-degree trade-off question for SOS,
and for PS proofs of bounded \emph{product width}, i.e., the number of
inequalities that are multiplied together
in~\eqref{eqn:generalrefutation}. We show that if a system of
degree-$k$ polynomial constraints on $n$ pairs of twin variables has
a PS proof of unsatisfiability of product width $w$ and no more than
$s$ many monomials in total, then it also has one of degree $O(\sqrt{n
  \log s} + kw)$. By taking $w = 1$, this yields a size-degree
trade-off for SOS as a special case.

Our result matches its analogues for weaker proof systems that were
considered before. Building on the work of~\cite{BeamePitassi}
and~\cite{CleggEdmondsImpagliazzo}, a size-width trade-off theorem was
established for Resolution: a proof with $s$ many clauses can be
converted into one in which all clauses have size~$O(\sqrt{n\log s}+k)$, 
where $k$ is the size of the largest initial
clause~\cite{BenSassonWigderson}. The same type of trade-off was later
established for monomial size and degree for the Polynomial
Calculus~(PC) in \cite{ImpagliazzoPudlakSgall}, and for proof length
and rank for LS and LS$^+$~\cite{PitassiSegerlind2012}, i.e., the
proof systems that come out of the Lov\'asz-Schrijver LP and SDP
hierarchies~\cite{LovaszSchrijver}. To date, the question for PS and
SOS had remained open, and is answered here\footnote{Besides the
  proofs of the trade-off results for LS and LS$^+$, the conference
  version of \cite{PitassiSegerlind2012} claims the result for the
  stronger Sherali-Adams and Lasserre/SOS proof systems, but the claim
  is made without proof. The very last section of the journal version
  \cite{PitassiSegerlind2012} includes a sketch of a proof that,
  unfortunately, is an oversimplification of the LS/LS$^+$ argument
  that cannot be turned into a correct proof. The forthcoming
  discussion clarifies how our proof is based
  on, and generalizes, the one for LS/LS$^+$ in~\cite{PitassiSegerlind2012}.}.

Our proof of the trade-off theorem for PS follows the standard pattern
of such previous proofs with one new key ingredient. Suppose $Q$ is a
system of equations and inequalities that has a size $s$
refutation. Going back to the main idea from
\cite{CleggEdmondsImpagliazzo}, the argument for getting a degree $d$
refutation goes in four steps: (1) find a variable $x$ that appears in
many large monomials, (2) set it to a value $b\in\{0,1\}$ to kill all
monomials where it appears, (3) induct on the number of variables to
get refutations of $Q[x=b]$ and $Q[x=\bar{b}]$ which, if $s$ is small
enough, are of degrees $d-1$ and $d$, respectively, and (4) compose
these refutations together to get a degree $d$ refutation of $Q$. The
main difficulty in making this work for PS is step (4), for two
reasons.

The first difficulty is that, unlike Resolution and the other proof
systems, whose proofs are \emph{deductive}, the proofs of PS are
\emph{formal identities}, also known as \emph{static}. This means
that, for PS, the reasoning it takes to refute $Q$ from the degree
$d-1$ refutation of $Q[x=b]$ and the degree $d$ refutation of
$Q[x=\bar{b}]$ needs to be witnessed through a single polynomial
identity, without exceeding the bound $d$ on the degree. This is
challenging because the general simulation of a deductive proof by a
static one incurs a degree loss.  The second difficulty comes from the
fact that, for establishing this identity, one needs to use a duality
theorem that is not obviously available for degree-bounded PS
proofs. What is needed is a zero-gap duality theorem for PS proofs of
non-negativity that, in addition, holds tight at~\emph{each} fixed
degree $d$ of proofs. For SOS, the desired zero-gap duals are provided
by the levels of the Lasserre hierarchy. This was established in
\cite{JoszHenrion2015} under the sole assumption that the inequalities
include a ball contraint $B^2 - \sum_{i=1}^n x_i^2 \geq 0$ for some $B
\in \reals$. In the Boolean hypercube case, this can be assumed
without loss of generality. For PS, we are not aware of any published
result that establishes what we need, so we provide our own proof. At
any rate, one of our contributions is the observation that a zero-gap
duality theorem for PS-degree is a key tool for completing the step (4) 
in the proof of the trade-off theorem. We reached this
conclusion from trying to generalize the proofs for LS and LS$_+$ from
\cite{PitassiSegerlind2012} to SOS. In those proofs, the corresponding
zero-gap duality theorems are required only for the very special case
where $d=2$ and for deriving linear inequalities from linear
constraints. The fact that these hold goes back to the work of
Lov\'asz and Schrijver~\cite{LovaszSchrijver}.

In the end, the zero-gap duality theorem for PS-degree turned out to
follow from very general results in the theory of ordered vector
spaces. Using a result from \cite{PaulsenTomforde2009} that whenever a
pre-ordered vector space has an order-unit a zero-gap duality holds,
we are able to establish the following general fact: for any convex
cone $\mathscr{C}$ of provably non-negative polynomials and its
restriction $\mathscr{C}_{2d}$ to proofs of some even degree~$2d$, if
the the ball constraints $R - x^2 \geq 0$ belong to $\mathscr{C}_2$
for all variables $x$ and some $R \geq 0$, then a zero-gap duality
holds for $\mathscr{C}_{2d}$ in the sense that
$$\sup\{ r \in \reals : p-r \in \mathscr{C}_{2d}\} 
= \inf\{ E(p) : E \in \mathscr{E}_{2d} \},
$$ where $\mathscr{E}_{2d}$ is an appropriate dual space for
$\mathscr{C}_{2d}$. The conditions are easily seen to hold for
PS-degree and SOS-degree in the Boolean hypercube case, and we have
what we want. We use this in Section~\ref{sec:size-degree-trade-off},
where we prove the trade-off lemma, but defer its proof to
Section~\ref{sec:duality}.

In Section~\ref{sec:applications} we list some of the applications of
the size-degree trade-off for PS that follow from known degree lower
bounds. Among these we include exponential size SOS lower bounds for
Tseitin formulas, Knapsack formulas, and optimal integrality gaps for
sparse random instances of MAX-3-XOR and MAX-3-SAT. Except for
Knapsack formulas, for which size lower bounds follow from an easy
random restriction argument applied to the degree lower bounds
in~\cite{Grigoriev,GrigorievHirschPasechnik}, these size lower bounds
for SOS appear to be new.

\section{Preliminaries} \label{sec:preliminaries}

For a natural number $n$ we use the notation $[n]$ for the set
$\{1,\ldots,n\}$. We write~$\reals_{\geq 0}$ and~$\reals_{>0}$ for the sets
of non-negative and positive reals, respectively and $\mathbb{N}$ for
the set of natural numbers. The natural logarithm is denoted $\log$,
and $\exp$ denotes base $e$ exponentiation.

\subsection{Polynomials and the Boolean ideal}

Let $x_1,\ldots,x_n$ and
$\bar{x}_1,\ldots,\bar{x}_n$ be two disjoint sets of variables. Each
$x_i,\bar{x}_i$ is called a pair of twin variables, where $x_i$ is the
basic variable and $\bar{x}_i$ is its twin. We consider polynomials
over the ring of polynomials with real coefficients and commuting
variables $\{x_i,\bar{x}_i : i \in [n]\}$, 
which we write simply as~$\mathbb{R}[x]$.  The
intention is that all the variables range over the Boolean domain
$\{0,1\}$, and that $\bar{x}_i = 1 - x_i$. Accordingly, let $I_n$ be
the Boolean ideal, i.e., the ideal of polynomials generated by the
following set of \emph{Boolean axioms} on the $n$ pairs of twin
variables:
\begin{equation}
B_n = \{ x_i^2 - x_i : i \in [n] \} \cup \{
\bar{x}_i^2 - \bar{x}_i : i \in [n] \} \cup \{ x_i + \bar{x}_i - 1 : i
\in [n] \} \label{eqn:booleanaxioms}
\end{equation}
We write $p \equiv q \mod I_n$ if $p-q$ is in $I_n$.

A monomial is a product of variables. A term is the product of a
non-zero real and a monomial. A polynomial is a sum of terms. For
$\alpha \in \mathbb{N}^{2n}$, we write $x^\alpha$ for the
monomial~$\prod_{i=1}^n x_i^{\alpha_i} \bar{x}_i^{\alpha_{n+i}}$, so
polynomials take the form $\sum_{\alpha \in I} a_\alpha x^\alpha$ for
some finite $I \subseteq \mathbb{N}^{2n}$. The~\emph{monomial size} of
a polynomial $p$ is the number of terms, and is denoted $\size(p)$. A
\emph{sum-of-squares polynomial} is a polynomial of the form $s =
\sum_{i=1}^k r_i^2$, where each $r_i$ is a polynomial in
$\mathbb{R}[x]$.
% The \emph{explicit monomial size} of $s$,
% denoted $\esize(s)$, is $\sum_{i=1}^k \size(r_i)$. 
% Note that $\size(s) \leq \esize(s)^2$.  
For a polynomial $p \in
\mathbb{R}[x]$ we write $\deg(p)$ for its degree. We think of
$\mathbb{R}[x]$ as an infinite dimensional vector space, and we write
$\mathbb{R}[x]_d$ for the subspace of polynomials of degree at most
$d$.

\subsection{Sums-of-Squares proofs}

Let $Q = \{q_1,\ldots,q_\ell, p_1,\ldots,p_m\}$ be an indexed set of
polynomials.  We think of the $q_j$ polynomials as inequality
constraints, and of the $p_j$ polynomials as equality constraints:
\begin{equation}
q_1 \geq 0,\ldots,q_\ell \geq 0 , \;\;\;
p_1 = 0,\ldots,p_m = 0.  \label{eqn:constraints}
\end{equation} 
Let $p$ be another polynomial. A
\emph{Sums-of-Squares (SOS) proof} of $p \geq 0$ from $Q$ is a formal
identity of the form
\begin{equation}\label{proof}
p = s_0 + 
\sum_{j \in [\ell]} s_j q_j +  
\sum_{j \in [m]} t_j p_j + 
\sum_{q \in B_n} u_q q,
\end{equation}
where $s_0$ and $s_1,\ldots,s_\ell$ are sums of squares of
polynomials, $s_j = \sum_{i=1}^{k_j} r_{i,j}^2$ for $j \in
[\ell]\cup\{0\}$, and $t_1,\ldots,t_m$ and all $u_q$ are arbitrary
polynomials.  The proof is of \emph{degree at most $d$} if $\deg(p)
\leq d$, $\deg(s_0) \leq d$, $\deg(s_j) + \deg(q_j) \leq d$ for each
$j \in [\ell]$, and $\deg(t_j) + \deg(p_j) \leq d$ for each $j \in
     [m]$. The proof is of \emph{monomial size at most~$s$} if
$$
\sum_{i=1}^{k_0} \size(r_{i,0}) + \sum_{j\in[\ell]}\sum_{i=1}^{k_j} \size(r_{i,j}) + \sum_{j
  \in [m]} \size(t_j) \leq s.
$$ 
This definition of size corresponds to the number of monomials of
an explicit SOS proof given in the form
$(s_0,s_1,\ldots,s_\ell,t_1,\ldots,t_m)$, where each $s_j$ is given in
the form $(r_{1,j},\ldots,r_{k_j,j})$, and all the $r_{i,j}$ and $t_j$
polynomials are represented as explicit sums of terms. Accordingly,
the monomials of the $r_{i,j}$'s and the $t_j$'s are called the
\emph{explicit monomials} of the proof.

Note that the $u_q$ polynomials are not considered in the definition
we have chosen of an explicit SOS proof, so they do not contribute to
its monomial size or its degree. The rationale for this is that
typically one thinks of the identity in~\eqref{proof} as an
equivalence
$$
p \equiv s_0 + \sum_{j \in [\ell]} s_j q_j + \sum_{j \in [m]} t_j p_j
\mod I_n
$$ and we want proof size and degree to not depend on how the
computations modulo the Boolean ideal~$I_n$ are performed.  For degree
this choice is further justified from the fact that one may always
assume that the degrees of the products $u_q q$ do not surpass the
degree $d$ in a proof of degree $d$. This follows from the fact that
$B_n$ is a Gr{\"o}bner basis for $I_n$ with respect to any monomial
ordering -- one can see this quite easily using Buchberger's Criterion
(see e.g.  \cite{idealsvarietiesangalgorithms}).  In particular upper
and lower bounds for the restricted definition of degree imply the
same upper and lower bounds for our liberal definition of degree, and
vice versa.  For monomial size, this goes only in one direction: lower
bounds on our liberal definition of monomial size translate into lower
bounds for a restricted definition of monomial size that
takes~$\sum_{q \in B_n} \size(u_q)$ also into account. Since our aim
is to prove lower bounds on the number of monomials in a proof,
proving our results for our more liberal definition of monomial size
makes our results only stronger.

% Note that the degrees of the $u_q$ polynomials are not considered in
% the degree of a proof.  This is due to the fact that we can always
% assume that the degree of the products $u_q q$ does not surpass $d$ in
% a proof of degree $d$. This in turn follows from the fact that $B_n$
% is a Gr{\"o}bner basis for $I_n$ with respect to any monomial ordering
% -- one can see this quite easily using Buchberger's Criterion.  More
% formally we have the following: if $p\in I_n$ is of degree at most $d$
% then for any $q\in B_n$ there is a polynomial $u_q$ such that
% $\deg(u_q q)\leq d$ and $p = \sum_{q\in B_n}u_q q$. We will often
% employ this property to show that $p \geq 0$ has an SOS proof of 
% degree at most $d$ by showing rather that $p' \geq 0$ has such a proof
% for some $p'$ of degree at most $d$ such that $p\equiv p'\mod I_n$.
% 
% Similarly the number of monomials in the polynomials $u_q$ does not
% play a role in the definition we have chosen for monomial size of a
% proof. Our aim is to prove lower bounds in the number of monomials in
% a proof, and, of course, lower bounds for our more restrictive notion
% of monomial size imply lower bounds on the number of all monomials in
% the proof.

\subsection{Positivstellensatz proofs}
 
This will be an extension of SOS. Let $Q = \{ q_1,\ldots,q_\ell,
p_1,\ldots,p_m\}$ be an indexed set of polynomials interpreted as
in~\eqref{eqn:constraints}. A \emph{Positivstellensatz proof} (PS) of
$p \geq 0$ from $Q$ is a formal identity of the form
\begin{equation}
p = s_{\emptyset} + \sum_{J \in \mathscr{J}} s_J \prod_{j \in J} q_j +
\sum_{j \in [m]} t_j p_j + \sum_{q \in B_n} u_q q, \label{proof-ps}
\end{equation} 
where $\mathscr{J}$ is a collection of non-empty subsets of $[\ell]$,
each $s_J$ is a sum-of-squares polynomial,~$s_J = \sum_{i=1}^{k_J}
r_{i,J}^2$, and each $t_j$ and $u_q$ is an arbitrary polynomial.  The
proof is of \emph{degree at most~$d$} if $\deg(p) \leq d$, $\deg(s_\emptyset)
\leq d$, $\deg(s_J) + \sum_{j \in J} \deg(q_j) \leq d$ for each $J \in
\mathscr{J}$, and~$\deg(t_j) + \deg(p_j) \leq d$ for each $j \in
        [m]$. The proof is of \emph{monomial size at most~$s$} if
$$
\sum_{i=1}^{k_0} \size(r_{i,\emptyset}) + \sum_{J\in
  \mathscr{J}}\sum_{i=1}^{k_J} \size(r_{i,J}) + \sum_{j \in [m]}
\size(t_j) \leq s.
$$ The proof has \emph{product-width} at most $w$ if each $J \in
\mathscr{J}$ has cardinality at most $w$. The~\emph{explicit
  monomials} of the proof are the monomials of the $r_{i,J}$'s and the
$t_j$'s.  It should be noted that~PS applied to a $Q$ that contains at
most one inequality constraint (i.e., $\ell\leq 1$) is literally
equivalent to SOS: any power of a single inequality is either a
square, or the lift of that inequality by a square.
 
As in SOS proofs, the definitions of monomial size and degree of a
proof do not take into account the $u_q$ polynomials.  Likewise, the
monomials in the products $\prod_{j \in J} q_j$ do not contribute to
the definition of monomial size. As above, this liberal definition
plays in favour of lower bounds in the case of monomial
size. For degree, ignoring the $u_q$'s does not really matter, again,
because $B_n$ is a Gr\"obner basis for~$I_n$.

\subsection{More on the definition of monomial size}

Starting at \cite{CleggEdmondsImpagliazzo,Alekhnovichetal}, counting
monomials in algebraic proof systems such as the Polynomial Calculus
(PC) is a well-established practice in propositional proof
complexity. One motivation for it comes from the fact that PC with
twin variables, called~PCR in~\cite{Alekhnovichetal}, polynomially
simulates Resolution, and the natural transformation that is given by
the proof turns the clauses of the Resolution proof into
monomials. Another motivation comes from the fact that, in the area of
computational algebra, the performance of the Gr\"obner bases method
appears to depend significantly on how the polynomials are
represented. In this respect, the sum of monomials representation of
polynomials features among the first and most natural choices to be
used in practice. That said, for the natural static version of PC
called Nullstellensatz (NS)
\cite{BeameImpagliazzoKrajicekPitassiPudlak1996}, let alone for SOS
and PS, counting monomials does not appear to have such a
well-established tradition. Note that in the presence of twin
variables, SOS monomial size is known to polynomially simulate
Resolution (see Lemma~4.6 in \cite{AtseriasLauriaNordstrom2016}, where
this is proved with a slightly different definition of SOS and
monomial size from the one above; the difference is minor). It follows
that the first of the two motivations for counting monomials in PC
carries over to SOS, and hence to PS.

The original Beame et al.\ and Grigoriev-Vorobjov papers
\cite{BeameImpagliazzoKrajicekPitassiPudlak1996,GrigorievVorobjov}
where NS and PS were defined first, size is never considered, only
degree. The subsequent Grigoriev's papers
on~SOS~\cite{Grigoriev,Grigoriev2001} did not consider size either. To
the best of our knowledge, the first reference that defines a notion
of size for (the version of) PS proofs (with~$w=0$) appears to be
\cite{GrigorievHirschPasechnik}, where the size of a proof is defined
as ``the length of a reasonable bit representation of all
polynomials'' in the proof. The same paper proves lower bounds on the
``number of monomials'' of an SOS proof (see Lemma~9.1 in
\cite{GrigorievHirschPasechnik}) without being precise as to whether
it is counting monomials in the $r_{i,0}$ polynomials (in the notation
of~\eqref{proof}), or in the expansion of $s_0$ as a sum of
terms. Note, however, that $\size(s_0) \leq \sum_i \size(r_{i,0})^2$,
hence the difference between these two possibilities is not terribly
critical. As with the squares $s_j$, the definitions in
\cite{GrigorievHirschPasechnik} are not explicit as to whether the
monomials in the $t_j$ polynomials (in the notation of~\eqref{proof}
again) contribute to the monomial size by themselves, or whether one
is to take into account the expansions of the products $t_j
p_j$. Unlike ours, the definitions in \cite{GrigorievHirschPasechnik}
do not distinguish between the $u_q$ polynomials that multiply the
Boolean axioms and the rest. 

The difference between counting the monomials of the $s_j$ (or the
$r_{i,j}$) polynomials versus counting those in the expansions of the
products $s_j q_j$ and $t_j p_j$ is again not critical if one is
satisfied with a notion of size \emph{up to} a polynomial factor that
depends on the size of the input. If one is to care about such
refinements of monomial size that take into account polynomial
factors, then a natural size measure for, say,~$t_j p_j$ could well
be~$\size(t_j) +\size(p_j)$ or even $\size(t_j)\cdot \size(p_j)$,
instead of $\size(t_jp_j)$. Note that~$\size(t_j)\cdot \size(p_j)$
corresponds to the number of monomials that one would encounter while
expanding the product $t_j p_j$ in the naive way \emph{before} merging
terms with the same monomial, and in particular, before any potential
cancelling of terms occurs.  In \cite{AtseriasLauriaNordstrom2016},
the monomial size of (their slightly different version of)
Lasserre/SOS is defined in terms of the expanded summands, which in
the notation of~\eqref{proof}, would correspond to $\size(s_0) +
\sum_j \size(s_j q_j) + \sum_j\size(t_j p_j) + \sum_q \size(u_q
q)$. In \cite{LauriaNordstrom} the same convention for defining
monomial size is used but the last sum over $q$ is omitted since they
work mod~$I_n$ by default.  For PS proofs as in~\eqref{proof-ps} that
have large product-width $w$, whether we count the monomials in
the~$s_J$ polynomials or in the expansions of the products~$s_J
\prod_{j \in J} q_j$ could make a significant difference, i.e.,
exponential in $w$. If we think of the proof in~\eqref{proof-ps} as
given by the indexed sequences $(s_J : J \in \mathscr{J} \cup
\{\emptyset\})$ and $(t_j : j \in [m])$, then counting only the
monomials in the~$s_J$ polynomial, or even better in the $r_{i,J}$
polynomials, looks like~the~natural~choice.

% \subsection{Collapse of Positivstellesatz Calculus}
% 
% Yet another variant of the SOS proof system is the
% \emph{Positivstellensatz Calculus} (PC$>$) proof system. Although this
% proof system looks stronger than all the other variants, it turns out
% that, by a surprising result in \cite{Berkholz}, it collapses to PS
% with a small loss in monomial size and degree. The truth of this
% statement relies on the presence of the Boolean axioms~$B_n$; without
% them, this statement is provably false (see also \cite{Berkholz}). For
% completeness, we provide the definition of PC$>$ proofs.
% 
% A proof of $p \geq 0$ from $Q$ is again a formal identity as
% in~\eqref{proof-ps}, together with a \emph{Polynomial Calculus} (PC)
% proof of the middle term $\sum_{j \in [m]} t_j p_j$ from
% $\{p_1,\ldots,p_m\} \cup B_n$. Here, a PC proof is a (non-static)
% proof of the form $p'_1,p'_2,\ldots,p'_l$, where each $p'_i$ is either
% one of the $p_j$ polynomials in $Q$ or one of the Boolean axiom
% polynomials in $B_n$, or follows from earlier polynomials in the
% sequence by the rule of addition or the rule of multiplication by an
% arbitrary polynomial. This new proof has degree at most $d$ if
% $\deg(p'_j) \leq d$ for each $j \in [l]$.
% 
% For the same reason as for PS, the definition of PC$>$ applied to a
% $Q$ that does not contain inequalities collapses to PS, and hence, in
% this case, to SOS.

\section{Size-Degree Trade-Off} \label{sec:size-degree-trade-off}

In this section we prove the following.

\begin{theorem} \label{thm:main} For every two natural numbers $n$ and
  $k$, every indexed set $Q$ of polynomials of degree at most $k$ with
  $n$ pairs of twin variables, and every two positive integers $s$
  and~$w$, if there is a PS refutation from $Q$ of and product-width
  at most $w$ and monomial size at most $s$, then there is a PS
  refutation from $Q$ of product-width at most $w$ and degree at
  most~$4\sqrt{2(n+1)\log(s)}+kw+4$.
\end{theorem}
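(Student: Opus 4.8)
The plan is to follow the four-step Clegg--Edmonds--Impagliazzo pattern, with the zero-gap duality theorem for PS-degree (stated in the introduction and proved in Section~\ref{sec:duality}) as the tool that makes step~(4) go through. Let $d$ be a degree parameter to be fixed at the end, of order $\sqrt{n\log s}$. Given a PS refutation $\pi$ of $Q$ of product-width $\leq w$ and monomial size $\leq s$, call a variable $x$ (among the $2n$ twin variables) \emph{fat} if it occurs in ``many'' explicit monomials of $\pi$ of degree exceeding some threshold; a counting argument shows that if $\pi$ has too many high-degree explicit monomials then there is a fat variable whose assignment to the appropriate Boolean value $b\in\{0,1\}$ kills a constant fraction of them. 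Setting $x:=b$ (and its twin to $\bar b$) turns $\pi$ into a PS refutation of $Q[x=b]$ with strictly fewer fat high-degree monomials, so inducting on a suitable potential (number of variables, or total count of large monomials) yields, for small enough $s$, a low-degree refutation of $Q[x=b]$ of degree $d-1$ and one of $Q[x=\bar b]$ of degree~$d$.

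The heart of the argument is step~(4): reassembling a degree-$d$ refutation of $Q$ from a degree-$(d-1)$ refutation of $Q[x=b]$ and a degree-$d$ refutation of $Q[x=\bar b]$, through a \emph{single} polynomial identity and without exceeding degree~$d$. Working modulo $I_n$ we may treat $\bar x = 1-x$, so substituting $x=b$ amounts to reducing modulo $(x-b)$. The refutation of $Q[x=b]$, viewed back in $\reals[x]$, is a PS derivation of $-1 + (x-b)\cdot g \geq 0$ (for some polynomial $g$) of degree $d-1$ and product-width $\leq w$; multiplying it by the square $(x-b)^2$ and using $(x-b)^2 \equiv \pm(x-b) \bmod I_n$ (depending on $b$) one gets, staying within degree $d$, a PS derivation that ``activates'' only on the branch $x=\bar b$. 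Symmetrically for the other branch. The obstacle is that combining the two branch-refutations into a derivation of $-1\geq 0$ requires knowing that a PS derivation of $-r$ of bounded degree exists whenever $-r$ is ``provably non-negative at degree $d$'' on the relevant cone — and this is exactly a zero-gap statement. Here is where we invoke the general duality fact from the introduction: since the Boolean hypercube case supplies the ball constraints $R - x_i^2 \geq 0$ (indeed $1 - x_i \geq 0$ and $x_i \geq 0$ give $x_i - x_i^2 \equiv 0$, so $1 - x_i^2 \in \mathscr C_2$), the cone $\mathscr C_{2d}$ of degree-$2d$ PS-provably-nonnegative polynomials has no duality gap against the pseudoexpectation dual $\mathscr E_{2d}$. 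Consequently, if every degree-$d$ pseudoexpectation consistent with $Q$ assigns a nonnegative value to the polynomial that encodes ``the two branches contradict each other'', then that polynomial has a bounded-degree PS derivation, which is the identity we need. A short case analysis on the value $E(x)\in[0,1]$ that a pseudoexpectation $E$ gives to the branching variable — conditioning $E$ on $x=b$ and on $x=\bar b$ and using the inductive refutations — shows no such $E$ can exist, completing step~(4).

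For the size bookkeeping in the induction: substituting a variable only removes monomials, never creates them, so monomial size never increases down the recursion; the multiplications by $(x-b)^2$ in step~(4) blow up size but \emph{not} degree, and since the theorem only controls degree this is harmless. The product-width is preserved throughout, since substitution and the step-(4) recombination never multiply together more inequality constraints than were already present. Finally one balances the two regimes: if $\pi$ already has few explicit monomials of degree $> d$ one converts it directly to degree $O(d + kw)$ (the $kw$ accounting for the degree of the products $\prod_{j\in J} q_j$ of constraints of degree $\leq k$, of which at most $w$ are multiplied), while if it has many, the fat-variable step applies and the recursion depth is at most $n+1$, losing one unit of degree per level. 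Choosing $d \approx \sqrt{2(n+1)\log s}$ makes the two cases meet, and the constants work out to $4\sqrt{2(n+1)\log s} + kw + 4$.

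I expect step~(4) — engineering the single degree-$d$ polynomial identity that glues the two branch refutations, and in particular extracting from the zero-gap duality theorem exactly the derivation needed — to be the main obstacle; the fat-variable counting and the size/width bookkeeping are routine by comparison, and closely mirror the Resolution and Polynomial Calculus arguments.
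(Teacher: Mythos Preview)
Your high-level outline is right and matches the paper's, but there is a genuine gap in the ``unrestricting'' step that you have glossed over. You write that a refutation of $Q[x=b]$, ``viewed back in $\reals[x]$'', is a PS derivation of $-1 + (x-b)g \geq 0$ from $Q$ of degree $d-1$. This is not automatic: the refutation of $Q[x=b]$ uses the \emph{restricted} constraints $q_j[x=b]$, and replacing each $q_j[x=b]$ by $q_j$ can raise the degree. Concretely, if the substitution $x=b$ kills the top-degree part of some $q_j$, then $\deg(q_j[x=b]) < \deg(q_j)$, so a multiplier $s_J$ that was legal in the degree-$(d-1)$ refutation of $Q[x=b]$ (because $\deg(s_J) + \sum_{j\in J}\deg(q_j[x=b]) \leq d-1$) may satisfy $\deg(s_J) + \sum_{j\in J}\deg(q_j) > d-1$, and the lifted proof overflows. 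The paper fixes this with a \emph{cut-off function} $c$: one works throughout with the stricter degree notion $\deg(s_J) \leq d - c(J)$ where $c(J) \geq \sum_{j\in J}\deg(q_j)$ is fixed at the outset and never changes under restriction. This makes the unrestricting lemma (Lemma~\ref{lem:unrestricting}) go through, but it is an extra piece of machinery that your proposal does not mention, and the duality theorem has to be proved for this modified degree notion too.

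Your step~(4) mechanism --- multiply the branch refutation by $(x-b)^2$ and simplify --- is not what the paper does and has its own degree-bookkeeping problems (multiplying a degree-$(d-1)$ certificate by a degree-$2$ square gives degree $d+1$, and the reduction $(x-b)^2 \equiv \pm(x-b)$ only helps on the \emph{derived} polynomial, not on the multipliers in the certificate). The paper's route is cleaner and uses duality more directly: from the degree-$(2d-2)$ refutation of $Q[i/0]$, unrestrict to $Q\cup\{x_i=0\}$ and argue (Lemma~\ref{pseudoexpectations-positive-less-than-one}) that every degree-$(2d-2)$ pseudo-expectation $E$ for $Q$ has $E(x_i)>0$; zero-gap duality then gives $Q\vdash_{2d-2} x_i \geq \epsilon$ for some $\epsilon>0$. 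Symmetrically $Q\vdash_{2d} \bar{x}_i \geq \delta$. Now lift the first by the square $\bar{x}_i^2$ to get $Q\vdash_{2d} \bar{x}_i x_i \geq \bar{x}_i\epsilon$, use $\vdash_{2d} -\bar{x}_i x_i \geq 0$, and add: $Q\vdash_{2d} 0 \geq \delta\epsilon$. So duality is applied not to ``the polynomial encoding that the branches contradict each other'' but simply to the variables $x_i$ and $\bar{x}_i$ themselves, and the composition is three lines of algebra.
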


An immediate consequence is a degree criterion for size lower bounds:

\begin{corollary}
Let $Q$ be an indexed set of polynomials of degree at most $k$ with
$n$ pairs of twin variables, and let $w$ be a positive integer.  If
$d$ is the minimum degree and $s$ is the minimum monomial size of PS
refutations from $Q$ of product-width at most $w$, and $d \geq kw+4$,
then~$s \geq \exp((d-kw-4)^2/(32(n+1)))$.
% $$
% s \geq e^{\frac{(d-kw-4)^2}{32(n+1)}}.
% $$
\end{corollary}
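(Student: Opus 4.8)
The plan is to derive the corollary directly from Theorem~\ref{thm:main} by contraposition and simple arithmetic, treating the theorem as a black box. Let $Q$, $k$, $n$, $w$ be as in the statement, let $s$ be the minimum monomial size of a PS refutation from $Q$ of product-width at most $w$, and let $d$ be the minimum degree of such a refutation. Fix a PS refutation witnessing the minimum monomial size $s$; it has product-width at most $w$ and monomial size at most $s$, so Theorem~\ref{thm:main} applies and produces a PS refutation from $Q$ of product-width at most $w$ and degree at most $4\sqrt{2(n+1)\log(s)}+kw+4$. Since $d$ is the \emph{minimum} degree over all product-width-$w$ refutations, we get the inequality
\begin{equation*}
d \;\leq\; 4\sqrt{2(n+1)\log(s)}+kw+4.
\end{equation*}

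The second step is to solve this inequality for $s$. Using the hypothesis $d \geq kw+4$, the quantity $d-kw-4$ is nonnegative, so we may move $kw+4$ to the left and square both sides without reversing the inequality:
\begin{equation*}
(d-kw-4)^2 \;\leq\; 16 \cdot 2(n+1)\log(s) \;=\; 32(n+1)\log(s).
\end{equation*}
Dividing by $32(n+1)$ and exponentiating (base $e$, matching the paper's convention that $\log$ is the natural logarithm) yields
\begin{equation*}
\exp\!\left(\frac{(d-kw-4)^2}{32(n+1)}\right) \;\leq\; s,
\end{equation*}
which is exactly the claimed bound $s \geq \exp((d-kw-4)^2/(32(n+1)))$.

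There is essentially no obstacle here: the only points requiring a word of care are (i) that $s$ is well-defined and positive, i.e., that at least one product-width-$w$ PS refutation exists — this is implicit in the corollary's hypothesis, since it speaks of ``the minimum degree $d$'' and ``the minimum monomial size $s$'' of such refutations, so the set over which we minimize is nonempty; (ii) that applying Theorem~\ref{thm:main} to a size-minimal refutation and then comparing with the degree-minimal one is legitimate — it is, because Theorem~\ref{thm:main} only asserts the \emph{existence} of a low-degree refutation, and $d$ being the minimum degree over all such refutations immediately bounds $d$ by the degree Theorem~\ref{thm:main} guarantees; and (iii) the sign condition $d-kw-4 \geq 0$ needed to square the inequality safely, which is supplied verbatim as a hypothesis. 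No genuinely hard step arises, since all the real work is hidden inside Theorem~\ref{thm:main}.
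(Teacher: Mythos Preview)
Your proposal is correct and matches the paper's intent: the corollary is stated there as an ``immediate consequence'' of Theorem~\ref{thm:main} with no separate proof, and the derivation you give---apply the theorem to a size-minimal refutation, compare with the minimum degree, then rearrange and square using $d \geq kw+4$---is exactly the intended one-line argument.
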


The proof of Theorem~\ref{thm:main} will follow the standard structure
of proofs for degree-reduction lemmas for other proof systems, except
for some complications in the \emph{unrestricting} lemmas. These
difficulties come from the fact that PS proofs are static. The main 
tool around these difficulties is a tight Duality Theorem for 
degree-bounded proofs with respect to so-called \emph{cut-off functions} 
as defined next.

\subsection{Duality modulo cut-off functions}

Let $Q = \{ q_1,\ldots,q_\ell, p_1,\ldots,p_m \}$ be an indexed set of
polynomials interpreted as constraints as in~\eqref{eqn:constraints}.
A \emph{cut-off function} for $Q$ is a function $c : \Pow([\ell])
\disjointunion [m] \rightarrow \nats$ with~$c(J)~\geq~\sum_{j \in J}
\deg(q_j)$ for each $J \subseteq [\ell]$, and $c(j) \geq \deg(p_j)$
for each $j \in [m]$.  A PS proof as in~\eqref{proof-ps} has
\emph{degree mod $c$} at most $d$ if $\deg(p) \leq d$, $\deg(s_0) \leq
d$, $\deg(s_J) \leq d - c(J)$ for each $J \in \mathscr{J}$, and
$\deg(t_j) \leq d - c(j)$ for each $j \in [m]$.

Let $\mathrm{PS}^c_{w,d}(Q)$ denote the set of all polynomials $q$ of
degree at most $d$ such that $q \geq 0$ has a PS proof from $Q$ of
degree mod $c$ at most $d$ and product-width at most $w$.  We
write~$Q\vdash_{w,d}^c q\geq p$ if $q- p \in \mathrm{PS}^c_{w,d}(Q)$.
A \emph{pseudo-expectation} for $Q$ of degree mod $c$ at most~$d$ and
product-width at most $w$ is a linear functional $E$ from the of all
polynomials of degree at most $d$ such that $E(1) = 1$ and $E(q) \geq
0$ for all $q \in \mathrm{PS}^c_{w,d}(Q)$. We denote
by~$\mathscr{E}_{w,d}^c(Q)$ the set of pseudo-expectations for the
indicated parameters.

\begin{theorem}\label{thm:duality-for-sos} Let $d$ be a positive
  integer, let $Q$ be an indexed set of polynomials, let $c$ be a
  cut-off function for $Q$, let $w$ be a positive integer, and let $p$
  be a polynomial of degree at most~$2d$.  Then
$$
\sup\{ r \in \reals : Q \vdash_{w,2d}^c p \geq r \} =
\inf\{ E(p) : E \in \mathscr{E}^c_{w,2d}(Q) \}.
$$ 
Moreover, if the set $\mathscr{E}^c_{w,2d}(Q)$ is non-empty, then there is a
pseudo-expectation achieving the infimum; i.e., $\min\{ E(p) : E \in
\mathscr{E}^c_{w,2d}(Q) \}$ is well-defined.
\end{theorem}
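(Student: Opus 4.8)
The plan is to obtain this as a special case of the general zero-gap duality theorem for pre-ordered vector spaces that admit an order unit, alluded to in the introduction (and proved in Section~\ref{sec:duality}). Concretely, I would work inside the finite-dimensional vector space $V = \reals[x]_{2d}$ of polynomials of degree at most $2d$, equipped with the convex cone $\mathscr{C} = \mathrm{PS}^c_{w,2d}(Q)$. First I would check that $\mathscr{C}$ really is a convex cone in $V$: closure under addition and non-negative scalar multiplication is immediate from the fact that one can add two PS proofs of degree mod $c$ at most $2d$ and still get such a proof (the sum-of-squares parts add, the $t_j$ and $u_q$ parts add, and the degree mod $c$ bound is preserved, as is the product-width bound). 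The constant polynomial $1$ lies in $V$, and it is an \emph{order unit} for the pre-order induced by $\mathscr{C}$: this is exactly the place where the Boolean hypercube hypothesis is used. For any $p \in \reals[x]_{2d}$ one has, modulo $I_n$ and using that $x_i^2 \equiv x_i$, that $p$ is congruent to a multilinear polynomial all of whose coefficients are bounded, so some large multiple $N \cdot 1 - p$ can be written as a sum of squares plus a combination of the Boolean axioms of degree at most $2d$ — hence $N \cdot 1 - p \in \mathscr{C}$, and symmetrically $N \cdot 1 + p \in \mathscr{C}$. (Alternatively, one invokes the ball-constraint mechanism: $R - x_i^2 \geq 0$ has a trivial degree-$2$ PS proof in the Boolean case, and then the Archimedean property from the general theorem applies.)

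Next I would identify the dual objects. In the general theorem the dual space is the set of linear functionals $E$ on $V$ that are non-negative on $\mathscr{C}$ and normalized by $E(1) = 1$; unwinding the definitions, this is precisely $\mathscr{E}^c_{w,2d}(Q)$. The general duality theorem then states that for every $p \in V$,
$$
\sup\{ r \in \reals : p - r \in \mathscr{C} \} = \inf\{ E(p) : E \text{ linear}, E|_{\mathscr{C}} \geq 0, E(1) = 1 \},
$$
and that the infimum is attained whenever the dual set is non-empty. Observing that $p - r \in \mathscr{C}$ is by definition the same as $Q \vdash^c_{w,2d} p \geq r$, this is exactly the claimed identity, and the attainment clause transfers verbatim. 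The one routine point to double-check is that no closure operation is needed on the cone: because $V$ is finite dimensional and $1$ is an order unit, the general theorem gives zero gap without assuming $\mathscr{C}$ is topologically closed — this is the content of the Paulsen–Tomforde-style result cited in the introduction, and it is what makes the theorem "hold tight at each fixed degree."

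The main obstacle, and the only part that requires genuine care rather than bookkeeping, is verifying the order-unit (Archimedean-type) property cleanly for \emph{this particular} cone with \emph{all} of its parameters fixed — that is, showing $N\cdot 1 \pm p \in \mathrm{PS}^c_{w,2d}(Q)$ with the degree-mod-$c$ bound $2d$ and product-width bound $w$ both respected. The subtlety is that one must produce the witnessing identity using only squares of polynomials of degree at most $d$ (so that their squares have degree at most $2d$) together with multiples of the Boolean axioms, without ever multiplying any of the $q_j$ together; since $1$ is already a square and constants reduce any bounded multilinear polynomial, the construction goes through, but it must be exhibited explicitly. Everything else — convexity of the cone, the translation between $p - r \in \mathscr{C}$ and $\vdash$, and the identification of the dual set with $\mathscr{E}^c_{w,2d}(Q)$ — is immediate once the general theorem of Section~\ref{sec:duality} is in hand. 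I would therefore structure the proof as: (1) set up $V$ and $\mathscr{C}$ and verify it is a cone; (2) prove $1$ is an order unit, citing the Boolean axioms; (3) invoke the general duality theorem; (4) unwind notation to read off the stated equality and the attainment of the infimum.
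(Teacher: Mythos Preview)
Your plan is correct and matches the paper's approach essentially line for line: the paper works in the pre-ordered space $\reals[x]_{2d}$ with cone $\mathrm{PS}^{c,I}_{w,2d}(Q)$, proves that $1$ is an order unit precisely via the ball-constraint mechanism you mention (in the Boolean case $1-x^2 \equiv (1-x)^2 \bmod I_n$, then an induction over monomials lifts this to all of $\reals[x]_{2d}$), and then reads off the statement from the general order-unit duality theorem of Section~\ref{sec:duality}. The only refinement worth noting is that the paper carries out the order-unit verification via an explicit two-step induction on monomials (first $am^2 + b \geq 0$, then $am + b \geq 0$) rather than your multilinearization sketch, but this is a detail of exposition, not of strategy.
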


Note that the statement of Theorem~\ref{thm:duality-for-sos} applies
only to even degrees. This comes as an artifact of the proof but is in
no way a severe restriction for the applications that we have in mind.
The definitions of degree for SOS and PS proofs as defined in
Section~\ref{sec:preliminaries} are special cases of the definitions
above for appropriate choices of $w$ and $c$.  Thus,
Theorem~\ref{thm:duality-for-sos} gives Duality Theorems for them. The
role of the cut-off function $c$ in our application below will be
explained in due time; i.e., after its use in the unrestricting
Lemma~\ref{lem:unrestricting} below. It is important for the lemmas
that follow that these duality theorems are tight in two ways: that
they have zero duality gap \emph{and} that they respect the degree;
i.e., the degree bound is the same for proofs and
pseudo-expectations. We defer the proof of
Theorem~\ref{thm:duality-for-sos} to Section~\ref{sec:duality} where a
more general statement is proved.

\subsection{Unrestricting lemmas}

For this section, fix three positive integers $n$, $d$ and $w$ for the
numbers of pairs of twin variables, degree, and product width. We also
fix an indexed set $Q = \{ q_1,\ldots,q_\ell, p_1,\ldots,p_m \}$ of
polynomials on the $n$ pairs of twin variables, and a cut-off function
$c$ for $Q$.

\begin{lemma} \label{lem:equality} Let $p$ and $q$ be polynomials of
  degree at most $2d$. If $p \equiv q \mod I_n$, then $E(p) =
  E(q)$ for any $E \in \mathscr{E}^{c}_{w,2d}(Q)$.
\end{lemma}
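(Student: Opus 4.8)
The plan is to reduce the claim to the single fact that the Boolean axioms in $B_n$ lie in $\mathrm{PS}^c_{w,2d}(Q)$ together with their negatives, so that a pseudo-expectation must annihilate them. Concretely, I would first argue that since $B_n$ consists of equality constraints built into the definition of a PS proof (the $\sum_{q\in B_n} u_q q$ term in~\eqref{proof-ps}), for any $q\in B_n$ and any polynomial $h$ of suitable degree, both $hq\geq 0$ and $-hq\geq 0$ have trivial PS proofs from $Q$: take $s_\emptyset=0$, all $s_J=0$, all $t_j=0$, and let the $u_q$-term carry $\pm hq$ exactly. One has to check the degree-mod-$c$ bookkeeping: such a proof has $\deg(p)=\deg(\pm hq)$ and all the squares and $t_jp_j$ parts are zero, so it is legitimately of degree mod $c$ at most $2d$ provided $\deg(hq)\le 2d$; and the product-width is trivially $\le w$ since $\mathscr{J}=\emptyset$. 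Hence $hq\in\mathrm{PS}^c_{w,2d}(Q)$ and $-hq\in\mathrm{PS}^c_{w,2d}(Q)$ whenever $\deg(hq)\le 2d$.

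Next I would use the definition of pseudo-expectation: $E$ is non-negative on every element of $\mathrm{PS}^c_{w,2d}(Q)$, so $E(hq)\geq 0$ and $E(-hq)\geq 0$, i.e., $E(hq)=0$ for all $q\in B_n$ and all $h$ with $\deg(hq)\le 2d$. By linearity of $E$, this extends to $E(g)=0$ for every $g$ in the degree-$2d$ truncation of the ideal $I_n$, namely every $g$ expressible as $\sum_{q\in B_n} h_q q$ with each $\deg(h_q q)\le 2d$.

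The one genuine subtlety—and the step I expect to be the main obstacle—is purely a degree issue: if $p\equiv q\bmod I_n$ with $\deg(p),\deg(q)\le 2d$, the difference $p-q$ lies in $I_n$, but an arbitrary representation $p-q=\sum_{q'\in B_n} h_{q'} q'$ need not have each summand of degree $\le 2d$. Here I would invoke the fact, already emphasized in the preliminaries, that $B_n$ is a Gr\"obner basis for $I_n$ with respect to any monomial order. Therefore the multivariate division algorithm applied to $p-q$ by $B_n$ yields a representation $p-q=\sum_{q'\in B_n} h_{q'} q'$ in which no $\deg(h_{q'}q')$ exceeds $\deg(p-q)\le 2d$ (division never raises the degree, and since $B_n$ is a Gr\"obner basis the remainder is $0$). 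Applying the previous paragraph to this good representation gives $E(p-q)=\sum_{q'\in B_n} E(h_{q'}q')=0$, and linearity of $E$ then yields $E(p)=E(q)$, as desired.
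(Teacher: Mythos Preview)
Your argument is correct and shares the paper's core idea: both $p-q$ and $q-p$ lie in $\mathrm{PS}^c_{w,2d}(Q)$, so any pseudo-expectation sends them to zero. However, you have manufactured an obstacle that the paper's definitions already eliminate. Recall that the definition of \emph{degree mod $c$} (Section~3.1) imposes constraints only on $\deg(p)$, $\deg(s_\emptyset)$, the $s_J$'s, and the $t_j$'s; the $u_q$ polynomials multiplying the Boolean axioms are explicitly exempt from any degree bound. Consequently, to show $p-q\in\mathrm{PS}^c_{w,2d}(Q)$ you only need $\deg(p-q)\le 2d$ and \emph{some} representation $p-q=\sum_{q'\in B_n} u_{q'}q'$, with no control whatsoever on $\deg(u_{q'}q')$. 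Your decomposition into individual summands $h_{q'}q'$, each required to have degree at most $2d$ so as to lie in the cone on its own, is therefore unnecessary, and with it the appeal to the Gr\"obner basis property. The paper's proof is accordingly two lines: $p\equiv q\bmod I_n$ gives $p-q,\,q-p\in\mathrm{PS}^c_{w,2d}(Q)$, hence $E(p)=E(q)$.
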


\begin{proof}
  The assumption that $p \equiv q \mod I_n$ implies that both $p-q$
  and $q-p$ belong to~$\mathrm{PS}^c_{w,2d}(Q)$. Hence $E(p) = E(q)$
  for any $E \in \mathscr{E}^c_{w,2d}(Q)$.
\end{proof}

\begin{lemma} \label{lem:liftingpseudos} 
  Let $x$ be one of the $2n$ variables and let $m$ be a monomial of
  degree at most~$2d-1$. Then $E(x) = 0$ implies $E(xm) = 0$ for any
  $E \in \mathscr{E}^c_{w,2d}(Q)$.
\end{lemma}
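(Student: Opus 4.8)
The plan is to derive the claim from a Cauchy--Schwarz inequality for pseudo-expectations together with the Boolean identity $x^2\equiv x\bmod I_n$, and to run an induction on $\deg(m)$. I would first set up the Cauchy--Schwarz inequality: for any polynomials $f,g$ with $\deg(f^2)\le 2d$ and $\deg(g^2)\le 2d$ and any $E\in\mathscr{E}^c_{w,2d}(Q)$, one has $E(fg)^2\le E(f^2)E(g^2)$. Indeed, for every $t\in\reals$ the polynomial $(f+tg)^2$ is a single square of degree at most $2d$, so it lies in $\mathrm{PS}^c_{w,2d}(Q)$ (take $s_\emptyset=(f+tg)^2$, all other summands zero; the product-width is trivially at most $w$), hence $0\le E((f+tg)^2)$; since $f^2,fg,g^2$ all have degree at most $2d$, linearity of $E$ turns this into $E(f^2)+2t\,E(fg)+t^2E(g^2)\ge 0$ for all $t\in\reals$, and a real quadratic in $t$ that is everywhere non-negative has non-positive discriminant (if its leading coefficient $E(g^2)$ vanishes then $E(fg)=0$), which is exactly the inequality. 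I would also record that, since $x$ is one of the $2n$ variables, $x^2-x\in B_n\subseteq I_n$, so $(x^2-x)h\in I_n$ for every polynomial $h$; by Lemma~\ref{lem:equality} this gives $E(x^2h)=E(xh)$ whenever $\deg(x^2h),\deg(xh)\le 2d$, and in particular $E(x^2)=E(x)=0$.

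Now the induction on $\deg(m)$. If $\deg(m)\le d$, apply Cauchy--Schwarz with $f=x$, $g=m$: then $\deg(f^2)=2\le 2d$ and $\deg(g^2)=2\deg(m)\le 2d$, so $E(xm)^2\le E(x^2)E(m^2)=0$ and $E(xm)=0$; this in particular settles $m=1$. If $d<\deg(m)\le 2d-1$, factor $m=m_1m_2$ as a product of monomials with $\deg(m_1)=d$ and $\deg(m_2)=\deg(m)-d$, so $1\le\deg(m_2)\le d-1$. Apply Cauchy--Schwarz with $f=xm_2$, $g=m_1$: here $\deg(f^2)=2+2\deg(m_2)\le 2d$, $\deg(g^2)=2d$, and $fg=xm$, so $E(xm)^2\le E((xm_2)^2)\,E(m_1^2)$. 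Since $(xm_2)^2=x^2m_2^2$ as polynomials, and $\deg(x^2m_2^2)\le 2d$ and $\deg(xm_2^2)\le 2d-1$, the remark above gives $E((xm_2)^2)=E(x^2m_2^2)=E(xm_2^2)$. But $m_2^2$ is a monomial of degree $2\deg(m_2)=2(\deg(m)-d)$, which is strictly smaller than $\deg(m)$ because $\deg(m)<2d$, and at most $2d-2$; so the induction hypothesis applies to it and $E(xm_2^2)=0$, whence $E(xm)^2\le 0$ and $E(xm)=0$.

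The only real difficulty is the degree bookkeeping: every polynomial to which $E$ is applied must stay within degree $2d$, so that $E$ is defined and linear on it and Lemma~\ref{lem:equality} is available. The reason one pass of Cauchy--Schwarz does not suffice is exactly that $m$ may have degree as large as $2d-1$ rather than $d$; the inductive factorization trades the offending factor $m_2$ for its square $m_2^2$, whose degree --- thanks to $\deg(m)<2d$ --- is strictly less than $\deg(m)$, so the recursion terminates, and the base case $\deg(m)\le d$ (hence also $m=1$) needs no induction hypothesis at all. The small value $d=1$ is harmless, since then $2d-1=1\le d$ and only the easy case occurs.
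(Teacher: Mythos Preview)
Your proof is correct. Both you and the paper ultimately rely on the degenerate case of Cauchy--Schwarz for pseudo-expectations, but the paper avoids your induction by exploiting a stronger Boolean identity. Concretely, the paper splits $m=m_1m_2$ with $\deg(m_1)\le d-1$, $\deg(m_2)\le d$, and observes that $x-(xm_1)^2\equiv (x-xm_1)^2\bmod I_n$ (this uses not only $x^2\equiv x$ but also $m_1^2\equiv m_1$); together with Lemma~\ref{lem:equality} this gives $E((xm_1)^2)=0$ in one step, after which expanding $(kxm_1\pm m_2)^2\ge 0$ for all $k>0$ forces $E(xm)=0$. Your argument instead proves Cauchy--Schwarz explicitly and uses only the single-variable reduction $x^2\equiv x$, compensating via an induction that repeatedly halves the excess degree $\deg(m)-d$. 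The trade-off: the paper's identity is slicker and non-inductive but leans more heavily on the full Boolean ideal, while your route is more systematic and would adapt verbatim to any setting where just the chosen variable $x$ satisfies $x^2\equiv x$.
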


\begin{proof}
Let $m_1$ and $m_2$ be two monomials of degree at most $d-1$ and $d$,
respectively, such that $m = m_1m_2$. Note first that $E((xm_1)^2) =
0$, since $x - (xm_1)^2 \equiv (x - xm_1)^2 \mod I_n$ and all degrees
are at most $2d$. Hence, $0 = E(x) \geq E((xm_1)^2)\geq 0$ by
Lemma~\ref{lem:equality}. Let then~$a = E(m_2^2)$ and note that $a
\geq 0$. For every positive integer $k$ we have
\begin{align*}
E(xm) & \leq \frac{1}{2k}(E(2kxm_1m_2) + E((kxm_1 - m_2)^2)) = \frac{a}{2k}, \\
E(xm) & \geq \frac{1}{2k}(E(2kxm_1m_2) - E((kxm_1 + m_2)^2)) = -\frac{a}{2k},
\end{align*}
where in both cases the equalities follow from $E((xm_1)^2) = 0$ and
$E(m_2^2)=a$. Since $a \geq 0$ and the inequalities hold for every $k
> 0$ it must be that $E(xm)=0$ and the lemma is proved.
% By linearity also $E(x(r_1+r_2)/2)=0$. Now observe
% that $x(r_1+r_2)/2 \equiv (x(r_1+r_2)/2)^2 + (x(r_1-r_2)/2)^2 \mod
% I_n$ and all the degrees are at most $2d$. Therefore
% $E((x(r_1+r_2)/2)^2) = E((x(r_1-r_2)/2)^2) = 0$ by
% Lemma~\ref{lem:equality}. Also $xr \equiv (x(r_1+r_2)/2)^2 -
% (x(r_1-r_2)/2)^2 \mod I_n$ and therefore $E(xr)=0$.
\end{proof}

For $q$ a polynomial on the $n$ pairs of twin variables, $i \in [n]$
an index, and $b \in \{0,1\}$ a Boolean value, we denote by $q[i/b]$
the polynomial that results from assigning $x_i$ to $b$ and
$\bar{x}_i$ to $1-b$ in $q$. We extend the notation to indexed sets of
such polynomials through~$Q[i/b]$ to mean $\{ q_j[i/b] : j \in [\ell]
\} \cup \{ p_j[i/b] : j \in [m] \}$. Note that $q_j[i/b]$ and
$p_j[i/b]$ are polynomials on~$n-1$ pairs of twin variables, and their
degrees are at most those of $q_j$ and $p_j$, respectively.

% In the following, we say that the cut-off function $c$ is
% \emph{strict} if $c(J) > \sum_{j \in J} \deg(q_j)$ for each $J
% \subseteq [\ell]$ and $c(j) > \deg(p_j)$ for each $j \in
% [m]$. Observe that if $c$ is a strict cut-off function for $Q$, then
% it is also a strict cut-off function for $Q[i/a]$.

\begin{lemma} \label{lem:unrestricting}
  Let $i \in [n]$, let $Q_0$ and $Q_1$ be the extensions of $Q$ with
  the polynomials $p_{m+1} = x_i$ and $p_{m+1} = \bar{x}_i$,
  respectively, and let $c'$ be the extension of $c$ that maps $m+1$
  to $1$.  The following hold:
\begin{enumerate} \itemsep=0pt
\item[()] The function $c'$ is a cut-off function for both $Q_0$ and
  $Q_1$,
\item[(i)] If $Q[i/0] \vdash^c_{w,2d} -1 \geq 0$, then $Q_0
  \vdash^{c'}_{w,2d} -1 \geq 0$.
\item[(ii)] If $Q[i/1] \vdash^c_{w,2d} -1 \geq 0$, then $Q_1
  \vdash^{c'}_{w,2d} -1 \geq 0$.
\end{enumerate}
\end{lemma}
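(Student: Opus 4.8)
The plan is to prove the three claims in order. The first is immediate: $c'$ agrees with $c$ on $\Pow([\ell])\disjointunion[m]$ and $c'(m+1)=1=\deg(x_i)=\deg(\bar{x}_i)$, so $c'$ satisfies the defining inequalities of a cut-off function for both $Q_0$ and $Q_1$. Claims (i) and (ii) are mirror images, so I describe (i). Its engine is the observation that a restriction can be undone modulo $I_n$ at the cost of a single degree: for any $f\in\reals[x]$ we have $f\equiv f[i/0]+x_i\,r\pmod{I_n}$ for some $r$ with $\deg(r)\le\deg(f)-1$. Indeed, letting $f'$ be the polynomial obtained from $f$ by substituting $x_i\mapsto0$ (leaving $\bar{x}_i$), we have $f-f'=x_i g$ and $f'-f[i/0]=(\bar{x}_i-1)h$ with $\deg(g),\deg(h)\le\deg(f)-1$ by the usual finite-difference computation, and since $\bar{x}_i-1=(x_i+\bar{x}_i-1)-x_i\equiv-x_i\pmod{I_n}$ one may take $r=g-h$.

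Since $\equiv\pmod{I_n}$ is a ring congruence and $x_i^k\equiv x_i\pmod{I_n}$ for $k\ge1$, this lifts to products: for each $J\subseteq[\ell]$ there is $\rho_J$ with $\deg(\rho_J)\le\sum_{j\in J}\deg(q_j)-1\le c(J)-1$ and $\prod_{j\in J}q_j\equiv\prod_{j\in J}q_j[i/0]+x_i\rho_J\pmod{I_n}$, and for each $j\in[m]$ there is $\tau_j$ with $\deg(\tau_j)\le\deg(p_j)-1\le c(j)-1$ and $p_j\equiv p_j[i/0]+x_i\tau_j\pmod{I_n}$. Now take a PS proof witnessing $Q[i/0]\vdash^{c}_{w,2d}-1\ge0$,
\[
-1\;=\;s_\emptyset+\sum_{J\in\mathscr{J}}s_J\prod_{j\in J}q_j[i/0]+\sum_{j\in[m]}t_j\,p_j[i/0]+\sum_{q\in B_{n-1}}u_q\,q ,
\]
with $s_\emptyset,s_J$ sums of squares, $\deg(s_\emptyset)\le2d$, $\deg(s_J)\le2d-c(J)$, $\deg(t_j)\le2d-c(j)$, and $|J|\le w$; regard it as an identity in $\reals[x]$. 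Replacing each $\prod_{j\in J}q_j[i/0]$ and each $p_j[i/0]$ by the right-hand sides of the congruences just stated, and absorbing into a single sum $\sum_{q\in B_n}u'_q\,q$ both the original $\sum_{q\in B_{n-1}}u_q q$ (note $B_{n-1}\subseteq B_n$) and all the members of $I_n$ thereby produced, we obtain, with $p_{m+1}:=x_i$,
\[
-1\;=\;s_\emptyset+\sum_{J\in\mathscr{J}}s_J\prod_{j\in J}q_j+\sum_{j\in[m]}t_j\,p_j+t_{m+1}\,p_{m+1}+\sum_{q\in B_n}u'_q\,q,\qquad t_{m+1}:=-\sum_{J\in\mathscr{J}}s_J\rho_J-\sum_{j\in[m]}t_j\tau_j .
\]

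This is a PS proof of $-1\ge0$ from $Q_0$ of product-width at most $w$ ($\mathscr{J}$ and the squares $s_J$ are untouched), and it has degree mod $c'$ at most $2d$: the bounds on $\deg(s_\emptyset)$, on $\deg(s_J)$, and on $\deg(t_j)$ for $j\in[m]$ carry over verbatim because $c'$ extends $c$, while $\deg(s_J\rho_J)\le(2d-c(J))+(c(J)-1)=2d-1$ and $\deg(t_j\tau_j)\le(2d-c(j))+(c(j)-1)=2d-1$, so that $\deg(t_{m+1})\le2d-1=2d-c'(m+1)$. Hence $Q_0\vdash^{c'}_{w,2d}-1\ge0$, which is (i); claim (ii) follows by the identical argument with the roles of $x_i,\bar{x}_i$ and of $0,1$ swapped, starting from $f\equiv f[i/1]+\bar{x}_i\,r\pmod{I_n}$.

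The only delicate point is this last degree count: undoing the restriction inside the summands converts $q_j[i/0]$ back to $q_j$ and $p_j[i/0]$ back to $p_j$ at the price of exactly one extra degree, which is exactly absorbed by the fresh constraint $p_{m+1}$ of degree $1$ with $c'(m+1)=1$. That balance is precisely what the cut-off function buys: it pins each multiplier's degree budget to the original degree of its constraint, so that the accounting is stable under both restricting and un-restricting. One can instead argue dually with the machinery just developed: by Theorem~\ref{thm:duality-for-sos}, $Q[i/0]\vdash^{c}_{w,2d}-1\ge0$ fails exactly when $\mathscr{E}^{c}_{w,2d}(Q[i/0])\ne\emptyset$; given $E\in\mathscr{E}^{c'}_{w,2d}(Q_0)$ one has $E(x_i)=0$, hence $E(x_i m)=0$ for every monomial $m$ of degree at most $2d-1$ by Lemma~\ref{lem:liftingpseudos}, and, using the congruences above together with Lemma~\ref{lem:equality}, one checks that the restriction of $E$ to the polynomials on the remaining $n-1$ pairs of twin variables of degree at most $2d$ lies in $\mathscr{E}^{c}_{w,2d}(Q[i/0])$.
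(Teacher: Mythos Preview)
Your proof is correct and follows essentially the same approach as the paper: both hinge on the identity $q_j[i/0]\equiv q_j + r_j x_i \pmod{I_n}$ with $\deg(r_j)\le\deg(q_j)-1$, expand the products accordingly, and verify that the resulting multiplier of the new constraint $p_{m+1}=x_i$ has degree at most $2d-1=2d-c'(m+1)$. Your closing paragraph sketching the dual argument via Theorem~\ref{thm:duality-for-sos} and Lemmas~\ref{lem:equality}--\ref{lem:liftingpseudos} is additional commentary not present in the paper's proof (it essentially previews the mechanism of Lemma~\ref{pseudoexpectations-positive-less-than-one}), but it is also correct.
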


\begin{proof}
  \emph{()} is obvious. By symmetry we prove only \emph{(i)}.  Suppose
  that $Q[i/0] \vdash^c_{w,2d} -1 \geq 0$, say:
\begin{equation}
-1 = s_0 + \sum_{J \in \mathscr{J}} s_J \prod_{j \in J} q_j[i/0] + 
\sum_{j \in [m]} t_j p_j[i/0] +
\sum_{q \in B_n} t_q q[i/0].
\label{eqn:initial}
\end{equation}
For $j \in [\ell]$, write $q_j = \sum_{\alpha \in I_j} a_{j,\alpha}
x^\alpha$, let $J_j = \{ \alpha \in I_j : \alpha_i \geq 1 \}$ and
$K_j = \{ \alpha \in I_j : \alpha_i = 0 \text{ and } \alpha_{n+i} \geq 1\}$
and note that
\begin{align*}
q_j[i/0] = q_j + \sum_{\alpha \in J_j} a_{j,\alpha} (x^{\alpha}/x_i^{\alpha_i})
(-x_i^{\alpha_i}) + 
\sum_{\alpha \in K_j} a_{j,\alpha} (x^{\alpha}/\bar{x}_i^{\alpha_{n+i}})(1-\bar{x}_i^{\alpha_{n+i}}).
\end{align*}
Therefore $q_j[i/0] \equiv q_j + r_j x_i \mod I_n$
where
$$
r_j = {\sum_{\alpha \in K_j} a_{j,\alpha}(x^\alpha/\bar{x}_i^{\alpha_{n+i}}) - \sum_{\alpha \in J_j} a_{j,\alpha} (x^{\alpha}/x_i^{\alpha_i})}.
$$
Note that $\deg(r_j) \leq \deg(q_j)-1$ since $\alpha_i \geq 1$ for
$\alpha \in J_j$ and $\alpha_{n+i} \geq 1$ for $\alpha \in K_j$. Now
\begin{align}
s_J \prod_{j \in J} q_j[i/0] & \equiv s_J \prod_{j \in J} (q_j +r_j x_i) \mod I_n \\
% & \equiv s_J \sum_{T \subseteq J} \prod_{j \in T} q_j 
% \prod_{j \in J\setminus T} (r_j x_i) \mod I_n \\
& \equiv s_J \prod_{j \in J} q_j + \Big({\sum_{T \subseteq J \atop T \not= J}
s_J \prod_{j \in T} q_j \prod_{j \in J \setminus T} r_j}\Big) x_i \mod I_n.
\label{eqn:lastline}
\end{align}
Because $c$ is a cut-off function for $Q$
and $c'(J)=c(J)$, we have $\deg(s_J) \leq 2d-c(J) = 2d-c'(J)$. 
Likewise
for every $T \not= J$, we have:
\begin{align*}
\deg\Big({s_J \prod_{j \in T} q_j \prod_{j \in J\setminus T} r_j}\Big) & \leq
\deg(s_J)+\sum_{j \in T} \deg(q_j) + \sum_{j \in J\setminus T} \deg(r_j) \\
& \leq 2d-c(J) + \sum_{j \in J} \deg(q_j) - 1 \leq 2d-1 = 2d-c'(m+1).
\end{align*}
The second inequality follows from the facts that $J\setminus T \not=
\emptyset$ and $\deg(r_j) \leq \deg(q_j)-1$ for all $j \in [m]$, the
third inequality follows from the fact that $c$ is a cut-off function
for $Q$, and the equality follows from the definition of
$c'$. Hence, $Q_0 \vdash^{c'}_{w,2d} s_J \prod_{j \in J} q_j[i/0]$.  A
similar and easier argument with $t_j$ and $p_j$ in place of $s_J$ and
$\prod_{j \in J} q_j$ shows that $Q_0 \vdash^{c'}_{w,2d}
t_jp_j[i/0]$. This gives proofs for all terms in the right-hand side
of~\eqref{eqn:initial}, and the proof of the lemma is complete.
\end{proof}

Some comments are in order about the role of the cut-off function in
the above proof. First note that, at the semantic level, the
constraint $q_j[i/0] \geq 0$ is equivalent to the pair of constraints
$q_j \geq 0$ and $x_i = 0$. At the level of syntatic proofs, though,
these two representations of the same constraint behave differently:
although a lift $s_j q_j[i/0]$ of the restriction~$q_j[i/0] \equiv q_j
+ r_j x_i$ of $q_j$ may have its degree bounded by $2d$, the degree of 
its direct simulation through $s_j q_j + s_j r_j x_i$ could exceed $2d$. 
The role of the cut-off function is to restrict the lifts $s_j q_j[i/0]$ 
in such a way that their simulation through $s_j q_j + s_j r_j x_i$ remains 
a valid lift of degree at most $2d$; this is the case if, indeed, the 
allowed lifts $s_j q_j[i/0]$ of $q_j[i/0]$ are those satisfying 
$\deg(s_j) \leq 2d-c(j)$, where $c(j) \geq \deg(q_j)$. This is why 
$c$ is designed to depend only on the index $j$ (or $J$) and not on 
the polynomial indexed by $j$ (or~$J$).

\begin{lemma}\label{pseudoexpectations-positive-less-than-one} 
  Let $i \in [n]$, let $Q_0$ and $Q_1$ be the extensions of $Q$ with
  the polynomials $p_{m+1} = x_i$ and $p_{m+1} = \bar{x}_i$,
  respectively, and let $c'$ be the extension of $c$ that maps $m+1$
  to $1$.  The following hold:
\begin{itemize} \itemsep=0pt
\item[()] The function $c'$ is a cut-off function for both $Q_0$ and
  $Q_1$.
\item[(i)] If $Q_0 \vdash^{c'}_{w,2d} -1\geq 0$, then $E(x_i)
 > 0$ for any $E\in\mathscr{E}^c_{w,2d}(Q)$.
\item[(ii)] If $Q_1 \vdash^{c'}_{w,2d} -1\geq 0$, then $E(\bar{x}_i
) > 0$ for any $E\in\mathscr{E}^c_{w,2d}(Q)$.
\end{itemize}
\end{lemma}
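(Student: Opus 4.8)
The plan is to prove part (i); part (ii) is obtained by the symmetric argument with $\bar{x}_i$ in place of $x_i$, and part () is immediate exactly as in Lemma~\ref{lem:unrestricting}. Fix $E \in \mathscr{E}^c_{w,2d}(Q)$. The first thing I would record is that $E(x_i) \geq 0$ automatically: since $x_i \equiv x_i^2 \mod I_n$ and $x_i^2$ is a single square of degree $2 \leq 2d$, the polynomial $x_i^2$ lies in $\mathrm{PS}^c_{w,2d}(Q)$, so $E(x_i) = E(x_i^2) \geq 0$ by Lemma~\ref{lem:equality} and the definition of a pseudo-expectation. It therefore suffices to derive a contradiction from the assumption $E(x_i) = 0$, so suppose this from now on.

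Now take a PS refutation witnessing $Q_0 \vdash^{c'}_{w,2d} -1 \geq 0$. With $p_{m+1} = x_i$ it has the form
$$-1 = s_\emptyset + \sum_{J \in \mathscr{J}} s_J \prod_{j \in J} q_j + \sum_{j \in [m]} t_j p_j + t_{m+1} x_i + \sum_{q \in B_n} u_q q,$$
where $\deg(s_\emptyset) \leq 2d$, $\deg(s_J) \leq 2d - c'(J) = 2d - c(J)$ for $J \in \mathscr{J}$, $\deg(t_j) \leq 2d - c'(j) = 2d - c(j)$ for $j \in [m]$, and $\deg(t_{m+1}) \leq 2d - c'(m+1) = 2d - 1$. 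The bookkeeping here is the same in spirit as in Lemma~\ref{lem:unrestricting}: because $c$ is a cut-off function for $Q$, these bounds force every explicit summand on the right-hand side to have degree at most $2d$. Hence the explicit part $g := s_\emptyset + \sum_{J} s_J \prod_{j\in J} q_j + \sum_{j} t_j p_j + t_{m+1} x_i$ is a polynomial of degree at most $2d$ with $g = -1 - \sum_{q\in B_n} u_q q$, so $g \equiv -1 \mod I_n$, and Lemma~\ref{lem:equality} gives $E(g) = E(-1) = -1$.

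The remaining step is to bound $E(g)$ from below by evaluating $E$ termwise. Since $c'$ agrees with $c$ on $\Pow([\ell])$ and on $[m]$, each of the summands $s_\emptyset$ and $s_J \prod_{j\in J} q_j$ (for $J \in \mathscr{J}$) is, on its own, a legal PS proof from $Q$ of degree mod $c$ at most $2d$ and product-width $|J| \leq w$, hence lies in $\mathrm{PS}^c_{w,2d}(Q)$ and is mapped by $E$ to a non-negative value; and since both $t_j p_j$ and $-t_j p_j$ lie in $\mathrm{PS}^c_{w,2d}(Q)$, we get $E(t_j p_j) = 0$ for $j \in [m]$. For the term $t_{m+1} x_i$, write $t_{m+1}$ as a sum of terms: each of its monomials has degree at most $2d - 1$, so Lemma~\ref{lem:liftingpseudos} together with the standing assumption $E(x_i) = 0$ gives $E(x_i m') = 0$ for every such monomial $m'$, and linearity yields $E(t_{m+1} x_i) = 0$. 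Summing, $E(g) \geq 0$, contradicting $E(g) = -1$, so $E(x_i) > 0$. The only point that needs care — and the conceptual point shared with Lemma~\ref{lem:unrestricting}, which I expect to be the only real obstacle — is exactly this degree accounting: the value $c'(m+1) = 1$ is what guarantees $\deg(t_{m+1}) \leq 2d - 1$, precisely the budget that Lemma~\ref{lem:liftingpseudos} can absorb, while the bounds inherited from $c$ on the other indices are what keep the single-term subproofs admissible over $Q$ itself rather than over $Q_0$.
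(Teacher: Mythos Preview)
Your proof is correct and follows essentially the same approach as the paper's: assume $E(x_i)=0$, use $c'(m+1)=1$ to bound $\deg(t_{m+1})\leq 2d-1$ so that Lemma~\ref{lem:liftingpseudos} kills $E(t_{m+1}x_i)$, and note that the remaining terms are handled because $c'$ restricts to $c$ on $\Pow([\ell])\disjointunion[m]$. The only cosmetic differences are that you make the preliminary observation $E(x_i)\geq 0$ explicit (the paper leaves it implicit) and that you evaluate $E$ directly on the given refutation to reach $E(-1)\geq 0$, whereas the paper phrases the same computation as showing $E\in\mathscr{E}^{c'}_{w,2d}(Q_0)$.
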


\begin{proof}
  \emph{()} is obvious. We prove \emph{(i)}; the proof of \emph{(ii)}
  is symmetric. Suppose towards a contradiction that there is
  $E\in\mathscr{E}^c_{w,2d}(Q)$ such that $E(x_i) = 0$. We want to show
  that $E$ is also in
  $\mathscr{E}^{c'}_{w,2d}(Q_0)$. This contradicts the assumption that 
  $Q_0 \vdash^{c'}_{w,2d} -1\geq 0$. Let
\begin{equation}
s_\emptyset + \sum_{J\in \mathscr{J}} s_J \prod_{j \in J} q_j + 
\sum_{j \in [m]} t_j p_j + 
t_{m+1} x_i + \sum_{q \in B_n} t_q q
\label{eqn:expression}
\end{equation}
be a proof from $Q_0$ of degree mod $c'$ at most $2d$ and
product-width at most $w$. First note that $\deg(t_{m+1}) \leq
2d-c'(m+1) \leq 2d-1$. Therefore, Lemma~\ref{lem:liftingpseudos}
applies to all the monomials of $t_{m+1}$, so $E(t_{m+1}x_i) = 0$. The
rest of~\eqref{eqn:expression} will get a non-negative value
through~$E$, since by assumption $E$ is in $\mathscr{E}^c_{w,2d}(Q)$
and $c$ is $c'$ restricted to $\Pow([\ell]) \disjointunion [m]$. Thus,
$E$ is in $\mathscr{E}^{c'}_{w,2d}(Q_0)$.
\end{proof}

\commentout{
\begin{proof}
  \emph{()} is obvious.  

For \emph{(i)}, suppose towards a contradiction that there is
$E\in\mathscr{E}^c_{w,2d}(Q)$ such that $E(x) = 0$. We want to show
that $E$ is also a pseudo-expectation in
$\mathscr{E}^{g}_{w,2d}(Q_0)$. Let
\begin{equation}
s_0 + \sum_{j\in [m]}s_jq_j + s_{m+1}(-x) + \sum_{q \in B_n} t_q q
\label{eqn:expression}
\end{equation}
be a proof from $Q\cup\{-x\}$ of degree mod $g$ at most $2d$. First
note that $\deg(s_{m+1}) \leq 2d-g(m+1) \leq 2d-2$. Therefore,
Lemma~\ref{lem:liftingpseudos} applies to all the monomials of
$s_{m+1}$, so $E(s_{m+1}(-x)) = E(s_{m+1} x) = 0$. The rest
of~\eqref{eqn:expression} 
will get a non-negative value through~$E$, since by
assumption $E$ is in $\mathscr{E}^c_{w,2d}(Q)$ and $g$ restricted to
$[m]$ is $f$. Thus, $E$ is in $\mathscr{E}^{g}_{w,2d}(Q_0)$.
			
For \emph{(ii)}, suppose towards a contradiction that there is $E\in
\mathscr{E}^c_{w,2d}(Q)$ such that $E(x) = 1$. In particular
$E(x-1)=0$ and $E(\bar{x})=0$ by Lemma~\ref{lem:equality} since $x-1
\equiv -\bar{x} \mod I_n$. This time we want to show that $E$ is in
$\mathscr{E}^{g}_{w,2d}(Q_1)$. Let
\begin{equation}
s_0 + \sum_{j \in [m]}s_j q_j + s_{m+1}(x-1) + \sum_{q \in B_n} t_q q
\label{eqn:expression2}
\end{equation} 
be a proof from $Q_1$ of degree mod $g$ at most $2d$.  Again
$\deg(s_{m+1}) \leq 2d-g(m+1)=2d-2$ and Lemma~\ref{lem:liftingpseudos}
applies to all the monomials of $s_{m+1}$ and the variable $\bar{x}$,
so $E(s_{m+1} \bar{x}) = 0$. Now note that $s_{m+1}(x-1) \equiv
-s_{m+1}\bar{x} \mod I_n$ and the degrees are at most $2d-1$, so
$E(s_{m+1}(x-1))=0$ follows from Lemma~\ref{lem:equality}. The rest
of~\eqref{eqn:expression2} gets a non-negative value through $E$,
since by assumption $E$ is in $\mathscr{E}^c_{w,2d}(Q)$ and $g$
restricted to $[m]$ is $f$. Thus, $E$ is in
$\mathscr{E}^{g}_{w,2d}(Q_1)$.
\end{proof}
}

\begin{lemma} \label{lem:composing}
Let $i \in [n]$ and assume that $d \geq 2$. The following hold:
\begin{itemize} \itemsep=0pt
\item[(i)] If $Q[i/0]\vdash^c_{w,2d-2}-1\geq 0$ and $Q[i/1]\vdash^c_{2
d}-1\geq 0$, then $Q\vdash^c_{w,2d}-1\geq 0$.
\item[(ii)] If $Q[i/0]\vdash^c_{w,2d}-1\geq 0$ and $Q[i/1]\vdash^c_{2d
-2}-1\geq 0$, then $Q\vdash^c_{w,2d}-1\geq 0.$
\end{itemize}
\end{lemma}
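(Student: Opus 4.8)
The plan is to handle both parts at once by working on the dual side, with pseudo-expectations in place of proofs. By Theorem~\ref{thm:duality-for-sos} applied with $p=-1$, if $\mathscr{E}^c_{w,2d}(Q)=\emptyset$ then the right-hand side of the duality identity is $+\infty$, so $\sup\{r\in\reals:Q\vdash^c_{w,2d}-1\geq r\}=+\infty$ and in particular $Q\vdash^c_{w,2d}-1\geq 0$; thus it suffices to show $\mathscr{E}^c_{w,2d}(Q)=\emptyset$. I would argue by contradiction: fix $E\in\mathscr{E}^c_{w,2d}(Q)$ and the index $i$. Since restriction never raises degree, $c$ is a cut-off function for both $Q[i/0]$ and $Q[i/1]$, so the notions $\mathrm{PS}^c_{w,\cdot}$ and $\mathscr{E}^c_{w,\cdot}$ are available for them.

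For part~(i): the hypothesis $Q[i/1]\vdash^c_{w,2d}-1\geq 0$ feeds into Lemma~\ref{lem:unrestricting}(ii) and then Lemma~\ref{pseudoexpectations-positive-less-than-one}(ii) to give $E(\bar{x}_i)>0$; this is the only use of the degree-$2d$ hypothesis. With $E(\bar{x}_i)>0$ I would form the \emph{conditioning of $E$ on $x_i=0$}, the linear functional $\tilde{E}(q)=E(\bar{x}_iq)/E(\bar{x}_i)$ on polynomials of degree at most $2d-2$ in the $n-1$ pairs other than $i$; it satisfies $\tilde{E}(1)=1$ by construction, so the crux is to verify $\tilde{E}\in\mathscr{E}^c_{w,2d-2}(Q[i/0])$, that is, $E(\bar{x}_iq)\geq 0$ whenever $q\in\mathrm{PS}^c_{w,2d-2}(Q[i/0])$. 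For this I would take a PS proof of $q\geq 0$ from $Q[i/0]$ of degree mod $c$ at most $2d-2$ and product-width at most $w$, multiply it through by $\bar{x}_i$, and reduce modulo $I_n$ along the lines of the proof of Lemma~\ref{lem:unrestricting}: the congruence $q_j[i/0]\equiv q_j+r_jx_i$ with $\deg(r_j)\leq\deg(q_j)-1$, together with $\bar{x}_ix_i\equiv 0$, lets one replace $\bar{x}_is_J\prod_{j\in J}q_j[i/0]$ by $(\bar{x}_is_J)\prod_{j\in J}q_j$ with the same index set $J$, where $\bar{x}_is_J$ in turn reduces modulo $I_n$ to a sum of squares of degree at most $\deg(s_J)+2\leq 2d-c(J)$; the terms $\bar{x}_it_jp_j[i/0]$ and the Boolean-axiom contributions are absorbed similarly. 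This produces a PS proof of $\bar{x}_iq\geq 0$ from $Q$ of degree mod $c$ at most $2d$ and product-width at most $w$, so $\bar{x}_iq\in\mathrm{PS}^c_{w,2d}(Q)$ and $E(\bar{x}_iq)\geq 0$ as wanted. Finally $Q[i/0]\vdash^c_{w,2d-2}-1\geq 0$ puts $-1$ into $\mathrm{PS}^c_{w,2d-2}(Q[i/0])$, so $\tilde{E}(-1)\geq 0$, i.e.\ $-1\geq 0$, a contradiction. Part~(ii) is the mirror image: Lemma~\ref{lem:unrestricting}(i) and Lemma~\ref{pseudoexpectations-positive-less-than-one}(i) turn $Q[i/0]\vdash^c_{w,2d}-1\geq 0$ into $E(x_i)>0$, one conditions $E$ on $x_i=1$ via $\tilde{E}(q)=E(x_iq)/E(x_i)$ using $x_i\bar{x}_i\equiv 0$ and $q_j[i/1]\equiv q_j+r_j\bar{x}_i$, and the contradiction comes from $Q[i/1]\vdash^c_{w,2d-2}-1\geq 0$.

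I expect the degree count in the ``multiply by $\bar{x}_i$ and reduce modulo $I_n$'' step to be the main obstacle, and it is exactly where the gap of two in the statement ($2d-2$ versus $2d$) comes from: multiplying the linear polynomial $\bar{x}_i$ into a degree-$(2d-2)$ sum of squares would naively cost one in the degree, but the only way to keep the product a sum of squares modulo $I_n$ is to absorb $\bar{x}_i$ into the square, turning a summand $r^2$ into $(\bar{x}_ir)^2$, a loss of two --- hence part~(i) can afford only the cheaper degree-$(2d-2)$ refutation of $Q[i/0]$. The second delicate point, inherited from Lemma~\ref{lem:unrestricting}, is that although $q_j[i/0]$ differs from $q_j$ by the degree-unbounded term $r_jx_i$, that term is annihilated by $\bar{x}_i$ modulo $I_n$, so the product width and the degree budget governed by the cut-off function survive the rewriting unscathed; this is precisely the property the cut-off function was introduced to provide. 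The remaining points --- linearity and normalization of $\tilde{E}$, the bound $\deg(\bar{x}_iq)\leq 2d$, and the bookkeeping of the Boolean-ideal multiples produced by the reduction --- are routine.
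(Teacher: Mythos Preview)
Your argument is correct, and it takes a genuinely different route from the paper's. Both proofs pivot on Lemmas~\ref{lem:unrestricting} and~\ref{pseudoexpectations-positive-less-than-one} and on the Duality Theorem~\ref{thm:duality-for-sos}, but they use duality in opposite directions.

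The paper stays on the primal side. From $Q[i/0]\vdash^c_{w,2d-2}-1\geq 0$ it deduces, via Lemmas~\ref{lem:unrestricting} and~\ref{pseudoexpectations-positive-less-than-one}, that $E(x_i)>0$ for every $E\in\mathscr{E}^c_{w,2d-2}(Q)$, and then invokes duality (including the ``minimum is achieved'' clause) to extract a concrete $\epsilon>0$ with $Q\vdash^c_{w,2d-2} x_i\geq\epsilon$. It does the same at degree $2d$ to obtain $Q\vdash^c_{w,2d}\bar{x}_i\geq\delta$. These two syntactic inequalities are then combined by lifting the first by $\bar{x}_i^2$ (this is where $d\geq 2$ enters in the paper's proof) and adding the Boolean identity $-\bar{x}_ix_i\geq 0$, yielding $Q\vdash^c_{w,2d}0\geq\delta\epsilon$.

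You instead work on the dual side: assume a degree-$2d$ pseudo-expectation $E$ exists, use Lemmas~\ref{lem:unrestricting} and~\ref{pseudoexpectations-positive-less-than-one} once (on the degree-$2d$ hypothesis) to get $E(\bar{x}_i)>0$, and then \emph{condition} to produce $\tilde{E}(q)=E(\bar{x}_iq)/E(\bar{x}_i)\in\mathscr{E}^c_{w,2d-2}(Q[i/0])$, which contradicts the degree-$(2d-2)$ hypothesis directly. Your verification that $\bar{x}_iq\in\mathrm{PS}^c_{w,2d}(Q)$ whenever $q\in\mathrm{PS}^c_{w,2d-2}(Q[i/0])$ --- via $\bar{x}_i r^2\equiv(\bar{x}_i r)^2$, $\bar{x}_i x_i\equiv 0$, and the congruences $q_j[i/0]\equiv q_j+r_jx_i$ from Lemma~\ref{lem:unrestricting} --- is exactly right, and the degree accounting against the cut-off function $c$ goes through as you describe. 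What this buys you is that you never need the ``infimum is attained'' part of Theorem~\ref{thm:duality-for-sos}: you only use that emptiness of $\mathscr{E}^c_{w,2d}(Q)$ forces the sup to be $+\infty$. The paper's approach, by contrast, yields an explicit syntactic refutation of $Q$ assembled from the pieces, at the cost of invoking the stronger form of duality twice.
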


\begin{proof}
Since in this proof $c$ and $w$ remain fixed, we write $\vdash_{2d}$
instead of $\vdash^{c}_{w,2d}$ and $\mathscr{E}_{2d}(Q)$ instead of
$\mathscr{E}^c_{w,2d}(Q)$, and act similarly for degree $2d-2$.  First
note that $-\bar{x_i}x_i = (x_i^2-x_i) - x_i(x_i + \bar{x}_i - 1)$,
and $d \geq 1$, so
\begin{equation}
\vdash_{2d} -\bar{x}_ix_i \geq 0. \label{eqn:pim}
\end{equation}
We prove \emph{(i)}; the proof of \emph{(ii)} is entirely
analogous. 

Assume $Q[i/0] \vdash_{2d-2} -1 \geq 0$.  By
Lemmas~\ref{lem:unrestricting}
and~\ref{pseudoexpectations-positive-less-than-one} and $d \geq 2$ we
have $E(x_i) > 0$ for any~$E \in \mathscr{E}_{2d-2}(Q)$. Then, by the
Duality Theorem, there exist $\epsilon > 0$ such that~$Q \vdash_{2d-2}
x_i \geq \epsilon$. To see this, let $\gamma = \sup\{ r \in \reals : Q
\vdash_{2d-2} x_i \geq r \} = \inf\{ E(x_i) : E \in
\mathscr{E}_{2d-2}(Q)\}$. If $\mathscr{E}_{2d-2}(Q)$ is empty, then
$\gamma = +\infty$ and any~$\epsilon > 0$ serves the purpose. If
$\mathscr{E}_{2d-2}(Q)$ is non-empty, then the Duality Theorem says
that the infimum is achieved, hence~$\gamma = E(x_i) > 0$ for some~$E$
in~$\mathscr{E}_{2d-2}(Q)$, and $\epsilon = \gamma/2 > 0$ serves the
purpose. Using $d \geq 2$ again, $Q\vdash_{2d}\bar{x}_i^2x_i\geq
\bar{x}_i^2\epsilon$, so
\begin{equation}
Q\vdash_{2d}\bar{x}_ix_i\geq \bar{x}_i\epsilon. \label{eqn:pam}
\end{equation}
Assume also $Q[i/1] \vdash_{2d} -1 \geq 0$. By
Lemmas~\ref{lem:unrestricting}
and~\ref{pseudoexpectations-positive-less-than-one} we have
$E(\bar{x}_i) > 0$ for any $E \in \mathscr{E}_{2d}(Q)$, and this time
$d \geq 1$ suffices.  By the same argument as before, by the Duality
Theorem there exist $\delta > 0$ such that $Q \vdash_{2d} \bar{x}_i
\geq \delta$. Now $d \geq 1$ suffices to get
\begin{equation}
Q \vdash_{2d} \bar{x}_i \epsilon \geq \delta\epsilon. \label{eqn:pum}
\end{equation}
Adding~\eqref{eqn:pim},~\eqref{eqn:pam} and~\eqref{eqn:pum} 
gives
$Q \vdash_{2d} 0 \geq \delta\epsilon$, i.e., $Q\vdash_{2d}
-1\geq 0$.
\end{proof}

\subsection{Inductive proof}

We need one more technical concept: a PS proof as in~\eqref{proof-ps}
is \emph{multilinear} if $s_0$ and $s_J$ are sums-of-squares of
multilinear polynomials for each $J \in\mathscr{J}$, and $t_j$ is a
multilinear polynomial for each $j \in [m]$.

\begin{lemma} \label{lem:implicitly} For every two positive integers
  $s$ and $w$ and every indexed set $Q$ of polynomials, if there is a
  PS refutation from $Q$ of monomial size at most $s$ and
  product-width at most $w$, then there is a multilinear PS refutation
  from $Q$ of monomial size at most $s$ and product-width at most $w$.
\end{lemma}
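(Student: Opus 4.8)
The plan is to reduce an arbitrary PS refutation to a multilinear one by replacing every monomial that uses a variable to a power higher than one with its multilinear reduction modulo $I_n$, and by eliminating monomials that contain both $x_i$ and $\bar x_i$ for some $i$. Concretely, I would define a \emph{multilinearization operator} $\mathrm{ml}(\cdot)$ on monomials: given a monomial $x^\alpha$, first use $x_i^2 \equiv x_i$ and $\bar x_i^2 \equiv \bar x_i \bmod I_n$ to lower all exponents to at most $1$, and then use $x_i + \bar x_i - 1 \in I_n$ (equivalently $x_i\bar x_i \equiv 0 \bmod I_n$, which follows from Lemma~\ref{lem:composing}'s opening identity $-\bar x_i x_i = (x_i^2 - x_i) - x_i(x_i+\bar x_i-1)$) to send any monomial containing a twin pair to $0$. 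Extend $\mathrm{ml}$ linearly to all polynomials. The two basic facts I need are: (a) $\mathrm{ml}(p) \equiv p \bmod I_n$ for every $p$; (b) $\deg(\mathrm{ml}(p)) \le \deg(p)$; and (c) $\size(\mathrm{ml}(p)) \le \size(p)$, since multilinearization only maps monomials to monomials (or to $0$) and can only cause further cancellation when like terms are merged — it never creates new monomials.

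Given a PS refutation of $-1$ from $Q$ as in \eqref{proof-ps}, written out with squares $s_J = \sum_i r_{i,J}^2$ and multipliers $t_j$, I would produce the multilinear refutation by replacing each $r_{i,J}$ with $\mathrm{ml}(r_{i,J})$ and each $t_j$ with $\mathrm{ml}(t_j)$. This keeps all the $r_{i,J}$ and $t_j$ polynomials multilinear, and by fact (c) the total monomial size does not increase; the product-width $w$ is untouched since $\mathscr{J}$ is unchanged. The remaining task is to check that the modified expression is still a valid PS identity equal to $-1$. For this I would compute, modulo $I_n$,
\[
\sum_i \mathrm{ml}(r_{i,J})^2 \prod_{j\in J} q_j \;\equiv\; \sum_i r_{i,J}^2 \prod_{j\in J} q_j \pmod{I_n},
\]
using fact (a) and the fact that congruence modulo an ideal is preserved under multiplication; similarly $\mathrm{ml}(t_j)\,p_j \equiv t_j p_j \bmod I_n$, and of course $s_0$ is handled the same way with the empty set. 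Summing, the new right-hand side is congruent modulo $I_n$ to the old one, which equals $-1$; hence the new right-hand side equals $-1 + g$ for some $g \in I_n$, and folding $-g$ into the $\sum_{q\in B_n} u_q q$ slot (where it is free, since those multipliers count for neither size nor degree) yields a genuine PS refutation. One should double-check that the new square terms have appropriately controlled degree — but since $\deg(\mathrm{ml}(r_{i,J})) \le \deg(r_{i,J})$, the degree-mod-$c$ (or ordinary-degree) bound of the original proof is preserved; this is not needed for the present lemma's statement but comes for free.

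The only genuinely delicate point is fact (c), the monomial-size claim: one must be careful that $\mathrm{ml}$ is well-defined on polynomials and that expanding $\mathrm{ml}(r_{i,J})^2$ is not what we count — we count $\size(\mathrm{ml}(r_{i,J}))$, which is bounded by $\size(r_{i,J})$ because $\mathrm{ml}$ acts monomial-by-monomial, mapping each monomial of $r_{i,J}$ to a single (multilinear) monomial or to zero, after which collecting terms can only shrink the support. So the size bookkeeping goes through verbatim with the liberal definition of monomial size fixed in Section~\ref{sec:preliminaries}. I expect this to be the main (minor) obstacle: making the claim ``multilinearization does not increase the number of monomials'' precise and being careful about the order of operations (reduce exponents, then kill twin-pair monomials, then merge like terms). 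Everything else is routine ideal-congruence bookkeeping.
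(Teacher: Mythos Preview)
Your proposal is correct and follows essentially the same approach as the paper: replace each $r_{i,J}$ and $t_j$ by its multilinearization, observe that this preserves congruence modulo $I_n$ and cannot increase the monomial count, and absorb the resulting $I_n$-residue into the Boolean-axiom multipliers $u_q$. The paper's proof is marginally simpler in that its multilinearization only lowers repeated powers (replacing $x^l$ by $x$) and does not bother to kill monomials containing a twin pair $x_i\bar x_i$ --- that extra step of yours is harmless but unnecessary, since the paper's definition of ``multilinear'' only requires degree at most one in each of the $2n$ variables separately.
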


\begin{proof}
  Assume that $Q = \{q_1,\ldots,q_\ell,p_1,\ldots,p_m\}$ and that
  there is a refutation from $Q$ as in~\eqref{proof-ps}, with $s_0 =
  \sum_{i=1}^{k_0} r_{i,0}^2$ and $s_J = \sum_{i=1}^{k_J} r_{i,J}^2$
  for $J \in \mathscr{J}$, where the total number of monomials among
  the $r_{i,0}$, $r_{i,J}$ and $t_j$ is at most $s$. For each
  polynomial $r$ let $\overline{r}$ be its direct multilinearization;
  i.e., each power $x^{l}$ with $l \geq 2$ that appears in $r$ is
  replaced by $x$. It is obvious that $r \equiv \overline{r} \mod I_n$
  and also $r^2 \equiv \overline{r}^2 \mod I_n$, where $n$ is the
  number of pairs of twin variables in $Q$. Moreover, the number of
  monomials in $\overline{r}$ does not exceed that of $r$.  Thus,
  setting $s'_0 = \sum_{i=1}^{k_0} \overline{r_{i,0}}^2$, $s'_J =
  \sum_{i=1}^{k_J} \overline{r_{i,J}}^2$ and $t'_j = \overline{t_j}$
  we get
\begin{equation}
  -1 \equiv s'_0 + \sum_{J \in \mathscr{J}} s'_J \prod_{j \in J} q_j + 
  \sum_{j \in [m]} t'_j p_j \mod I_n,
\end{equation}
It follows that $Q$ has a multilinear refutation of monomial size at
most $s$.
\end{proof}

% In the following, the \emph{explicit monomials} of an SOS proof
% as in~\eqref{proof} are the monomials of the $r_{i,j}$
% and the monomials of the $t_j$.

Theorem~\ref{thm:main} will be a consequence of the
following lemma for a suitable choice of $d$ and $c$:

\begin{lemma} \label{lem:technical} For every natural number $n$,
  every indexed set $Q$ of polynomials with $n$ pairs of twin
  variables, every cut-off function $c$ for $Q$, every real $s
  \geq 1$ and every two positive integers~$w$ and $d$, if there is a
  multilinear PS refutation from $Q$ of product-width at most $w$ with
  at most $s$ many explicit monomials of degree at least $d$ (counted
  with multiplicity), then there is a PS refutation from $Q$ of
  product-width at most $w$ and degree mod $c$ at most~$2d'+2d''$
  where $d' = d + \lfloor{2(n+1)\log(s)/d}\rfloor$ and $d'' =
  \max\{1,\lceil{(\max c)/2}\rceil\}$.
\end{lemma}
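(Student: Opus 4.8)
The plan is to prove Lemma~\ref{lem:technical} by induction on the number $n$ of pairs of twin variables, following the Clegg--Edmonds--Impagliazzo pattern that the introduction advertises. The base case $n=0$ is immediate: there are no variables, so any refutation is just an identity among constants, and its degree mod $c$ is bounded by $\max c$, hence by $2d''$, which is $\le 2d'+2d''$. For the inductive step, fix a multilinear PS refutation $\pi$ from $Q$ of product-width at most $w$ with at most $s$ explicit monomials of degree $\ge d$ (counted with multiplicity). The first move is the standard counting argument: since every explicit monomial of degree $\ge d$ contains at least $d$ distinct variables (multilinearity!), summing over the at most $s$ such monomials shows that some single variable, say $x_i$ or $\bar x_i$, occurs in at least a $d/(2n)$ fraction of them; setting that variable appropriately to a Boolean value $b$ kills all those monomials. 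Thus the restricted refutation $\pi[i/b]$ is a multilinear PS refutation from $Q[i/b]$, of product-width $\le w$, with at most $s(1 - d/(2n))$ explicit monomials of degree $\ge d$; and the opposite restriction $\pi[i/\bar b]$ is one from $Q[i/\bar b]$ with at most $s$ such monomials. (One has to be a little careful that ``explicit monomials'' of a restriction are genuinely a sub-multiset of the original ones, but this is exactly what the direct substitution does, possibly merging or cancelling terms, which only helps.)

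Next I would apply the induction hypothesis, but with the \emph{same} target degree parameter $d$ and a smaller monomial budget on one branch and the full budget on the other. For $\pi[i/\bar b]$ we get a PS refutation from $Q[i/\bar b]$ of product-width $\le w$ and degree mod $c$ at most $2d'+2d''$. For $\pi[i/b]$, with budget $s' = s(1-d/(2n)) \le s\exp(-d/(2n))$, the induction hypothesis yields a refutation of degree mod $c$ at most $2\tilde d' + 2d''$ where $\tilde d' = d + \lfloor 2n\log(s')/d\rfloor \le d + \lfloor 2n\log(s)/d\rfloor - 1 = d' - 1$ — here the choice of the constant in $d'$ (with the $n+1$ rather than $n$, and the careful bookkeeping of the floor) is exactly what makes the recursion close, so the $b$-branch comes in at degree mod $c$ at most $2d'+2d'' - 2 = 2(d'-1)+2d''$. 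The key arithmetic point to verify is that $\lfloor 2(n+1)\log(s')/d\rfloor \le \lfloor 2(n+1)\log(s)/d\rfloor - 1$ given $\log(s') \le \log(s) - d/(2n)$; this needs $2(n+1)\cdot\frac{d/(2n)}{d} = (n+1)/n \ge 1$, which holds, but one must be slightly careful because the floor can absorb a fractional saving — this is why the statement uses $n+1$, giving strictly more than $1$ of slack, and also why it is phrased as ``$\ge 1$'' on $s$ so that $\log s \ge 0$ throughout.

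Now I am in position to invoke the composing Lemma~\ref{lem:composing}. Write $2D := 2d'+2d''$, so $D = d' + d''$ and $D \ge 2$ since $d'' \ge 1$ and $d' \ge 1$. I have $Q[i/b] \vdash^c_{w,2D-2} -1 \ge 0$ and $Q[i/\bar b] \vdash^c_{w,2D} -1 \ge 0$ (or the symmetric pair, depending on whether $b=0$ or $b=1$ — Lemma~\ref{lem:composing} has both cases (i) and (ii) precisely for this reason). Lemma~\ref{lem:composing}(i) or (ii) then gives $Q \vdash^c_{w,2D} -1 \ge 0$, i.e. a PS refutation from $Q$ of product-width $\le w$ and degree mod $c$ at most $2d'+2d''$, completing the induction. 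One subtlety to flag: Lemma~\ref{lem:composing} is stated for restrictions $Q[i/0], Q[i/1]$ directly, and internally it routes through Lemmas~\ref{lem:unrestricting} and~\ref{pseudoexpectations-positive-less-than-one} (which in turn use the Duality Theorem~\ref{thm:duality-for-sos}) to produce the bridging inequalities $Q\vdash x_i\ge\epsilon$ and $Q\vdash\bar x_i\ge\delta$ — so all the static-proof machinery is already packaged there and I simply quote it.

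I expect the main obstacle to be the bookkeeping of the monomial-count recursion against the floor function: making sure that the definition $d' = d + \lfloor 2(n+1)\log(s)/d\rfloor$ is exactly the right fixed point, that the $b$-branch really drops to $d'-1$ and not merely to something $\le d'$, and that the degree budget $2d'+2d''$ passed down is independent of which branch we recurse into (the cut-off term $d''$ must not grow). A secondary point requiring care is the very first step of the induction — confirming that multilinear refutations restrict to multilinear refutations and that the ``counted with multiplicity'' convention for explicit monomials is preserved under restriction — but that is routine once the substitution is written out, exactly as in the proof of Lemma~\ref{lem:implicitly}. Everything genuinely hard about statics (the zero-gap, degree-respecting duality) has been isolated into Theorem~\ref{thm:duality-for-sos} and Lemma~\ref{lem:composing}, so this final lemma is, modulo the arithmetic, a clean induction.
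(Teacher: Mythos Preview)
Your approach matches the paper's exactly. Two repairs are needed.

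First, you skip the case $s' < 1$. The induction hypothesis requires its budget parameter to be at least~$1$, so when $s' = s(1 - d/(2n)) < 1$ you cannot recurse. The paper (working with the exact count $t \leq s$ rather than $s$) handles this directly: the number of explicit monomials of degree $\geq d$ surviving the restriction is a non-negative integer bounded above by $s' < 1$, hence zero, so $\Pi[i/b]$ already has every explicit monomial of degree at most $d-1$ and therefore degree mod $c$ at most $2(d-1)+2d'' \leq 2d'+2d''-2$. This case also absorbs $d \geq 2n$, which you never address; in the complementary case $s' \geq 1$ one gets $d < 2n$ for free.

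Second, your arithmetic is confused. The induction hypothesis applied at $n-1$ pairs returns $\tilde d' = d + \lfloor 2n\log(s')/d\rfloor$, since $2((n-1)+1) = 2n$; you write this correctly once and then silently replace $2n$ by $2(n+1)$ when you state ``the key arithmetic point''. The correct chain is
\[
\frac{2n\log s'}{d} \;\leq\; \frac{2n\log s}{d} + \frac{2n}{d}\log\!\Big(1 - \frac{d}{2n}\Big) \;\leq\; \frac{2n\log s}{d} - 1 \;\leq\; \frac{2(n+1)\log s}{d} - 1,
\]
using $\log(1+x)\leq x$ and $\log s \geq 0$; since $\lfloor B-1\rfloor = \lfloor B\rfloor - 1$, this gives $\tilde d' \leq d'-1$. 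The saving on the shrinking branch is exactly $1$, not $(n+1)/n$, and the floor needs no extra slack. The ``$n+1$'' in the statement of the lemma is not there to pad the floor on the shrinking branch; it is what lets the IH bound $d + \lfloor 2n\log s/d\rfloor$ on the \emph{unchanged} branch sit below the $n$-variable target $d' = d + \lfloor 2(n+1)\log s/d\rfloor$.
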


\begin{proof} 
  The proof is an induction on $n$.  Let $Q$ be an indexed set of
  polynomials with $n$ pairs of twin variables, let $c$ be a cut-off
  function for $Q$, let $s \geq 1$ be a real, let $w$ and $d$ be
  positive integers, and let $\Pi$ be a multilinear refutation from
  $Q$ of product-width at most~$w$ and at most $s$ many explicit
  monomials of degree at least~$d$. For $n = 0$ the statement is true
  because~$2d'' \geq 2\lceil{(\max c)/2}\rceil \geq \max c$.  Assume now
  that $n \geq 1$. Let $t \leq s$ be the exact number of explicit
  monomials of degree at least $d$ in $\Pi$. The total number of
  variable occurrences in such monomials is at least $dt$. Therefore,
  there exists one among the~$2n$ variables that appears in at least
  $dt/2n$ of the explicit monomials of degree at least~$d$. Let~$i \in
  [n]$ be the index of such a variable, basic or twin. If it is basic,
  let $a = 0$. If it is twin, let $a = 1$. Our goal is to show that
  \begin{equation}
  Q[i/a] \vdash^c_{2d'+2d''-2} -1 \geq 0 \;\;\;\;\text{ and }\;\;\;\; 
  Q[i/1-a] \vdash^c_{2d'+2d''} -1 \geq 0, \label{eqn:twothings}
  \end{equation} 
  for $d'$ and $d''$ as stated in the lemma.
%
%  where $d'' = \lceil{(\max f)/2}\rceil$ and $d'$ is a natural
%  number satisfying
%  \begin{equation}
%  2 \leq d' \leq d + 2(n+1)\log(s)/d. \label{eqn:target}
%  \end{equation}
  If we achieve so, then $d'+d'' \geq 2$ because $d' \geq d \geq 1$
  and $d'' \geq 1$, so
  Lemma~\ref{lem:composing} applies on~\eqref{eqn:twothings} to give
  $Q \vdash_{2d'+2d''}^c -1 \geq 0$, which is what we are after.

  Consider $Q[i/a]$ first. This is a set of polynomials on $n-1$ pairs
  of twin variables, and~$\Pi[i/a]$ is a multilinear refutation from
  it of product-width at most $w$ that has at most~$s' := t(1-d/2n)$
  explicit monomials of degree at least $d$. Moreover $c$ is a cut-off
  function for it. We distinguish the cases $s' < 1$ and $s' \geq 1$.
  If $s' < 1$, then all explicit monomials in $\Pi[i/a]$ have degree
  at most $d-1$.  Since $2d''\geq \max c$, this refutation has degree
  mod~$c$ at most $2(d-1)+2d'' \leq 2d'+2d''-2$. This gives the first
  part of~\eqref{eqn:twothings}.  If $s' \geq 1$, then first note that
  $d < 2n$. Moreover, the induction hypothesis applied to $Q[i/a]$ and
  $s'$, and the same $c$, $d$ and $w$, gives that there is a
  refutation from $Q[i/a]$ of product-width at most $w$ and degree mod
  $c$ at most $2d_a+2d''$, where
\begin{equation}
d_a = d + \lfloor{2n\log(t(1-d/2n))/d}\rfloor \leq d + 
  \lfloor{2(n+1)\log(s)/d}\rfloor - 1.
\label{eqn:first}
\end{equation}
Here we used the inequality $\log(1+x) \leq x$ which holds true for
every real $x > -1$, and the fact that $d < 2n$. This
gives the first part of~\eqref{eqn:twothings} since $d_a \leq d'-1$.

Consider $Q[i/1-a]$ next. In this case, the best we can say is that
$c$ is still a cut-off function for it, and that $\Pi[i/1-a]$
is a multilinear refutation from it of product-width at most $w$, that
still has at most~$s$ many explicit monomials of degree at least
$d$. But $Q[i/1-a]$ has at most $n-1$ pairs of twin variables, so the
induction hypothesis applies to it. Applied to the same $c$, $s$, $d$
and $w$, it gives that there is a refutation from $Q[i/1-a]$ of degree
mod~$c$ at most $2d_{1-a}+2d''$, where
\begin{equation}
  d_{1-a} = d + \lfloor{2n\log(s)/d}\rfloor \leq d + \lfloor{2(n+1)\log(s)/d}\rfloor.
\label{eqn:second}
\end{equation}
This gives the second part
of~\eqref{eqn:twothings} since $d_{1-a} \leq d'$. The
proof is complete.
%
% Setting $e = d + \lfloor{4(n+1)\log(s)/d}\rfloor/2 + d''$ 
% what we have shown in~\eqref{eqn:first} and~\eqref{eqn:second} is that
% $$
% Q[i/a] \vdash^c_{2e-2} -1 \geq 0 \;\;\;\;\text{ and }\;\;\;\; 
% Q[i/1-a] \vdash^c_{2e} -1 \geq 0. 
% $$ 
% Now, $e \geq d \geq 2$, and Lemma~\ref{lem:composing} gives $Q
% \vdash^c_{2e} -1 \geq 0$. Since $2e = 2d'+2d''$ with $d' \leq d +
% 2(n+1)\log(s)/d$ and $d'' = \lceil{(\max f)/2}\rceil$, 
% the proof is complete.
\end{proof}

\begin{proof}[Proof of Theorem~\ref{thm:main}]
  Assume that $Q$ has a refutation of product-width at most $w$ and
  monomial size at most $s$.  Applying Lemma~\ref{lem:implicitly} we
  get a multilinear refutation with at most $s$ many explicit
  monomials, and hence with at most $s$ many explicit monomials of
  degree at least $d_0$, for any $d_0$ of our choice. We choose
\begin{equation}
d_0 := \lfloor{\sqrt{2(n+1)\log(s)}}\rfloor+1.
\end{equation}
By assumption $s \geq 1$ and we chose $d_0$ in such a way that $d_0
\geq 1$. Thus, Lemma~\ref{lem:technical} applies to any cut-off
function $c$ for $Q$, in particular for the cut-off function that is
$kw$ everywhere. This gives a refutation of product-width at most
$w$ and degree mod $c$ at most $2d'+kw+2$ with
\begin{equation}
d' \leq d_0+2(n+1)\log(s)/d_0 \leq 2\sqrt{2(n+1)\log(s)}+1.
\end{equation}
Since a proof of product-width at most $w$ and degree mod $c$ at most
$2d'+kw+2$ is also a proof of standard degree at most $2d'+kw+2$, the
proof is complete.
\end{proof}

\section{Applications} \label{sec:applications}

The obvious targets for applications of Theorem~\ref{thm:main} are the
examples from the literature that are known to require linear degree
to refute. For some of them, such as the Knapsack, the size lower
bound that follows was already known. For some others, the application
of Theorem~\ref{thm:main} yields a new result.

A note is in order: all the examples below 
%
% except the one for the Balanced Separator problem
%
are either systems of polynomial
equations, i.e., $\ell = 0$, or have a single inequality,~i.e.,~$\ell
= 1$. For such systems of constraints, PS and~SOS are literally
equivalent. For this reason, our size lower bounds for them are stated
only for~SOS (stating them for PS would be accurate, but also
misleading). 
%
% The example for the Balanced Separator problem has two
% inequalities, i.e., $\ell = 2$, but the known lower bounds are only
% for SOS, which yields size lower bounds only for SOS too.

\subsection{Tseitin, Knapsack, and Random CSPs}

The first set of examples that come to mind are the Tseitin formulas:
If $G_n = (V,E)$ is an $n$-vertex graph from a family $\{ G_n : n \in
\mathbb{N} \}$ of constant degree regular expander graphs, then the
formula $\mathrm{TS}_n$ has one Boolean variable $x_e$ for each $e \in E$ and
one parity constraint~$\sum_{e : u \in e} x_e = 1 \mod 2$ for each $u
\in V$. Whenever the degree $d$ of the graphs is even, this is
unsatisfiable when $n$ is odd. In the encoding of the constraints
given by the system of polynomial equations $Q = \{ \prod_{e : u \in
  e} (1-2x_e) = -1 : u \in V \}$, the Tseitin formulas~$\mathrm{TS}_n$ were
shown to require degree $\Omega(n)$ to refute in PS in Corollary~1 from
\cite{Grigoriev2001}. Since the number of variables of $\mathrm{TS}_n$ is
$dn/2$, the constraints in $Q$ are equations of degree $d$, and $d$ is
a constant, Theorem~\ref{thm:main} gives:

\begin{corollary}
There exists $\epsilon \in \reals_{>0}$ such that for every sufficiently
large $n \in \mathbb{N}$, every SOS refutation of $\mathrm{TS}_n$ has monomial
size at least $2^{\epsilon n}$.
\end{corollary}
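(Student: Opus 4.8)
The plan is to derive the corollary as a direct application of Theorem~\ref{thm:main} to Grigoriev's degree lower bound. First I would fix $d$ to be the (even, constant) degree of the expander graphs $G_n$, so that $\mathrm{TS}_n$ has $N := dn/2$ Boolean variables, and record that in the polynomial encoding $Q = \{\prod_{e : u \in e}(1-2x_e) + 1 = 0 : u \in V\}$ every constraint is an equation of degree exactly $d = O(1)$ and there are no inequality constraints, i.e.\ $\ell = 0$. Since PS with $\ell \leq 1$ is literally equivalent to SOS, any SOS refutation of $\mathrm{TS}_n$ is in particular a PS refutation of product-width at most $w := 1$, and conversely a low-degree PS refutation of product-width $1$ from $Q$ is a low-degree SOS refutation. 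By Corollary~1 of~\cite{Grigoriev2001} there is a constant $\delta > 0$ such that, for all sufficiently large $n$ for which $\mathrm{TS}_n$ is unsatisfiable (e.g.\ $n$ odd), every PS refutation of $Q$ has degree at least $\delta n$; because $B_n$ is a Gr\"obner basis for $I_n$, the same bound holds for the liberal notion of degree used in Theorem~\ref{thm:main}.

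Next I would argue by contradiction: suppose $\mathrm{TS}_n$ has an SOS refutation of monomial size $s$. Regarded as a PS refutation of product-width at most $1$, Theorem~\ref{thm:main} (applied with $N = dn/2$ in the role of the number of pairs of twin variables, $k = d$ and $w = 1$) converts it into a PS refutation of product-width at most $1$ and degree at most $4\sqrt{2(N+1)\log s} + d + 4$. Combined with the lower bound this forces
\[
\delta n \;\leq\; 4\sqrt{2(dn/2+1)\log s} + d + 4 .
\]
For $n$ large the constant $d+4$ is absorbed into, say, $(\delta/2)n$, and $2(dn/2+1) \leq 2dn$; squaring then yields $(\delta^2/4)n^2 \leq 32 d n \log s$, hence $\log s \geq (\delta^2/(128 d))\, n$, i.e.\ $s \geq \exp((\delta^2/(128d))\,n) = 2^{\epsilon n}$ with $\epsilon := \delta^2/(128 d \ln 2) > 0$. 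This is the asserted bound.

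There is no real obstacle here; the proof is a one-line invocation of Theorem~\ref{thm:main} followed by elementary arithmetic. The only points deserving a word of care are that the degree lower bound of~\cite{Grigoriev2001} is stated for PS over the Boolean hypercube with the twin-variable axioms and a notion of degree that agrees, up to reduction modulo $I_n$, with ours; that with $\ell = 0$ the product width is immaterial, so the choice $w = 1$ in Theorem~\ref{thm:main} is harmless and makes the $kw$ term a mere additive constant $d$; and that $\epsilon$ depends only on the graph degree $d$ and on the constant $\delta$ hidden in Grigoriev's $\Omega(n)$, both of which are independent of $n$. The analogous applications to Knapsack and random CSPs will follow the very same template, changing only the source of the linear degree lower bound.
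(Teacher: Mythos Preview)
Your proposal is correct and follows exactly the paper's approach: combine Grigoriev's $\Omega(n)$ degree lower bound for $\mathrm{TS}_n$ with Theorem~\ref{thm:main}, using that the number of variables is $dn/2$, the constraints are equations of constant degree $d$, and $\ell=0$ so $kw$ is an additive constant. The paper in fact gives no more than this one-line derivation, so your detailed arithmetic is a faithful expansion of it.
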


Among the semialgebraic proof systems in the literature, exponential
size lower bounds for Tseitin formulas were known before for a proof
system called static LS$_+$ in
\cite{GrigorievHirschPasechnik,Itsykson2007}. Up to at most doubling the
degree, this can be seen as the subsystem of SOS in which every square
$s_j$ is of the very special form
$$
s_j = \Big({\Big({\sum_{i \in
  [n]} a_ix_i + b}\Big) \prod_{i \in I} x_i \prod_{j \in J} (1-x_j)}\Big)^2.
$$

A second set of examples are
the Knapsack equations~$2x_1 + \cdots + 2x_n = k$, which are
unsatisfiable for odd integers $k$. We denote them $\mathrm{KS}_{n,k}$.
These are known to require degree~$\Omega(\min\{k,2n-k\})$ to refute
in SOS \cite{Grigoriev}. Since the number of variables is $n$ and
the degree is one, Theorem~\ref{thm:main} gives an exponential size
$2^{\Omega(n)}$ lower bound when $k = n$.  For this example, an
exponential size lower bound for SOS was also proved in Theorem~9.1
from \cite{GrigorievHirschPasechnik} when $k = \Theta(n)$, so this
result is not new. We state the precise relationship that the
degree-reduction theorem gives in terms of $n$ and~$k$, which yields
superpolynomial lower bounds for~$k = \omega(\sqrt{n\log n})$.

\begin{corollary}
There exist $\epsilon \in \reals_{>0}$ such that for every
sufficiently large $n \in \mathbb{N}$ and $k \in [n]$, every SOS
refutation of $\mathrm{KS}_{n,k}$ has monomial size at least
$2^{\epsilon k^2/n}$.
\end{corollary}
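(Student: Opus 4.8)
The plan is to derive this from Theorem~\ref{thm:main}, in the contrapositive form of the degree criterion stated immediately after it, together with Grigoriev's degree lower bound for Knapsack~\cite{Grigoriev}. First I would record the parameters of the encoding: $\mathrm{KS}_{n,k}$ is the single polynomial equation $2x_1 + \cdots + 2x_n - k = 0$, so it has $n$ pairs of twin variables, $\ell = 0$ inequality constraints, one equality constraint ($m = 1$), and that constraint has degree $1$. Since there are no inequality constraints, SOS coincides with PS of product-width $1$, so it suffices to bound PS refutations with $w = 1$. We may also assume $k$ is odd: for even $k$ the system is satisfiable (set $k/2$ of the variables to $1$, using $k \le n$), so there are no refutations and the statement is vacuous.

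Next I would invoke the degree lower bound: there is a constant $c_0 > 0$ such that for all sufficiently large $n$ and all odd $k$, every SOS refutation of $\mathrm{KS}_{n,k}$ has degree at least $c_0 \min\{k, 2n-k\}$. The key elementary observation is that for $k \in [n]$ we have $k \le n \le 2n - k$, so $\min\{k, 2n-k\} = k$ throughout the range, and the bound becomes: the minimum SOS degree $d$ of a refutation of $\mathrm{KS}_{n,k}$ satisfies $d \ge c_0 k$. Feeding this into the degree criterion (with constraint degree $1$ and $w = 1$, so that the threshold $kw + 4$ equals $5$): whenever $d \ge 5$ the minimum monomial size $s$ satisfies $s \ge \exp\big((d-5)^2/(32(n+1))\big) \ge \exp\big((c_0 k - 5)^2/(32(n+1))\big)$.

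It then remains to turn this into the stated bound, splitting on the size of $k$. Fix the constant $k_0 := \lceil 10/c_0 \rceil$ and set $\epsilon := c_0^2/(256\log 2)$. If $k > k_0$, then $d \ge c_0 k > 10 \ge 5$ and $c_0 k - 5 > c_0 k/2$, so $s \ge \exp\big(c_0^2 k^2/(128(n+1))\big) \ge \exp\big(c_0^2 k^2/(256 n)\big) = 2^{\epsilon k^2/n}$ for $n \ge 1$, as desired. If instead $k \le k_0$, then for all sufficiently large $n$ we have $\epsilon k^2/n \le \epsilon k_0^2/n \le 1$, so the claimed bound $2^{\epsilon k^2/n}$ is at most $2$; and every SOS refutation of the unsatisfiable system $\mathrm{KS}_{n,k}$ has at least two explicit monomials (a single explicit term cannot produce the identity $-1 \equiv s_0 + t_1 p_1 \pmod{I_n}$, since evaluating at a suitable Boolean assignment forces $-1 = 0$), hence $s \ge 2 \ge 2^{\epsilon k^2/n}$. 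This gives the conclusion with the same $\epsilon$ in both regimes.

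I do not expect any genuine obstacle: the corollary is a routine combination of the trade-off theorem with an off-the-shelf degree lower bound, and the only care needed is the identification $\min\{k, 2n-k\} = k$ on $[n]$ (which makes Grigoriev's bound linear in $k$) and the harmless bounded-$k$ case. The superpolynomial range advertised in the surrounding text, $k = \omega(\sqrt{n\log n})$, is precisely the range in which $(c_0 k - 5)^2/(32(n+1)) = \omega(\log n)$, i.e.\ in which $s \ge 2^{\epsilon k^2/n}$ is superpolynomial in $n$.
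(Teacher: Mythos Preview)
Your proposal is correct and follows exactly the paper's intended approach: combine Grigoriev's $\Omega(\min\{k,2n-k\}) = \Omega(k)$ degree lower bound with the degree criterion (the Corollary to Theorem~\ref{thm:main}), using that the single Knapsack equation has degree~$1$ and $\ell\le 1$, so SOS and PS coincide and the threshold $kw+4$ equals~$5$. You are in fact more careful than the paper, which does not spell out the parity of $k$ or the bounded-$k$ regime; your size-$\ge 2$ argument there is slightly terse in the subcase where $t_1$ is a nonzero constant (evaluation does not literally give $-1=0$), but the conclusion still holds since $p_1$ is non-constant on the Boolean cube.
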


The third set of examples come from sparse random instances of
constraint satisfaction problems. As far as we know, monomial size
lower bounds for these examples do not follow from earlier published
work without using our result, so we give the details.

When $C$ is a clause with $k$ literals, say $x_{i_1} \vee \cdots \vee
x_{i_\ell} \vee \bar{x}_{i_{\ell+1}} \vee \cdots \vee \bar{x}_{i_k}$,
we write $p_C$ for the unique multilinear polynomial on the
variables $x_{i_1},\ldots,x_{i_k}$ of $C$ that evaluates to the same
truth-value as $C$ over Boolean assignments; concretely $p_C =
1-\prod_{j=1}^\ell (1-x_{i_j}) \prod_{j=\ell+1}^k x_{i_j}$. More
generally, if $C$ denotes a constraint on $k$ Boolean variables, we
write $p_C$ for the unique multilinear polynomial on the variables of
$C$ that represents $C$ over Boolean assignments; i.e., such that
$p_C(x) = 1$ if $x$ satisfies $C$, and $p_C(x)=0$ if $x$ falsifies
$C$, for any $x \in \{0,1\}^n$.

\begin{theorem}[see Theorem 12 in \cite{Schoenebeck2008}] 
\label{thm:schoenebeck}
For every $\delta \in \reals_{>0}$ there exist $c,\epsilon \in
\reals_{>0}$ such that, asymptotically almost surely as $n$ goes to
infinity, if $m = \lceil{cn}\rceil$ and $C_1,\ldots,C_m$ are random
3-XOR (resp. 3-SAT) constraints on $x_1,\ldots,x_n$ that are chosen
uniformly and independently at random, then there is a
degree-$\epsilon n$ SOS pseudo-expectation for the system of polynomial
equations $p_{C_1} = 1, \ldots p_{C_m} = 1$, and at the same time
every truth assignment for $x_1,\ldots,x_n$ satisfies at most a
$1/2+\delta$ fraction (resp. $7/8+\delta$) of the constraints
$C_1,\ldots,C_m$.
\end{theorem}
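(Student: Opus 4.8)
The plan is to follow the argument of \cite{Schoenebeck2008}, splitting the statement into an easy combinatorial half and a harder algebraic half. The combinatorial half -- that almost surely no truth assignment satisfies more than a $1/2+\delta$ fraction (resp.\ $7/8+\delta$) of the constraints -- is a standard first-moment computation. For a fixed assignment, the number of satisfied constraints is a sum of $m=\lceil cn\rceil$ independent $\{0,1\}$ random variables of mean $m/2$ (resp.\ $7m/8$), so a Chernoff bound gives probability $\exp(-\Omega(m))$ of exceeding the mean by $\delta m$; taking $c$ large enough that $\exp(-\Omega(cn))\ll 2^{-n}$ and a union bound over the $2^n$ assignments finishes this half. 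All the real work is in producing the pseudo-expectation.

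For the 3-XOR case, write each constraint as $\chi_{S_i}=\beta_i$, where $S_i\subseteq[n]$ has $|S_i|=3$, $\chi_S=\prod_{j\in S}(1-2x_j)$ is the $\pm1$ Fourier character, and $\beta_i\in\{\pm1\}$. The structural input is \emph{boundary expansion} of the random $3$-uniform constraint hypergraph: there is $\eta>0$ such that, asymptotically almost surely, every set $I$ of constraints with $|I|\le\eta n$ contains a variable occurring in exactly one $S_i$, $i\in I$; consequently $\bigoplus_{i\in I}S_i\neq\emptyset$ for every nonempty small $I$, so the $S_i$ have no small $\mathbb{F}_2$-linear dependence and every small subsystem is satisfiable. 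Now fix $d:=\lfloor\eta n/C\rfloor$ for a suitable absolute constant $C$, and define a linear functional $\tilde{\mathbb{E}}$ on polynomials of degree at most $d$ by setting, on Fourier characters, $\tilde{\mathbb{E}}[\chi_T]=\prod_{i\in I}\beta_i$ whenever $T=\bigoplus_{i\in I}S_i$ for some $I$ with $|I|\le d$, and $\tilde{\mathbb{E}}[\chi_T]=0$ otherwise; expansion makes such an $I$ unique when it exists, so $\tilde{\mathbb{E}}$ is well defined, and $I=\emptyset$ gives $\tilde{\mathbb{E}}[1]=1$. One then checks two things. First, the equality constraints: $\tilde{\mathbb{E}}[r(p_{C_i}-1)]=0$ for $\deg r\le d-O(1)$, which reduces to $\tilde{\mathbb{E}}[\chi_T(\chi_{S_i}-\beta_i)]=0$ and is immediate from the definition, since ``$|I|\le d$'' is closed under adjoining one constraint (shrink $d$ by $1$ if needed). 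Second, positivity: $\tilde{\mathbb{E}}[p^2]\ge0$ for $\deg p\le d/2$ -- given such a $p$, the Fourier supports of its monomials together with all constraints they activate span a subsystem on $O(d)$ constraints, still small by the choice of $C$; choosing a distribution $\mu$ on the solutions of that subsystem, one verifies $\tilde{\mathbb{E}}$ agrees with $\mathbb{E}_\mu$ on every monomial of $p^2$, so $\tilde{\mathbb{E}}[p^2]=\mathbb{E}_\mu[p^2]\ge0$.

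For 3-SAT I would couple to the 3-XOR case. A clause $C$ on variable set $S$ forbids exactly one assignment $a\in\{0,1\}^S$, and $a$ violates exactly one of the two XOR constraints $\chi_S=\pm1$ on $S$; call it $\mathrm{xor}(C)$. If a random clause has uniform variable triple and uniform polarity, then $\{\mathrm{xor}(C_1),\dots,\mathrm{xor}(C_m)\}$ is distributed exactly as a random 3-XOR instance with $m$ constraints, so the construction above yields a degree-$d$ pseudo-expectation $\tilde{\mathbb{E}}$ for $\{\chi_{S_i}=\beta_i\}$. It remains to see $\tilde{\mathbb{E}}$ satisfies $p_C=1$: on $\{0,1\}^S$ the degree-$3$ product $1-p_C=\prod_j(\mathrm{literal})$ equals the indicator of $a$, which is the subspace $\chi_S=\beta$; since $a$ lies off that subspace, $1-p_C\equiv g\cdot(\chi_S-\beta)\bmod I_n$ for a polynomial $g$ of degree at most $3$ (namely a scalar multiple of the multilinear indicator of $a$), and hence $\tilde{\mathbb{E}}[r(p_C-1)]=-\tilde{\mathbb{E}}[(rg)(\chi_S-\beta)]=0$ for $\deg r\le d-O(1)$ by the already-verified XOR equality constraints. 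The $7/8+\delta$ bound is the first-moment argument of the first paragraph with mean $7m/8$, and shrinking $\epsilon$ by a constant factor absorbs the $O(1)$ degree losses, so the degree-$\epsilon n$ pseudo-expectation survives.

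The step I expect to be the main obstacle is the positivity check in the 3-XOR case -- turning the \emph{formal} moment functional $\tilde{\mathbb{E}}$ into a provably positive semidefinite one. The substance is that the family of small, consistent subsystems is closed enough under the expansion that every low-degree $p$ sits inside one such subsystem together with an honest distribution over its solutions; arranging the constants ($\eta$, the blow-up incurred when passing to the closure, the constant $C$ linking $d$ to $\eta n$) so that ``degree $\le\epsilon n$'' genuinely survives is the delicate bookkeeping. Obtaining the boundary expansion itself for $m=\Theta(n)$ needs $c$ above a threshold, which is exactly why the statement is quantified as ``there exist $c,\epsilon\in\reals_{>0}$''.
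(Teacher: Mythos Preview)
The paper does not actually prove this theorem: it is stated as a citation of Schoenebeck, with a remark that the underlying argument coincides with Grigoriev's Tseitin lower bound and therefore produces an SOS pseudo-expectation in the sense used here. Your sketch is a reconstruction of exactly that Schoenebeck/Grigoriev argument, so you are on the intended track and there is nothing to compare at the level of strategy.

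There is, however, a genuine gap in your positivity step. You propose that for each $p$ with $\deg p\le d/2$, ``the Fourier supports of its monomials together with all constraints they activate span a subsystem on $O(d)$ constraints'', and then you borrow an honest distribution $\mu$ over the solutions of that subsystem. This fails: a polynomial of degree $d/2$ can have up to $\binom{n}{d/2}$ monomials and touch all $n$ variables, so no single small subsystem contains it, and no single $\mu$ exists. The actual mechanism is structural rather than local. Writing $p=\sum_T \hat p(T)\chi_T$, one has $\tilde{\mathbb{E}}[p^2]=\sum_{T,T'}\hat p(T)\hat p(T')\,\tilde{\mathbb{E}}[\chi_{T\oplus T'}]$. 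Expansion guarantees that whenever $|T|,|T'|\le d/2$ and $T\oplus T'=\bigoplus_{i\in I}S_i$ for some $I$, that $I$ is unique and small; hence the sets $T$ of size at most $d/2$ split into cosets of the (small-combination) $\mathbb{F}_2$-span of the $S_i$, and on each coset one has a well-defined sign $\sigma(T)=\prod_{i\in I_T}\beta_i$ with $\tilde{\mathbb{E}}[\chi_T\chi_{T'}]=\sigma(T)\sigma(T')$ for $T,T'$ in the same coset and $0$ across cosets. The moment matrix is therefore block-diagonal with rank-one PSD blocks, and
\[
\tilde{\mathbb{E}}[p^2]=\sum_{\text{cosets }K}\Bigl(\sum_{T\in K}\hat p(T)\,\sigma(T)\Bigr)^2\ge 0.
\]
This is the ``closed enough under expansion'' phenomenon you were gesturing at, but it operates on the moment matrix globally, not by fitting $p$ into one subsystem.

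Two minor points. In your 3-SAT reduction the phrase ``the indicator of $a$, which is the subspace $\chi_S=\beta$'' is garbled; you mean that $a$ is the unique falsifying assignment and lies \emph{off} the affine set $\{\chi_S=\beta\}$, whence $1-p_C$ is divisible by $\chi_S-\beta$ modulo $I_n$. With that reading the reduction is correct. And boundary expansion for random $m=cn$ instances holds for every constant $c$ (with $\eta$ depending on $c$); it is the first-moment half that forces $c$ large, not the expansion.
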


It should be noted that it is not immediately obvious, from just
reading the definitions, that the statement of Theorem~12
in~\cite{Schoenebeck2008} gives the pseudo-expectation as stated in
Theorem~\ref{thm:schoenebeck}. However, the proof of Theorem~12
in~\cite{Schoenebeck2008} is by now sufficiently well understood to
know that Theorem~\ref{thm:schoenebeck} holds true as stated. One way
of seeing this is by noting that the proof of Theorem~12 in
\cite{Schoenebeck2008} and the proof of the lower bound for the
Tseitin formulas in Corollary~1 of \cite{Grigoriev2001} are
essentially the same. In particular Theorem~12 in
\cite{Schoenebeck2008} holds true also for proving the existence of
SOS pseudo-expectations as stated in Theorem~\ref{thm:schoenebeck}.

As an immediate consequence we get:

\begin{corollary} \label{cor:3xor3sat}
There exist $c,\epsilon \in \reals_{>0}$ such that, asymptotically
almost surely as $n$ goes to infinity, if $m = \lceil{cn}\rceil$ and
$C_1,\ldots,C_m$ are random 3-XOR (resp. 3-SAT) constraints on
$x_1,\ldots,x_n$ that are chosen uniformly and independently at
random, then every SOS refutation of $p_{C_1}=1,\ldots,p_{C_m}=1$ has
monomial size at least $2^{\epsilon n}$.
\end{corollary}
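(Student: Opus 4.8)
The plan is to obtain the corollary by combining Theorem~\ref{thm:schoenebeck} with the contrapositive of the size-degree trade-off, Theorem~\ref{thm:main}. First I would fix a convenient value of $\delta$: take $\delta = 1/4$ in the 3-XOR case (so that at most a $3/4 < 1$ fraction of the constraints can be simultaneously satisfied) and $\delta = 1/16$ in the 3-SAT case, and let $c, \epsilon \in \reals_{>0}$ be the constants that Theorem~\ref{thm:schoenebeck} supplies for this $\delta$. Put $m = \lceil cn \rceil$. Asymptotically almost surely, the random constraints $C_1,\ldots,C_m$ satisfy the conclusion of Theorem~\ref{thm:schoenebeck}: the system $Q := \{p_{C_1} = 1,\ldots,p_{C_m} = 1\}$ has a degree-$\epsilon n$ SOS pseudo-expectation, and it is unsatisfiable over $\{0,1\}^n$ (no assignment satisfies all $C_i$). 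From this point the argument is deterministic, conditioned on this almost-sure event.

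The next step is to note that the pseudo-expectation rules out low-degree refutations. If $E$ is a pseudo-expectation of degree $D$ for $Q$ — so $E(1) = 1$ and $E$ is non-negative on the degree-$D$ SOS cone of $Q$ — then applying $E$ to both sides of a hypothetical SOS refutation $-1 = s_0 + \sum_j t_j p_{C_j} + \cdots$ of degree at most $D$ gives $-1 = E(-1) \ge 0$, a contradiction. Hence $Q$ has no SOS refutation of degree at most $\epsilon n$. (A small amount of care is needed to check that the object produced by Theorem~\ref{thm:schoenebeck} really is an element of the dual space $\mathscr{E}^c_{w,D}(Q)$ of Section~\ref{sec:size-degree-trade-off} for $w = 1$ and the cut-off function $c \equiv 3$; this is routine because $Q$ consists only of equations and the associated cone is generated in the standard way, and the discussion immediately following Theorem~\ref{thm:schoenebeck} in the excerpt already settles the point.)

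Now suppose toward a contradiction that $Q$ has an SOS refutation of monomial size $s$. Each $p_{C_i}$ has degree $k = 3$ since each constraint is on three variables, and since there are no inequality constraints we may take product-width $w = 1$; Theorem~\ref{thm:main} then yields an SOS refutation of $Q$ of degree at most $4\sqrt{2(n+1)\log s} + kw + 4 = 4\sqrt{2(n+1)\log s} + 7$. If $s < 2^{\epsilon' n}$ for a constant $\epsilon'$ to be chosen, then $\log s < \epsilon' n$, so this degree is less than $4\sqrt{2(n+1)\epsilon' n} + 7 = O(\sqrt{\epsilon'}\,n)$. Choosing $\epsilon'$ small enough that $32\epsilon' < \epsilon^2$ makes $4\sqrt{2(n+1)\epsilon' n} + 7 < \epsilon n$ for all large $n$, contradicting the non-existence of a degree-$\epsilon n$ refutation of $Q$. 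Therefore every SOS refutation of $Q$ has monomial size at least $2^{\epsilon' n}$; renaming $\epsilon'$ to $\epsilon$ gives the corollary, and the 3-SAT case is identical with the appropriate choice of $\delta$ and constants.

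The only step that is not a pure plug-in is reconciling Schoenebeck's formulation of an SOS pseudo-expectation with the dual cone $\mathscr{E}^c_{w,D}(Q)$ used here, i.e.\ making precise that it is genuinely an obstruction to the refutations counted by Theorem~\ref{thm:main}; I expect this to be the main (though minor) obstacle, and it requires no new ideas beyond the remarks already made after Theorem~\ref{thm:schoenebeck}. Everything else is a one-line application of Theorems~\ref{thm:schoenebeck} and~\ref{thm:main} together with an elementary choice of constants.
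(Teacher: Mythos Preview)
Your proposal is correct and follows exactly the approach the paper intends: the paper presents Corollary~\ref{cor:3xor3sat} as an immediate consequence of Theorem~\ref{thm:schoenebeck} and Theorem~\ref{thm:main}, and your argument spells out precisely that combination (pseudo-expectation blocks low-degree refutations, trade-off converts small size to low degree, contradiction). The only point you flag as needing care---that Schoenebeck's pseudo-expectation is a genuine element of $\mathscr{E}^c_{w,D}(Q)$---is exactly the point the paper addresses in the paragraph following Theorem~\ref{thm:schoenebeck}.
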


It is often stated that Theorem~\ref{thm:schoenebeck} gives optimal
integrality gaps for the approximability of MAX-3-XOR and MAX-3-SAT by
linear degree SOS. Corollary~\ref{cor:3xor3sat} is its analogue for
subexponential size SOS. There is however a subtelty in that the
validity of the integrality gap statement could depend on the encoding
of the objective function.  The next section is devoted to clarify
this.

\subsection{MAX-CSPs}

An instance $\mathscr{I}$ of the Boolean MAX-CSP problem is a sequence
$C_1,\ldots,C_m$ of constraints on $n$ Boolean variables.  We are
asked to maximize the fraction of satisfied constraints.
% : i.e.,
% \begin{equation}
%   \mathrm{opt}(I) := \textstyle{\max_{x \in \{0,1\}^n} \; 
% \frac{1}{m}\sum_{j=1}^m P_j(x|_{I_j})}. \label{eqn:maxcsp}
% \end{equation}
If $p_j$ denotes the unique multilinear polynomial on the variables of
$C_j$ that represents $C_j$, then the \emph{optimal value} for an
instance $\mathscr{I}$ can be formulated as follows::
\begin{equation}
 \mathrm{opt}(\mathscr{I}) := \textstyle{\max_{x \in \{0,1\}^n} \;\;\;
% (p_{R_1}(x|_{I_1}) + \cdots + p_{R_m}(x|_{I_m}))/m
{\frac{1}{m}} \sum_{j=1}^m p_j(x)}. 
\label{eqn:poly}
\end{equation}
We could ask for the least upper bound on~\eqref{eqn:poly} that can be
certified by an SOS proof of some given complexity $c$, i.e., monomial size at
most $s$, degree at most $2d$, etc. There are at least three
formulations of this question. Using
the notation $\vdash_c$ to denote SOS provability with complexity $c$,
the three formulations are:
\begin{align}
  & \mathrm{sos''}_c(\mathscr{I}) := \textstyle{ \inf \{ \gamma \in \reals : \;\; \vdash_c\; 
% (p_{R_1}(x|_{I_1}) + \cdots + p_{R_m}(x|_{I_m}))/m
\frac{1}{m} \sum_{j=1}^m p_j(x) \leq \gamma \} }, 
\label{eqn:direct} \\
& 
\mathrm{sos'}_c(\mathscr{I}) := \textstyle{\inf \{ \gamma \in \reals : \;\; 
\{ p_j(x) = y_j : j \in [m] \} \;\vdash_c\; 
% (p_{R_1}(x|_{I_1}) + \cdots + p_{R_m}(x|_{I_m}))/m
\frac{1}{m} \sum_{j=1}^m y_j \leq \gamma \} }, \label{eqn:withvars} \\
& 
\mathrm{sos}_c(\mathscr{I}) := \textstyle{\inf \{ \gamma \in \reals : \;\; 
\{ p_j(x) = y_j : j \in [m] \} \cup  
% (p_{R_1}(x|_{I_1}) + \cdots + p_{R_m}(x|_{I_m}))/m
\{ \frac{1}{m} \sum_{j=1}^m y_j \geq \gamma \}
\;\vdash_c\; -1 \geq 0 \}
}. \label{eqn:withvarsref}
\end{align}
The first formulation asks directly for the least upper bound on the
objective function of~\eqref{eqn:poly} that can be certified in
complexity $c$. The second formulation is similar but stronger since
it allows $m$ additional Boolean variables $y_1,\ldots,y_m$, and their
twins. The third is the strongest of the three as it asks for the
least value that can be proved impossible. In addition, unlike the
other two, the set of hypotheses in~\eqref{eqn:withvarsref} mixes
equations and inequality constraints.  It should be obvious that (for
natural complexity measures) we have $\mathrm{sos}_c(\mathscr{I}) \leq
\mathrm{sos'}_c(\mathscr{I}) \leq \mathrm{sos''}_c(\mathscr{I})$ so
lower bounds on sos$_c$ imply lower bounds for the other two.

Theorem~\ref{thm:schoenebeck} gives, by itself, optimal integrality
gaps for MAX-3-XOR and MAX-3-SAT for linear degree SOS in the
$\mathrm{sos''}_c$ formulation, when $c$ denotes SOS-degree. However,
the degree lower bound that follows from this formulation does not let
us apply our main theorem; the statement is not about refutations, it
is about proving an inequality, so Theorem~\ref{thm:main} does not
apply. In the following we argue that Theorem~\ref{thm:schoenebeck}
also gives optimal integrality gaps in the $\mathrm{sos'}_c$ and
$\mathrm{sos}_c$ formulations of the problems. Since the
$\mathrm{sos}_c$ formulation \emph{is} about refutations, our main
theorem will apply.

We write $\alpha_c(\mathscr{I})$ for the supremum of 
the $\alpha \in [0,1]$ for which
\begin{equation}
  \alpha\cdot\mathrm{sos}_c(\mathscr{I}) \leq \mathrm{opt}(\mathscr{I}) \leq \mathrm{sos}_c(\mathscr{I}) \label{eqn:approximation}
\end{equation}
holds. If $\mathscr{C}$ is a class of instances, then we write
$\alpha^*_c(\mathscr{C}) := \inf \{ \alpha_c(\mathscr{I}) :
\mathscr{I} \in \mathscr{C} \}$; the \emph{sos$_c$-approximation
  factor} for $\mathscr{C}$. It is our goal to show that
Theorem~\ref{thm:schoenebeck} implies that, for SOS proofs of
sublinear degree, the sos$_c$-approximation factor of MAX-3-XOR is at
most $1/2$, and that of MAX-3-SAT is at most $7/8$. These are
optimal. This will follow from Theorem~\ref{thm:schoenebeck} and the
following general fact about pseudo-expectations that (pseudo-)satisfy
all the constraints:

\begin{lemma}
  Let $\mathscr{I}$ be a MAX-CSP instance with $n$ Boolean variables
  and $m$ constraints of arity at most $k$,
  represented by multilinear polynomials $p_1,\ldots,p_m$, and let~$Q
  = \{ p_j(x) = 1 : j \in [m] \}$ and $Q' = \{ p_j(x) = y_j : j \in
  [m] \} \cup \{ \frac{1}{m} \sum_{j=1}^m y_j \geq 1 \}$.  If there is
  a degree-$2dk$ SOS pseudo-expectation $E$ for $Q$, then there is a
  degree-$2d$ SOS pseudo-expectation~$E'$ for $Q'$.
\end{lemma}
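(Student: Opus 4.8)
The plan is to construct $E'$ from $E$ by, roughly speaking, using $E$ to evaluate the $x$-part of a polynomial and then substituting $p_j(x)$ for $y_j$. Concretely, given a polynomial $r(x,y,\bar y)$ of degree at most $2d$ in the variables $x,\bar x,y,\bar y$, I would first reduce it modulo the Boolean ideal on the $y$-variables so that each $y_j$ and $\bar y_j$ occurs to degree at most one; then replace every occurrence of $y_j$ by $p_j(x)$ and every occurrence of $\bar y_j$ by $1-p_j(x)$. Since each $p_j$ is multilinear of degree at most $k$, and the original degree in the $y$'s was at most $2d$ (hence at most $2d$ distinct $y$-variables appear in any monomial after multilinearization), the resulting polynomial $\hat r(x,\bar x)$ has degree at most $2dk$. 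Define $E'(r) := E(\hat r)$. Linearity of $E'$ is inherited from linearity of $E$ together with linearity of the substitution map $r \mapsto \hat r$ modulo the $y$-Boolean ideal, and $E'(1) = E(1) = 1$.

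The substantive step is to check that $E'$ is non-negative on $\mathrm{PS}$-provably non-negative polynomials for $Q'$ of the appropriate degree; equivalently, that the substitution map sends an SOS proof from $Q'$ to an SOS proof from $Q$ of degree at most $2dk$. So suppose $q(x,y,\bar y)$ has an SOS proof from $Q'$ of degree at most $2d$: write
\begin{equation*}
q = s_0 + s_1\Big(\tfrac{1}{m}\textstyle\sum_j y_j - 1\Big) + \sum_{j\in[m]} t_j\,(p_j(x)-y_j) + \sum_{q'\in B} u_{q'}\,q',
\end{equation*}
where $B$ is the Boolean-axiom set on all the $x$- and $y$-variables, $s_0,s_1$ are sums of squares, and the degree bounds of an SOS proof of degree $2d$ hold. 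Apply the substitution $y_j \mapsto p_j(x)$, $\bar y_j \mapsto 1-p_j(x)$ to the whole identity. The key observations are: (1) the equality axioms $p_j(x)-y_j$ map to $0$; (2) the $y$-Boolean axioms $y_j^2-y_j$, $\bar y_j^2-\bar y_j$, $y_j+\bar y_j-1$ map to $p_j(x)^2-p_j(x)$, $(1-p_j)^2-(1-p_j)$, and $0$ respectively, and since each $p_j$ is a $\{0,1\}$-valued multilinear polynomial we have $p_j^2 - p_j \equiv 0 \bmod I_n$, so these terms are absorbed into the $x$-Boolean-ideal part; (3) the inequality axiom $\frac1m\sum_j y_j - 1$ maps to $\frac1m\sum_j p_j(x) - 1 = \mathrm{opt}$-type polynomial, which is exactly derivable from $Q = \{p_j(x)=1\}$ as $\frac1m\sum_j(p_j(x)-1)$, i.e. it lies in the equality-ideal part for $Q$; and (4) the image of $s_0$ (and of $s_1$ times the image of the inequality axiom) is still a sum of squares of polynomials in $x,\bar x$. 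After regrouping, and using $p_j^2\equiv p_j$ to multilinearize back down where needed, the image is an SOS proof of $\hat q \geq 0$ from $Q$, and its degree is at most $2dk$ because every $y$-monomial of degree $e$ becomes an $x$-polynomial of degree at most $ek$ and the original degrees were bounded by $2d$. Hence $E'(q) = E(\hat q) \geq 0$, so $E' \in \mathscr{E}^{\,\cdot}_{\,\cdot,2d}(Q')$.

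The main obstacle I anticipate is bookkeeping the degree bounds cleanly: I must make sure that after substituting $y_j\mapsto p_j$ the products $s_1\cdot(\text{image of the inequality axiom})$ and $t_j\cdot 0$ and the various Boolean-axiom liftings all still fit under $2dk$, and in particular that multilinearizing the $y$'s first (so each $y$-monomial has degree at most the number of distinct $y$-variables it mentions, which is at most $2d$) is legitimate — this is where the hypothesis that $E$ has degree $2dk$ rather than merely $2d$ is consumed. A secondary point to handle carefully is that the substitution must be applied to an SOS proof whose degree is measured in the liberal sense (ignoring the $u_{q'}$ Boolean-axiom multipliers), which is fine since the target notion of pseudo-expectation degree is likewise the liberal one, and Lemma~\ref{lem:equality}-style reasoning lets us discard the $y$-Boolean contributions into $I_n$ without degree cost beyond $2dk$.
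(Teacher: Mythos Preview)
Your approach is correct and essentially the same as the paper's: define $E'(r) := E(r[\sigma])$ where $\sigma$ is the substitution $y_j \mapsto p_j(x)$, $\bar y_j \mapsto 1-p_j(x)$, and verify that each ingredient of a degree-$2d$ SOS proof from $Q'$ maps to an ingredient of a degree-$2dk$ SOS proof from $Q$. The paper's write-up is terser; in particular it skips the preliminary multilinearization in the $y$-variables that you perform, since applying $\sigma$ directly to any degree-$2d$ polynomial already yields degree at most $2dk$ (each variable is replaced by something of degree at most $k$), so that step is harmless but unnecessary. Your explicit handling of the $y$-Boolean axioms via $p_j^2 \equiv p_j \bmod I_n$ is a detail the paper leaves implicit.
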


\begin{proof}
  Let $\sigma$ be the substitution that sends $y_j$ to $p_j(x)$ and
  $\bar{y}_j$ to $1-p_j(x)$ for $j = 1,\ldots,m$.  For each polynomial
  $p$ on the $x$ and $y$ variables, define $E'(p) := E(p[\sigma])$,
  where $p[\sigma]$ denotes the result applying the substitution to
  $p$. The proof that this works relies on the fact that if $p$ and
  $q$ are polynomial in the $x$ and $y$ variables, then $(pq)[\sigma]
  = p[\sigma] q[\sigma]$, and $\deg((pq)[\sigma]) \leq
  \deg(p[\sigma]q[\sigma]) \leq 2k(\deg(p)+\deg(q))$. In particular,
  squares maps to squares by the substitution. It is obvious that each
  equation $p_j(x) = y_j$ lifts: $E'(t(p_j(x)-y_j)) =
  E(t[\sigma](p_j(x)-p_j(x))) = E(0) = 0$.  It is equaly obvious that
  the inequality $\frac{1}{m}\sum_{j=1}^m y_j - 1 \geq 0$ lifts:
  $E'({s({\frac{1}{m}\sum_{j=1}^m y_j - 1})}) = \frac{1}{m}
  \sum_{j=1}^m E(s[\sigma] (p_j(x) - 1)) \geq 0$.  This completes the
  proof of the lemma.
\end{proof}

% The CNF representation of $C$ has
% $|R|+1$ additional variables $\{y_a : a \in R\}$ and $y$, in addition
% to the variables $x_{i_1},\ldots,x_{i_k}$, and the following set of
% clauses:
% \begin{center}
% \begin{tabular}{lll}
% $y_a \vee \bigvee_{j:a_j=1} \bar{x}_{i_j} \vee \bigvee_{j:a_j=0}
%   x_{i_j}$ & \qquad & for $a \in R$, \\ $\bar{y}_a \vee
%   \bar{x}_{i_j}$ & & for $ a \in R, j \in [k], a_j = 0$,
%   \\ $\bar{y}_a \vee x_{i_j}$ & & for $ a \in R, j \in [k],
%   a_j = 1$, \\ $y \vee \bar{y}_a$ & & for $a \in R$,
%   \\ $\bar{y} \vee \bigvee_{a \in R} y_a$. & &
% \end{tabular}
% \end{center}
% \noindent
% Let $C_1,\ldots,C_m$ be a set of $k$-ary constraints represented by
% their indicator predicate. In other words, if $C_i$ represents the
% constraint $R(x_{i_1},\ldots,x_{i_k})$, then $C_i : \{0,1\}^n
% \rightarrow \{0,1\}$ is defined by $C_i(x) = 1$ if
% $(x_{i_1},\ldots,x_{i_k}) \in R$ and $C_i(x) = 0$ otherwise.
% 
% \begin{align*}
% \mathrm{opt}(I) & := \textstyle{\max_{x \in \{0,1\}^n}} (C_1(x) +
% \cdots + C_m(x))/m \\ \mathrm{sos}_{2d}(I) & :=
% \textstyle{\inf_{t\in\reals}\; t\; \text{ s.t. }\; \mathrm{CNF}(C_1),
% \ldots,\mathrm{CNF}(C_m) \vdash_{2d} (y_1+\cdots+y_m)/m \leq t.}
% \\
% \end{align*}

Combining this with Theorem~\ref{thm:schoenebeck} and Theorem~\ref{thm:main}
we get:

\begin{corollary} \label{cor:maxcsp}
For every $\delta \in \reals_{>0}$, there exist $r,\epsilon \in
\reals_{>0}$ such that if $c$ denotes SOS monomial size at most
$2^{\epsilon n}$, where $n$ is the number of variables, then
$\alpha^*_c(\text{MAX-3-XOR}) \leq 1/2+\delta$ (resp.
$\alpha^*_c(\text{MAX-3-SAT}) \leq 7/8+\delta$), and the gap is
witnessed by an instance $\mathscr{I}$ with $m = \lceil{rn}\rceil$
many uniformly and independently chosen random constraints, for which
$\mathrm{sos}_c(\mathscr{I}) = 1$ and $\mathrm{opt}(\mathscr{I}) \leq
1/2+\delta$ (resp. $\mathrm{opt}(\mathscr{I}) \leq 7/8+\delta$),
asymptotically almost surely as $n$ goes to infinity.
\end{corollary}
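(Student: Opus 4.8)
\emph{Proof idea.} The plan is to run the argument through the refutational value $\mathrm{sos}_c$ of~\eqref{eqn:withvarsref}, the only one of the three values in~\eqref{eqn:direct}--\eqref{eqn:withvarsref} that Theorem~\ref{thm:main} can act on, and to exhibit a random instance $\mathscr{I}$ with $\mathrm{sos}_c(\mathscr{I})=1$ and $\mathrm{opt}(\mathscr{I})\le 1/2+\delta$ (resp.\ $7/8+\delta$). The statement is monotone in $\delta$, so we may assume $\delta<1/2$ (resp.\ $\delta<1/8$). Fix such a $\delta$, let $c_0,\epsilon_0\in\reals_{>0}$ be the constants from Theorem~\ref{thm:schoenebeck}, and set $r:=c_0$ and $m:=\lceil rn\rceil$. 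For the random 3-XOR (resp.\ 3-SAT) instance $\mathscr{I}=(C_1,\dots,C_m)$, Theorem~\ref{thm:schoenebeck} guarantees, asymptotically almost surely, both $\mathrm{opt}(\mathscr{I})\le 1/2+\delta$ (resp.\ $7/8+\delta$) and the existence of a degree-$\epsilon_0 n$ SOS pseudo-expectation for $Q:=\{p_{C_j}(x)=1:j\in[m]\}$. I work on this event; the constant $\epsilon>0$ in the complexity measure $c$ (SOS monomial size at most $2^{\epsilon n}$) is to be fixed at the very end.

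The bound $\mathrm{sos}_c(\mathscr{I})\le 1$ is elementary and holds deterministically at complexity far below $c$: for any $\gamma>1$, reducing modulo the Boolean ideal gives $\frac{1}{m}\sum_j y_j\equiv 1-\frac{1}{m}\sum_j\bar{y}_j^2$, so the single inequality $\frac{1}{m}\sum_j y_j-\gamma\ge 0$ together with the sum of squares $\frac{1}{m}\sum_j\bar{y}_j^2$ yields $1-\gamma\ge 0$ in degree $2$ and $O(m)$ monomials; scaling by the positive constant $1/(\gamma-1)$ turns this into a refutation of $-1\ge 0$ of the same complexity, so $\mathrm{sos}_c(\mathscr{I})\le\gamma$ for every $\gamma>1$, i.e.\ $\mathrm{sos}_c(\mathscr{I})\le 1$.

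For the matching bound $\mathrm{sos}_c(\mathscr{I})\ge 1$, I first apply the preceding lemma with arity parameter $k=3$ to the degree-$\epsilon_0 n$ pseudo-expectation for $Q$, obtaining a degree-$2d$ SOS pseudo-expectation $E'$ for $Q':=\{p_{C_j}(x)=y_j:j\in[m]\}\cup\{\frac{1}{m}\sum_j y_j\ge 1\}$, where $2d$ is the largest even integer at most $\epsilon_0 n/3$, so that $2dk\le\epsilon_0 n$ and $2d=\Theta(n)$. I then note that $E'$ is also a pseudo-expectation for $Q'_\gamma:=\{p_{C_j}(x)=y_j:j\in[m]\}\cup\{\frac{1}{m}\sum_j y_j\ge\gamma\}$ whenever $\gamma\le 1$: in a degree-$2d$ SOS proof from $Q'_\gamma$ the sum-of-squares multiplier $s$ on the degree-one inequality has degree at most $2d-1$, and $s\cdot\big(\frac{1}{m}\sum_j y_j-\gamma\big)=s\cdot\big(\frac{1}{m}\sum_j y_j-1\big)+(1-\gamma)s$, on which $E'$ is non-negative because the first summand lifts a constraint of $Q'$ and the second equals $(1-\gamma)E'(s)\ge 0$. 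Consequently no SOS refutation of $Q'_\gamma$ of degree at most $2d$ exists for any $\gamma\le 1$, since such a refutation would force $E'(-1)\ge 0$.

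Finally I upgrade this degree lower bound to a size lower bound via Theorem~\ref{thm:main}. Applied with product width $w=1$, constraint degree $k=3$, and $N:=n+m$ pairs of twin variables, that theorem converts any SOS refutation of $Q'_\gamma$ of monomial size at most $s$ into one of degree at most $4\sqrt{2(N+1)\log s}+7$. Since $N=\Theta(n)$ and $2d=\Theta(n)$ with implied constants depending only on $\delta$ (through $\epsilon_0$ and $r$), I may now fix $\epsilon>0$ small enough that $4\sqrt{2(N+1)\epsilon n}+7\le 2d$ for all large $n$; then an SOS refutation of $Q'_\gamma$ of monomial size at most $2^{\epsilon n}$ would produce one of degree at most $2d$, contradicting the previous paragraph. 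Hence $\mathrm{sos}_c(\mathscr{I})\ge 1$, and with the easy half $\mathrm{sos}_c(\mathscr{I})=1$. On our event we then have $\mathrm{opt}(\mathscr{I})\le 1/2+\delta<1=\mathrm{sos}_c(\mathscr{I})$, so inequality~\eqref{eqn:approximation} holds precisely for $\alpha\le\mathrm{opt}(\mathscr{I})$, giving $\alpha_c(\mathscr{I})=\mathrm{opt}(\mathscr{I})\le 1/2+\delta$ and hence $\alpha^*_c(\text{MAX-3-XOR})\le 1/2+\delta$; the 3-SAT case is identical with $7/8$ in place of $1/2$. The step I expect to be the main obstacle is this last calibration of $\epsilon$: one must be sure the linear-in-$n$ degree $2d$ of the transferred pseudo-expectation strictly dominates the $O(\sqrt{n\log s})=O(\sqrt{\epsilon}\,n)$ degree that Theorem~\ref{thm:main} returns at size $s=2^{\epsilon n}$, which works precisely because $\epsilon$ is chosen after $\delta$ (hence $\epsilon_0$ and $r$) has been fixed; note also that the bounded-degree $E'$ alone rules out only low-degree refutations, and it is exactly Theorem~\ref{thm:main} that eliminates the low-size, high-degree ones.
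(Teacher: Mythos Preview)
Your proposal is correct and follows essentially the same approach the paper intends: combine Theorem~\ref{thm:schoenebeck} with the preceding lemma to get a linear-degree pseudo-expectation for $Q'$, then invoke Theorem~\ref{thm:main} to upgrade the degree lower bound to a monomial-size lower bound. The paper leaves all of this implicit in one sentence; you have supplied the missing details, including the easy upper bound $\mathrm{sos}_c(\mathscr{I})\le 1$ and the observation that the pseudo-expectation for $Q'$ is automatically one for $Q'_\gamma$ whenever $\gamma\le 1$, both of which are needed for a complete argument but are not spelled out in the paper.
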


\section{Duality} \label{sec:duality}

In this section we finally prove the stated Duality Theorem for PS in a more general setting.
We start by recalling some basic facts about ordered vector spaces from \cite{PaulsenTomforde2009}. 
We prove the results for pre-ordered vector spaces rather than ordered ones since the polynomial 
spaces we will apply the results to carry a natural pre-order. 

\subsection{Vector spaces with order unit}

A pre-ordered vector space is a pair $\langle V,\leq\rangle$, where $V$ is a real vector space and $\leq$ is a pre-order that respects vector addition and multiplication by a non-negative scalar, i.e. the following hold for all $p,q,p_1,p_2,q_1,q_2 \in V$ and $a \in \reals_{\geq 0}$:
	\begin{itemize} \itemsep=0pt
		\item[(i)] $p_1\leq q_1$ and $p_2\leq q_2$ only if $p_1 + p_2 \leq q_1 + q_2$;
		\item[(ii)] $p\leq q$ only if $ap\leq aq$.
	\end{itemize}
Pre-ordered vector spaces arise naturally from convex cones of real
vector spaces. If $C\subseteq V$ is a convex cone, then the relation
defined by $p\leq_Cq$ if $q-p\in C$ satisfies the above requirements.
An element $e\in V$ is an \emph{order unit} for $\langle
V,\leq\rangle$ if for any $p\in V$ there is some~$r\in\reals_{\geq 0}$ such
that $re \geq p$. 

For the rest of this section let $\langle V,\leq \rangle$ be a
pre-ordered vector space with an order unit $e$.

\begin{lemma}\label{basic-properties}
	The following hold.
		\begin{itemize} \itemsep=0pt
			\item[(i)] $e\geq 0$;
			\item[(ii)] For every $p \in V$ and $r_1,r_2
                          \in \reals$ with $r_1 \leq r_2$, if
                          $r_1e\geq p$, then $r_2e\geq p$.
			\item[(iii)] For every $p\in V$ there is
                          $r\in\reals_{\geq 0}$ such that $re\geq p\geq -re$;
			\item[(iv)] If $-e\geq 0$, then $p \geq 0$ for
                          every $p\in V$.
		\end{itemize}
\end{lemma}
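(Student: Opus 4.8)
The plan is to derive the four items in order, each directly from the two compatibility axioms (i)--(ii) in the definition of a pre-ordered vector space, the order-unit property of $e$, and transitivity of the pre-order $\leq$; none of the items requires anything beyond elementary manipulation. First I would record one auxiliary observation to be used repeatedly: $\leq$ is compatible with negation, i.e.\ if $a \leq b$ then adding $-a-b$ to both sides via axiom (i) gives $-b \leq -a$; likewise, adding a fixed element to both sides of an inequality preserves it.

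For (i) I would apply the order-unit property to $p = -e$ to obtain $r \in \reals_{\geq 0}$ with $re \geq -e$; adding $e$ to both sides yields $(r+1)e \geq 0$, and multiplying by the non-negative scalar $1/(r+1)$ using axiom (ii) gives $e \geq 0$. For (ii), from $r_1 \leq r_2$ and the just-proved $e \geq 0$, axiom (ii) gives $(r_2-r_1)e \geq 0$; adding the hypothesis $r_1 e \geq p$ via axiom (i) yields $r_2 e \geq p$. For (iii), the order-unit property applied to $p$ and to $-p$ gives $r_1,r_2 \in \reals_{\geq 0}$ with $r_1 e \geq p$ and $r_2 e \geq -p$; setting $r = \max\{r_1,r_2\}$, item (ii) gives $re \geq p$, while negating $r_2 e \geq -p$ gives $p \geq -r_2 e$, and since $r \geq r_2$ we have $r_2 e \leq re$ (axiom (ii) on $r-r_2 \geq 0$, then add $r_2 e$), hence $-re \leq -r_2 e \leq p$, so $p \geq -re$ by transitivity. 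For (iv), by (iii) pick $r \in \reals_{\geq 0}$ with $p \geq -re = r(-e)$; since $-e \geq 0$ by hypothesis, axiom (ii) gives $r(-e) \geq 0$, and transitivity yields $p \geq 0$.

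The only point needing care is the bookkeeping around signs of scalars: every application of axiom (ii) must use a non-negative multiplier, which is why (i) multiplies by $1/(r+1)$ rather than by something that could be negative, and why (iii) takes a $\max$ rather than, say, a difference; and the sign-flip and chaining steps must be justified by addition-compatibility and transitivity of $\leq$, not by any ordered-field structure, since $V$ carries no multiplication of its own. I do not anticipate any genuine obstacle beyond this; the lemma is a warm-up collecting the facts about order units that the zero-gap duality argument of Section~\ref{sec:duality} will invoke.
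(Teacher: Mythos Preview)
Your proof is correct and follows essentially the same route as the paper's: the same order-unit applications, the same $\max\{r_1,r_2\}$ trick for (iii), and the same use of $-e\geq 0$ together with scalar multiplication for (iv). The only cosmetic difference is that the paper's (iv) applies the order-unit to $-p$ directly rather than invoking (iii), but the argument is otherwise identical.
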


\begin{proof}
(i) There is some $r\in\reals_{\geq 0}$ such that $re\geq -e$, i.e. $(r+1)e\geq 0$, and so $e\geq 0$. (ii)~Now~$r_2-r_1\geq 0$ and so $(r_2-r_1)e\geq 0$. Thus $(r_2-r_1)e + r_1e\geq p$, i.e. $r_2 e\geq p$.
(iii)~Let~$r_1$ be such that $r_1e\geq p$ and let $r_2$ be such that $r_2e\geq -p$, and let $r = \max\{r_1,r_2\}$. Now~$re\geq p\geq -re.$
(iv) Suppose $-e\geq 0$ and let $r\in\reals_{\geq 0}$ be such that $re\geq -p$. Now also~$-re\geq 0$ and so $0 \geq -p$, i.e. $p\geq 0$.
\end{proof}

Let $U$ be a subspace of $V$.  A linear functional $L : U \rightarrow
\reals$ is \emph{positive} if $u \geq 0$ implies~$L(u) \geq 0$ for all $u \in U$. Equivalently, $L$ is
positive if it is order-preserving, i.e., if~$u \leq v$
implies $L(u) \leq L(v)$ for all $u,v \in U$. A positive linear 
functional $L$ on $V$ is a \emph{pseudo-expectation}
if $L(e) = 1$. We denote the set of all pseudo-expectations of $V$ by
$\mathscr{E}(V)$.

Suppose $U$ contains the order unit and let $p\in V$. By
Lemma~\ref{basic-properties}.\emph{(iii)} the following two sets are
non-empty:
	\begin{align*}
	p\downarrow U & = \{v\in U\colon p\geq v\},\\
	p\uparrow U & = \{v\in U\colon v\geq p\}.
	\end{align*}
If $L$ is any positive linear functional that is defined on $U$,
then $d_p^L = \sup\{ L(v)\colon v\in p\downarrow U\}$ and $u_p^L
= \inf\{L(v)\colon v\in p\uparrow U\}$ are real numbers and $d_p^L\leq
u_p^L$. Note also that if~$p\in U$, then~$d_p^L = L(p) = u_p^L$.

\begin{lemma}\label{key-lemma}
Let $U$ be a subspace of $V$ containing the order unit $e$, and let
$L$ be a positive linear functional on $U$. Then for any $p\in
V\setminus U$ and for any $\gamma \in \reals$ satisfying $d_p^L\leq
\gamma\leq u_p^L$ there is a positive linear functional $L'$ that is
defined on $\mathrm{span}(\{p\}\cup U)$, that extends $L$, and such
that $L'(p) = \gamma$.
\end{lemma}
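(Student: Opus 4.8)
The plan is to prove this as a one-dimensional Hahn--Banach-style extension, which is the standard move for such results but requires care because we are in a merely pre-ordered space and need the extension to stay positive. First I would fix $p \in V \setminus U$ and $\gamma$ with $d_p^L \leq \gamma \leq u_p^L$, and define $L'$ on $\mathrm{span}(\{p\} \cup U)$ by setting $L'(ap + u) = a\gamma + L(u)$ for $a \in \reals$ and $u \in U$. Since $p \notin U$, every element of the span has a unique such representation, so $L'$ is well-defined and linear, extends $L$ (take $a=0$), and satisfies $L'(p) = \gamma$. The only thing left is to check that $L'$ is positive on the span.

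So the core of the argument is: whenever $ap + u \geq 0$ with $u \in U$, show $a\gamma + L(u) \geq 0$. I would split into three cases according to the sign of $a$. If $a = 0$ this is just positivity of $L$ on $U$. If $a > 0$, then $ap + u \geq 0$ gives $p \geq -u/a$, so $-u/a \in p\downarrow U$, hence $L(-u/a) \leq d_p^L \leq \gamma$; multiplying through by $a > 0$ (using property (ii) of the pre-order implicitly, or just that $L$ is real-valued linear) yields $-L(u) \leq a\gamma$, i.e. $a\gamma + L(u) \geq 0$. If $a < 0$, write $a = -b$ with $b > 0$; then $ap + u \geq 0$ gives $u \geq bp$, i.e. $u/b \geq p$, so $u/b \in p\uparrow U$, hence $\gamma \leq u_p^L \leq L(u/b) = L(u)/b$; multiplying by $b > 0$ gives $b\gamma \leq L(u)$, i.e. $-a\gamma \leq L(u)$, so again $a\gamma + L(u) \geq 0$. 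This exhausts all cases, so $L'$ is positive.

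The main obstacle — really the only subtle point — is making sure the division by the scalars $a$ and $b$ is legitimate in the pre-ordered setting and that the sets $p\downarrow U$ and $p\uparrow U$ are being used with the correct orientation; here the hypothesis that $U$ contains the order unit $e$ is what guarantees (via Lemma~\ref{basic-properties}.\emph{(iii)}) that these sets are nonempty and hence that $d_p^L$ and $u_p^L$ are finite reals, so the squeeze $d_p^L \leq \gamma \leq u_p^L$ is meaningful. One should also note that property (ii) of a pre-ordered vector space lets us scale the inequalities $p \geq -u/a$ and $u/b \geq p$ freely by positive reals, but in fact we never need to scale the order relation itself — we only scale real numbers after applying $L$ — so positivity of $L$ together with its linearity suffices. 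I expect the write-up to be short: define $L'$, verify well-definedness and linearity in one line, then dispatch the three sign cases for positivity.
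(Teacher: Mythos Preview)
Your proposal is correct and follows essentially the same argument as the paper: define $L'(ap+u)=a\gamma+L(u)$ using uniqueness of the representation, then verify positivity by the same three-case split on the sign of $a$, using $p\downarrow U$ and $p\uparrow U$ exactly as you do. The paper's write-up is slightly terser but the logic is identical.
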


\begin{proof}
Every element of $\mathrm{span}(\{p\}\cup U)$ can be written uniquely in form $ap + v$, where $a\in\reals$ and $v\in U$. Define $L'$ by 
$$L'(ap + v) = a\gamma + L(v).$$ It is easy to check that $L'$ is
linear map. We show that $L'$ is positive by considering a few cases.

Case (i) $a=0$. If $ap + v\geq 0$ and $a = 0$, then $v\geq 0$ and $L'(ap + v) = L(v)\geq 0$.
Case~(ii)~$a > 0$. Suppose that $ap + v\geq 0$ and $a > 0$. Then $p\geq
-(v/a)$, and so $L(-(v/a))\leq \gamma$, i.e. $0\leq a\gamma + L(v)$.
Case (iii) $a < 0$. Suppose that $ap + v\geq 0$ and $a < 0$. Then $-a > 0$,
and so $-(v/a)\geq p$. Hence $\gamma\leq L(-(v/a))$, and so $0\leq a\gamma
+ L(v)$.
\end{proof}

Now we can prove the general duality theorem for pre-ordered
vector spaces that admit an order unit. For a more general version of
this result, see \cite{PaulsenTomforde2009}.

\begin{theorem}\label{duality-theorem}
For any $p\in V$ it holds that
$$\sup\{r\in \reals\colon p\geq re\} = \inf\{E(p)\colon
E\in\mathscr{E}(V)\}.$$ Moreover, if the set $\mathscr{E}(V)$ is non-empty,
then there is a pseudo-expectation achieving the infimum,
i.e., $\min\{E(p)\colon E\in\mathscr{E}(V)\}$ is well-defined.
\end{theorem}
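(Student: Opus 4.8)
The plan is to prove the two inequalities separately, and then handle the attainment claim. Write $\rho := \sup\{r \in \reals : p \geq re\}$ and $\iota := \inf\{E(p) : E \in \mathscr{E}(V)\}$. The easy direction is $\rho \leq \iota$: if $p \geq re$ and $E \in \mathscr{E}(V)$, then positivity of $E$ gives $E(p) \geq E(re) = rE(e) = r$, so every $r$ competing in the supremum is a lower bound for $E(p)$, hence $\rho \leq E(p)$ for all $E$, hence $\rho \leq \iota$. Note $\rho$ is a well-defined real number: by Lemma~\ref{basic-properties}.\emph{(iii)} there is $r$ with $p \geq -re$, so the set is non-empty, and if it were unbounded above we would need $p \geq re$ for arbitrarily large $r$, which combined with Lemma~\ref{basic-properties}.\emph{(iii)} (pick $r_0$ with $r_0 e \geq p$) and transitivity forces $r e \geq p \geq r e$ style contradictions once $r > r_0$ — more carefully, $p \geq re$ and $r_0 e \geq p$ give $r_0 e \geq re$, i.e. $(r_0 - r)e \geq 0$; if $-e \geq 0$ were false this still need not be contradictory, so instead one argues: were $\rho = +\infty$, then for the functional we build below $\iota$ would have to be $+\infty$ too, which is impossible once $\mathscr{E}(V) \neq \emptyset$; and if $\mathscr{E}(V) = \emptyset$ we must separately rule out $\rho = +\infty$, which happens exactly when $-e \geq 0$, and in that case by Lemma~\ref{basic-properties}.\emph{(iv)} every element is $\geq 0$, the only positive functionals with $L(e)=1$ would need $1 = L(e) \geq 0$ which is fine, so actually $\mathscr{E}(V)$ could be non-empty — I will treat the degenerate case $-e \geq 0$ on its own at the end, where both sides are naturally $+\infty$ under the convention $\inf \emptyset = +\infty$, and otherwise assume $-e \not\geq 0$.

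For the reverse inequality $\iota \leq \rho$, I build a pseudo-expectation $E$ with $E(p) = \rho$ (this simultaneously proves $\iota \leq \rho$ and the attainment claim, provided $\mathscr{E}(V) \neq \emptyset$ to begin with — but the construction will in fact produce such an $E$ from scratch). Start on the one-dimensional subspace $U_0 = \mathrm{span}\{e\}$ and define $L_0(re) = r$; this is positive on $U_0$ since, assuming $-e \not\geq 0$, one checks $re \geq 0$ forces $r \geq 0$ (if $r < 0$ then $re \geq 0$ gives $-e = (1/(-r))\cdot(-re) \cdot(-r)$... more cleanly: $re\ge 0$ and $r<0$ yield $-e \ge 0$ after scaling by the positive scalar $-1/r$, contradiction). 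Now extend $L_0$ to all of $V$ by transfinite recursion / Zorn's lemma: the poset of positive linear functionals on subspaces containing $e$ and extending $L_0$, ordered by extension, has the property that chains have upper bounds (take unions), so it has a maximal element $L$ defined on some subspace $U \ni e$. If $U \neq V$, pick $p' \in V \setminus U$; by Lemma~\ref{key-lemma} (applied with any $\gamma$ in the non-empty interval $[d_{p'}^L, u_{p'}^L]$, which is non-empty since $d_{p'}^L \leq u_{p'}^L$) we extend $L$ positively to $\mathrm{span}(\{p'\} \cup U)$, contradicting maximality. Hence $U = V$. Then $E := L$ is a positive linear functional on $V$ with $E(e) = L_0(e) = 1$, so $E \in \mathscr{E}(V)$; in particular $\mathscr{E}(V) \neq \emptyset$ always (when $-e \not\geq 0$).

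It remains to arrange $E(p) = \rho$. Do this by choosing the first extension step carefully: before running Zorn on all of $V$, first extend $L_0$ to $U_1 := \mathrm{span}(\{p\} \cup \{e\})$. On $U_1$, the relevant quantities are $d_p^{L_0} = \sup\{L_0(re) : p \geq re\} = \sup\{r : p \geq re\} = \rho$ and $u_p^{L_0} = \inf\{r : re \geq p\}$; by Lemma~\ref{basic-properties}.\emph{(iii)} the set $p \uparrow U_1$ is non-empty so $u_p^{L_0}$ is a real number and $d_p^{L_0} \leq u_p^{L_0}$, i.e. $\rho \leq u_p^{L_0}$. Apply Lemma~\ref{key-lemma} with $\gamma = \rho$ (which lies in $[d_p^{L_0}, u_p^{L_0}]$) to get a positive $L_1$ on $U_1$ with $L_1(p) = \rho$. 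Then run the Zorn argument above starting from $L_1$ instead of $L_0$; the maximal extension $E$ still satisfies $E(e) = 1$ and now also $E(p) = L_1(p) = \rho$. Therefore $\iota \leq E(p) = \rho$, which together with $\rho \leq \iota$ gives equality, and $E$ witnesses that the infimum is attained. Finally, in the excluded degenerate case $-e \geq 0$: by Lemma~\ref{basic-properties}.\emph{(iv)} every $p \in V$ satisfies $p \geq 0$, hence $p \geq re$ for every $r \leq 0$ and also, scaling, $p \geq re$ for every real $r$ (since $re \geq 0$ when $r \geq 0$ follows from... actually $re$ for $r\ge 0$ is $\ge 0$, and we need $p \ge re$; using $-e \ge 0$, $re \le 0$... hmm) — the cleanest statement is that $\rho = +\infty$ precisely when $p - re \geq 0$ for all $r$, and one verifies this forces, and is forced by, $-e \geq 0$; in that situation $\mathscr{E}(V) = \emptyset$ because $E(e) = 1$ but $E(-e) \geq 0$ gives $-1 \geq 0$, so $\inf\{E(p) : E \in \mathscr{E}(V)\} = \inf \emptyset = +\infty = \rho$ and the identity holds trivially, with no attainment claim to check. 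The main obstacle is bookkeeping around this degenerate case and confirming $d_p^{L_0} = \rho$ exactly; the core of the argument is the routine Zorn/Hahn–Banach-style extension powered by Lemma~\ref{key-lemma}.
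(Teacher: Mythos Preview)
Your approach is essentially the same as the paper's: both define $L_0$ on $\mathrm{span}\{e\}$, use Lemma~\ref{key-lemma} to extend first to $\mathrm{span}\{e,p\}$ with value $d_p^{L_0}=\rho$, then extend maximally by Zorn's lemma together with Lemma~\ref{key-lemma}, treating the degenerate case $-e\geq 0$ separately (where $\mathscr{E}(V)=\emptyset$ and both sides equal $+\infty$). The argument is correct; just tidy the exposition around the finiteness of $\rho$ and the degenerate case, and note, as the paper does, that when $p\in U_0$ no first extension is needed since then $L_0(p)=\rho$ already.
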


\begin{proof}
The inequality from left to right is clear. For the inequality from
right to left we distinguish two cases: whether $-e\geq 0$ or
not. If $-e\geq 0$, then $\mathscr{E}(V) = \emptyset$, since
$-1\ngeq 0$, so $\inf\{E(p)\colon E\in\mathscr{E}(V)\} =
+\infty$. On the other hand $\sup\{r\in \reals\colon p\geq re\} =
+\infty$ by Lemma~\ref{basic-properties}.\emph{(iv)}, so the claim
follows. If $-e\not\geq 0$, then $re \geq 0$ implies $r \geq 0$, so
the map defined by $L_0(re) = r$ for all $r \in \reals$ is a
positive linear functional on $U_0 = \mathrm{span}(\{e\})$.  Note now that~$d_p^{L_0} = \sup\{ r \in \reals : p \geq re \}$, and so, to prove the theorem, it suffices to show that there is some pseudo-expectation $E$ extending $L_0$ such that $E(p) = d_p^{L_0}$. 

If $p\in U_0$, then $L_0(p) = d_p^{L_0}$. On the other hand if $p \not\in U_0$, then by Lemma~\ref{key-lemma}, there is a positive linear functional $L'$ extending $L_0$ on $\mathrm{span}(\{e,p\})$ such that $L'(p) = d_p^{L_0}$. Now consider the set $\mathscr{A}$ of all positive linear functionals $L$ that are defined on a subspace~$U\subseteq V$ containing both $e$ and $p$, and satisfy $L(e) = 1$ and $L(p) = d_p^{L_0}$. By the argument above~$\mathscr{A}\neq \emptyset$. On the other hand $\mathscr{A}$ is closed under unions of chains and so, by Zorn's lemma, there is some maximal $E\in\mathscr{A}$.
			
Now the domain of $E$ is the whole of $V$, since otherwise we could
extend $E$ by using Lemma~\ref{key-lemma}, contradicting the
maximality of $E$. Hence $E$ is the pseudo-expectation we are after.
\end{proof}

\subsection{Order units for semi-algebraic proof systems}

For the purposes of this section we define a more general notion of
Positivstellensatz proof that works modulo an arbitrary ideal $I$, not
only the Boolean ideal $I_n$.  Let $I$ be an ideal of the polynomial
space $\reals[x]$, and let $Q = \{q_1\geq 0,\ldots,q_\ell\geq 0, p_1 =
0,\ldots, p_m = 0 \}$ be a set of constraints.  A PS proof mod $I$ of
$p \geq 0$ from $Q$ is an identity (of $\reals[x]/I$) of the form
\begin{equation}
p\equiv s_\emptyset + \sum_{J\subseteq \mathscr{J}} s_J\prod_{j\in J}q_j +
\sum_{j\in[m]} t_j p_j\mod{I}, \label{eqn:psproofmodI}
\end{equation} where $\mathscr{J}$ is a collection
of non-empty subsets of $[\ell]$, each $s_J$ is a sum-of-squares
polynomial,~$s_J = \sum_{i=1}^{k_J} r_{i,J}^2$, and each $t_j$ is an
arbitrary polynomial. 

A cut-off function for $Q$ is a function $c : \Pow([\ell])
\disjointunion [m] \rightarrow \nats$ with $c(J) \geq
\sum_{j \in J} \deg(q_j)$ for each $J \subseteq [\ell]$, and $c(j)
\geq \deg(p_j)$ for each $j \in [m]$.  A PS proof as
in~\eqref{eqn:psproofmodI} has \emph{degree mod $c$} at most $d$ if
$\deg(p) \leq d$, $\deg(s_0) \leq d$, $\deg(s_J) \leq d - c(J)$ for
each $J \in \mathscr{J}$, and $\deg(t_j) \leq d - c(j)$ for each $j
\in [m]$. It has \emph{product-width} at most $w$ if each $J \in
\mathscr{J}$ has cardinality at most~$w$.  We write
$\mathrm{PS}^{c,I}_{w,d}(Q)$ for the convex cone of all polynomials
$p$ such that $p \geq 0$ has a PS proof mod $I$ from $Q$ of degree mod
$c$ at most $d$ and product-width at most $w$.  We will write~$Q
\vdash^{c,I}_{w,d} p \geq q$ if $p-q \in
\mathrm{PS}^{c,I}_{w,d}(Q)$, and denote by
$\mathscr{E}^{c,I}_{w,2d}(Q)$ the set of pseudo-expectations over the
pre-ordered vector space determined by this cone.
These definitions agree with those used in
Section~\ref{sec:size-degree-trade-off} when $I = I_n$.

We show that over any ideal $I$, any cut-off function $c$ and any
product-width $w$, if $Q$ proves that each variable is bounded in
degree two, then the constant polynomial $1$ is an order unit for
$Q$. We prove this in a series of lemmas. In order to simplify the
notation, for these lemmas we write $\vdash_d$ instead of
$\vdash^{c,I}_{w,d}$.

\begin{lemma}
If $Q\vdash_2 R\geq x^2$ for every variable $x$ for some
$R\in\reals_{\geq 0}$, then for any monomial $m$ of degree at most $d$ and
any $a\in\reals$ there is $b\in\reals_{\geq 0}$ such that
$$Q\vdash_{2d} am^2 + b\geq 0.$$
\end{lemma}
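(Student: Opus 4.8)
The plan is to separate the sign of $a$ from the combinatorics of the monomial. The only non-trivial ingredient is the auxiliary claim that, with $k := \deg(m)$,
$$Q \vdash_{2k} R^k \geq m^2, \qquad \text{that is,}\qquad R^k - m^2 \in \mathrm{PS}^{c,I}_{w,2k}(Q).$$
Granting this, the rest is bookkeeping. Since $2k \leq 2d$ and the cones $\mathrm{PS}^{c,I}_{w,e}(Q)$ are non-decreasing in $e$ (a proof of degree mod $c$ at most $e$ is also one of degree mod $c$ at most any larger bound), we get $R^k - m^2 \in \mathrm{PS}^{c,I}_{w,2d}(Q)$. Write $a = a^+ - a^-$ with $a^+,a^- \in \reals_{\geq 0}$ and put $b := (a^+ + a^-)R^k \in \reals_{\geq 0}$. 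Then $am^2 + b = a^+(m^2 + R^k) + a^-(R^k - m^2)$. The first summand is $a^+$ times the sum of squares $m^2 + (\sqrt{R^k})^{2}$ (using $R \geq 0$), of degree $2k \leq 2d$, so it lies in $\mathrm{PS}^{c,I}_{w,2d}(Q)$; the second lies there because the cone is closed under multiplication by the non-negative scalar $a^-$. Adding, $am^2 + b \in \mathrm{PS}^{c,I}_{w,2d}(Q)$, i.e.\ $Q \vdash_{2d} am^2 + b \geq 0$.

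To prove the auxiliary claim, write $m = y_1 y_2 \cdots y_k$ as a product of $k$ variables (with repetitions), so that $m^2 = \prod_{i=1}^k y_i^2$, and use the telescoping identity
$$R^k - \prod_{i=1}^k y_i^2 \;=\; \sum_{j=1}^{k} R^{\,k-j}\Big(\prod_{i<j} y_i\Big)^{\!2}(R - y_j^2),$$
checked by cancellation of consecutive terms. Each summand has the shape $\lambda\, p^2\, q$ with $\lambda = R^{\,k-j} \in \reals_{\geq 0}$ a constant, $p = \prod_{i<j} y_i$ a polynomial of degree $j-1$, and $q = R - y_j^2 \in \mathrm{PS}^{c,I}_{w,2}(Q)$ by hypothesis. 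The one elementary closure fact to record is: if $q \in \mathrm{PS}^{c,I}_{w,e}(Q)$ and $p$ is any polynomial, then $p^2 q \in \mathrm{PS}^{c,I}_{w,\,e+2\deg(p)}(Q)$. Indeed, multiplying a PS proof of $q \geq 0$ through by $p^2$ turns each sum-of-squares coefficient $s_\emptyset, s_J$ into the sum of squares $p^2 s_\emptyset, p^2 s_J$ and each $t_j$ into $p^2 t_j$, raising the degree of every such object by exactly $2\deg(p)$, while leaving the index collection $\mathscr{J}$, and hence the product-width, untouched; since the cut-off function $c$ depends only on the indices, the degree-mod-$c$ bounds survive with $e$ replaced by $e + 2\deg(p)$. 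Combined with closure under multiplication by $\lambda \geq 0$, this places each summand in $\mathrm{PS}^{c,I}_{w,\,2+2(j-1)}(Q) = \mathrm{PS}^{c,I}_{w,2j}(Q) \subseteq \mathrm{PS}^{c,I}_{w,2k}(Q)$, and since $\mathrm{PS}^{c,I}_{w,2k}(Q)$ is a convex cone it absorbs the finite sum over $j$. This gives the auxiliary claim.

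I do not anticipate a genuine obstacle: the whole argument lives inside the convex-cone structure of $\mathrm{PS}^{c,I}_{w,d}(Q)$ recalled at the start of Section~\ref{sec:duality}. The two places that call for a line of care are (i) the bookkeeping in the closure fact — one must observe that $c$ sees only the indices $J$ and $j$, which are unchanged, so multiplying through by $p^2$ shifts the degrees of the coefficients and the corresponding bounds by the same additive constant $2\deg(p)$ — and (ii) the degenerate case $k = 0$, where $m = 1$, the telescoping sum is empty, $R^0 - 1 = 0$ trivially lies in every cone, and in the main statement $b = |a|$ makes $am^2 + b$ a non-negative constant. Everything else is routine.
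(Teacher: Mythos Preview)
Your proof is correct and follows essentially the same approach as the paper: both arguments peel off one variable at a time using the hypothesis $R - x^2 \geq 0$, with the paper packaging this as an induction on $\deg(m)$ and you unrolling it into the explicit telescoping identity $R^k - m^2 = \sum_j R^{k-j}(\prod_{i<j}y_i)^2(R-y_j^2)$. The only organizational difference is that the paper folds the treatment of arbitrary $a$ into the induction, whereas you first establish the clean bound $R^k \geq m^2$ and then handle the sign of $a$ via $a = a^+ - a^-$ with the explicit constant $b = |a|R^k$; this is a cosmetic rearrangement rather than a different idea.
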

\begin{proof}
We prove the claim by induction on the degree of $m$. If $\deg(m) =
0$, then the claim is trivial.  Suppose then that $\deg(m) > 0$.  If
$a \geq 0$, then the claim is again clear: $am^2 =
(\sqrt{a}m)^2$. Suppose that $a < 0$ and let $x$ and $m_0$ be such
that $m = x m_0$.  By assumption $Q\vdash_2 R - x^2\geq 0$, and so $Q\vdash_{2d} (\sqrt{-a}m_0)^2(R-x^2) \geq 0$. By induction hypothesis applied to $m_0$ and $a$ there is
$b_0\in\reals_{\geq 0}$ such that $Q\vdash_{2d} a R m_0^2 + b_0\geq
0$. By adding we have that $Q\vdash_{2d} am^2
+ b_0 \geq 0$.
	\end{proof}
	
\begin{lemma} \label{lem:monolll}
If $Q\vdash_2 R\geq x^2$ for every variable $x$ for some
$R\in\reals_{\geq 0}$, then for any monomial $m$ of degree at most $2d$ and
any $a\in\reals$ there is $b\in\reals_{\geq 0}$ such that
$$Q\vdash_{2d} am + b\geq 0.$$
\end{lemma}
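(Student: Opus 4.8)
The statement to be proven, Lemma~\ref{lem:monolll}, says that under the hypothesis $Q \vdash_2 R \geq x^2$ for every variable $x$, any single monomial $m$ of degree at most $2d$ can be bounded below (in the proof-theoretic sense, at degree $2d$) by a constant $-b$ for some $b \in \reals_{\geq 0}$. The key idea is that a monomial $m$ of even-ish degree can be split as a product of two monomials $m = m_1 m_2$, each of degree at most $d$, and then one writes $2m = 2m_1 m_2 = (m_1+m_2)^2 - m_1^2 - m_2^2 \geq -m_1^2 - m_2^2$, so it suffices to bound $m_1^2$ and $m_2^2$ above by constants at degree $2d$. That is exactly what the previous (unnamed) lemma delivers, applied with $a = -1$ to each of $m_1$ and $m_2$: there are $b_1, b_2 \in \reals_{\geq 0}$ with $Q \vdash_{2d} -m_1^2 + b_1 \geq 0$ and $Q \vdash_{2d} -m_2^2 + b_2 \geq 0$.

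**Key steps, in order.** First I would write $m = m_1 m_2$ with $\deg(m_1), \deg(m_2) \leq d$; this is possible because $\deg(m) \leq 2d$ (if $\deg(m)$ is odd, one factor has degree $\leq d$ and the other $\leq d$ as well since $\lceil (2d-1)/2 \rceil = d$). Second, observe the algebraic identity $(m_1 - m_2)^2 = m_1^2 - 2m_1 m_2 + m_2^2$, which is a sum of squares of degree at most $2d$, hence $Q \vdash_{2d} (m_1-m_2)^2 \geq 0$; equivalently $Q \vdash_{2d} m_1^2 + m_2^2 \geq 2m_1 m_2 = 2m$. Third, apply the previous lemma twice with $a = -1$ to obtain $b_1, b_2 \in \reals_{\geq 0}$ with $Q \vdash_{2d} -m_1^2 + b_1 \geq 0$ and $Q \vdash_{2d} -m_2^2 + b_2 \geq 0$. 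Fourth, add these three derivations: from $m_1^2 + m_2^2 \geq 2m$, $b_1 \geq m_1^2$, and $b_2 \geq m_2^2$ we get $Q \vdash_{2d} b_1 + b_2 \geq 2m$, i.e. $Q \vdash_{2d} 2m + (b_1 + b_2 - 2m) \cdot 0$... more carefully, $Q \vdash_{2d} -2m + (b_1 + b_2) \geq 0$, and then scaling by $1/2$ (multiplication of a proof by a non-negative scalar preserves degree) gives $Q \vdash_{2d} -m + (b_1+b_2)/2 \geq 0$. Setting $b = (b_1+b_2)/2 \geq 0$ finishes it.

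**Expected obstacle.** This lemma is essentially a one-line corollary of its predecessor, so there is no real difficulty — the only points to be a touch careful about are (a) that the factorization $m = m_1 m_2$ genuinely keeps both factors of degree $\leq d$ even when $2d$ is odd... but $2d$ is even by construction, so each factor can even be taken of degree $\leq d$ trivially; (b) that the degree bookkeeping is sound, namely $\deg((m_1-m_2)^2) \leq 2\max(\deg m_1, \deg m_2) \leq 2d$ and $\deg(m_i^2) \leq 2d$, so all three component derivations and their sum live at degree $2d$; and (c) that the proof cone $\mathrm{PS}^{c,I}_{w,d}(Q)$ is closed under addition and non-negative scaling, which is immediate from it being a convex cone. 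None of these is a genuine obstacle; the lemma is a stepping stone toward showing $1$ is an order unit (the next step presumably combines this with the observation that an arbitrary polynomial is a finite linear combination of monomials, so bounding each monomial by a constant bounds the polynomial by a constant, which is precisely the order-unit property).
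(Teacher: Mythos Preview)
Your approach is essentially the same as the paper's: split $m = m_1 m_2$ with $\deg(m_i) \leq d$, use the square $(m_1 \pm m_2)^2$ to express $\pm 2m$ in terms of $m_1^2$ and $m_2^2$, and then invoke the previous lemma to absorb the $m_i^2$ terms into constants. The only omission is that your detailed ``Key steps'' treat just the case $a=-1$ (via $(m_1-m_2)^2$), whereas the lemma asks for arbitrary $a\in\reals$; the paper handles both signs explicitly by using $(\sqrt{|a|/2}\,m_1 \pm \sqrt{|a|/2}\,m_2)^2$, and you should do the same---or simply note that the $a>0$ case follows from the $(m_1+m_2)^2$ identity you wrote in your plan, and general $a$ from non-negative scaling.
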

\begin{proof}
Let $m_0$ and $m_1$ be monomials of degree at most $d$ such that
$m = m_0m_1$. Now if $a\geq 0$, then $(\sqrt{a/2}m_0 +
\sqrt{a/2}m_1)^2 = (a/2)m_0^2 + am + (a/2)m_1^2$. Now, by previous
lemma, there are non-negative $b_0$ and $b_1$ such that
$Q\vdash_{2d} (-a/2)m_i^2 + b_i\geq 0$ for
$i\in\{0,1\}$. Hence $Q\vdash_{2d} am + b_0 + b_1 \geq 0$. If
$a < 0$, then $(\sqrt{-a/2}m_0 - \sqrt{-a/2}m_1)^2 = (-a/2)m_0^2 + am
+ (-a/2)m_1^2$. Now, again by previous lemma, there are non-negative $b_0$
and $b_1$ such that $Q\vdash_{2d} (a/2)m_i^2 + b_i\geq 0$ for
$i\in\{0,1\}$. Hence $Q\vdash_{2d} am + b_0 + b_1 \geq 0$.
	\end{proof}
	
\begin{lemma}
If $Q\vdash_2 R\geq x^2$ for every variable $x$ for some
$R\in\reals_{\geq 0}$, then for any polynomial $p$ of degree at most $2d$
there is $r \in \reals_{\geq 0}$ such that
$$Q\vdash_{2d} r\geq p.$$
\end{lemma}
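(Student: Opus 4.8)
The plan is to reduce the statement for an arbitrary polynomial $p$ of degree at most $2d$ to the already-established Lemma~\ref{lem:monolll} for monomials, using the linearity of PS-provability at fixed degree. Concretely, write $p = \sum_{\alpha \in I} a_\alpha x^\alpha$ as an explicit sum of terms, where each $x^\alpha$ is a monomial of degree at most $2d$ and $a_\alpha \in \reals$. For each term, apply Lemma~\ref{lem:monolll} with the scalar $-a_\alpha$ in place of $a$ to obtain a non-negative real $b_\alpha \in \reals_{\geq 0}$ such that $Q \vdash_{2d} -a_\alpha x^\alpha + b_\alpha \geq 0$. Since $\mathrm{PS}^{c,I}_{w,2d}(Q)$ is a convex cone (in particular closed under addition), summing these proofs over all $\alpha \in I$ yields $Q \vdash_{2d} \sum_{\alpha} (-a_\alpha x^\alpha + b_\alpha) \geq 0$, i.e. $Q \vdash_{2d} \big(\sum_{\alpha} b_\alpha\big) - p \geq 0$.

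Setting $r := \sum_{\alpha \in I} b_\alpha$, which is a finite sum of non-negative reals and hence lies in $\reals_{\geq 0}$, this is exactly the assertion $Q \vdash_{2d} r \geq p$. So the whole argument is essentially a one-line consequence of the monomial case plus the cone structure of the proof system.

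The only point requiring a small amount of care is that adding the individual proofs stays within the degree-mod-$c$ bound and the product-width bound: but this is automatic, since each summand proof has degree mod $c$ at most $2d$ and product-width at most $w$ by construction, and a sum of PS proofs mod $I$ of degree mod $c$ at most $2d$ and product-width at most $w$ is again such a proof (one merely concatenates the sum-of-squares certificates and adds the $t_j$-coefficients of matching constraints). I do not anticipate any genuine obstacle here; the substantive content has already been done in the two preceding lemmas, and this lemma is the final bookkeeping step that packages them into the conclusion $1$ is an order unit. The takeaway, to be recorded after the proof, is that under the hypothesis $Q \vdash_2 R \geq x^2$ for all variables $x$, the constant polynomial $1$ is an order unit for the pre-ordered vector space $\langle \reals[x]_{2d}, \leq_{\mathscr{C}} \rangle$ with $\mathscr{C} = \mathrm{PS}^{c,I}_{w,2d}(Q)$, so that Theorem~\ref{duality-theorem} applies and gives the desired zero-gap duality at each fixed even degree.
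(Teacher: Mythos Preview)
Your proposal is correct and is exactly the argument the paper has in mind: the paper's own proof reads simply ``Immediate from Lemma~\ref{lem:monolll},'' and your write-up is the natural unpacking of that one line---decompose $p$ into terms, apply Lemma~\ref{lem:monolll} to each with coefficient $-a_\alpha$, and sum using closure of the cone under addition.
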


\begin{proof} Immediate from Lemma~\ref{lem:monolll}. \end{proof}

This
establishes the existence of an order-unit hence, by
Theorem~\ref{duality-theorem}, we have:

\begin{corollary} \label{cor:dualityball}
Let $d$ be a positive integer, let $Q$ be an indexed set of
polynomials, let $c$ be a cut-off function for $Q$, let $w$ be a
positive integer, let $I$ be an ideal of $\reals[x]$, and let $p$ be a
polynomial of degree at most $2d$. If $Q \vdash^{c,I}_{w,2} R \geq
x^2$ for every variable $x$ for some $R \in \reals_{\geq 0}$, then
$$ 
\sup\{ r \in \reals : Q \vdash_{w,2d}^{c,I} p \geq r \} = \inf\{ E(p) :
E \in \mathscr{E}^{c,I}_{w,2d}(Q) \}.
$$
Moreover, if the set $\mathscr{E}^{c,I}_{w,2d}(Q)$ is non-empty, then there is a
pseudo-expectation achieving the infimum; i.e., $\min\{ E(p) : E \in
\mathscr{E}^{c,I}_{w,2d}(Q) \}$ is well-defined.
\end{corollary}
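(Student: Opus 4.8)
The plan is to deduce Corollary~\ref{cor:dualityball} directly from the abstract Duality Theorem~\ref{duality-theorem} by exhibiting the right pre-ordered vector space together with an order unit. First I would take $V$ to be the space $\reals[x]_{2d}$ of polynomials of degree at most $2d$ (more precisely, the quotient of this by the degree-$\leq 2d$ part of $I$, so that the pre-order is well-behaved), equipped with the pre-order $\leq$ induced by the convex cone $\mathscr{C} := \mathrm{PS}^{c,I}_{w,2d}(Q)$, i.e. $p \leq q$ iff $q - p \in \mathscr{C}$. One must check that $\mathscr{C}$ is indeed a convex cone inside $V$: it is closed under addition and under multiplication by non-negative reals because PS proofs of $p \geq 0$ and $p' \geq 0$ can be added, and a proof of $p \geq 0$ can be scaled by $\sqrt{\lambda}$ inside the squares; and every element of $\mathscr{C}$ has degree at most $2d$ by definition, so $\mathscr{C} \subseteq V$. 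This gives a pre-ordered vector space $\langle V, \leq \rangle$ in the sense of Section~\ref{sec:duality}, and by construction $Q \vdash^{c,I}_{w,2d} p \geq q$ is exactly $q \leq p$, while $\mathscr{E}^{c,I}_{w,2d}(Q) = \mathscr{E}(V)$ since a pseudo-expectation in the sense of the corollary is precisely a positive linear functional on $V$ with $E(1) = 1$, and $1$ will be the order unit.

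Second, the key point is that the constant polynomial $e = 1$ is an order unit for $\langle V, \leq \rangle$. This is exactly the content of the three lemmas just proved: the hypothesis $Q \vdash^{c,I}_{w,2} R \geq x^2$ for every variable $x$ and some $R \in \reals_{\geq 0}$ is the premise of Lemma~\ref{lem:monolll} and of the unnamed lemma immediately following it, whose conclusion states that for every polynomial $p$ of degree at most $2d$ there is $r \in \reals_{\geq 0}$ with $Q \vdash_{2d} r \geq p$, i.e. $r \cdot 1 \geq p$ in $V$. That is the definition of $1$ being an order unit. (One should note the mild bookkeeping that those lemmas are phrased with the abbreviation $\vdash_{2d} = \vdash^{c,I}_{w,2d}$ for the fixed $c, w, I$, so no translation is needed beyond unwinding notation.)

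Third, with the order unit in hand I would simply invoke Theorem~\ref{duality-theorem} applied to this $V$ and this $e = 1$. Its conclusion, $\sup\{ r \in \reals : p \geq re \} = \inf\{ E(p) : E \in \mathscr{E}(V) \}$ with the infimum attained when $\mathscr{E}(V) \neq \emptyset$, translates verbatim under the dictionary above into $\sup\{ r \in \reals : Q \vdash^{c,I}_{w,2d} p \geq r \} = \inf\{ E(p) : E \in \mathscr{E}^{c,I}_{w,2d}(Q) \}$ with the infimum attained when that set is non-empty, which is exactly the claim. The only care needed is that $re$ in Theorem~\ref{duality-theorem} means the scalar multiple of the order unit, which here is the constant polynomial $r$, matching $Q \vdash^{c,I}_{w,2d} p \geq r$; and that $p$ lies in $V$, which holds by the degree-$\leq 2d$ hypothesis on $p$.

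I do not expect a genuine obstacle here: the corollary is essentially an instantiation, and all the real work — the Hahn--Banach/Zorn extension argument for the duality theorem, and the explicit construction of the order unit from the ball constraints — has already been carried out. The one place to be slightly careful is confirming that $\leq$ is a genuine pre-order (reflexivity and transitivity of $\leq_{\mathscr{C}}$), which amounts to checking $0 \in \mathscr{C}$ and that $\mathscr{C}$ is closed under addition; both are immediate from the definition of PS proofs (the zero proof, and concatenation of proofs). So the proof is a short paragraph assembling these pieces, which is presumably why the authors state it as a corollary with a one-line proof.
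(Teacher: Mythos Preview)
Your proposal is correct and matches the paper's approach exactly: the corollary is stated immediately after the sequence of lemmas establishing that $1$ is an order unit, with the one-line justification that this existence of an order unit, together with Theorem~\ref{duality-theorem}, yields the result. The only superfluous step in your outline is the suggestion to pass to a quotient by the degree-$\leq 2d$ part of $I$; this is unnecessary precisely because the paper works with \emph{pre}-ordered vector spaces, so the pre-order induced by $\mathscr{C}$ on $\reals[x]_{2d}$ itself already satisfies the hypotheses of Theorem~\ref{duality-theorem}.
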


For the Boolean ideal $I_n$, the assumption that $Q \vdash^{c,I}_{w,2}
R \geq x^2$ holds for every variable~$x$ is fulfilled with $R = 1$
since $1-x^2 \equiv (1-x)^2 \mod I_n$. This gives
Theorem~\ref{thm:duality-for-sos}. In the~$\pm 1$ representation of
the Boolean hypercube, i.e., modulo the ideal $I'_n$ generated by the
axioms~$B'_n := \{ 1-x_i^2, 1-\bar{x_i}^2, x_i+\bar{x}_i : i \in [n]
\}$, the assumption is fulfilled also with $R=1$ since in this case
$1-x^2 \equiv 0 \mod I'_n$.

\section{Concluding Remarks} \label{sec:concluding-remarks}

In this paper we addressed the question of size-degree trade-offs for
PS and SOS. Some questions remain open. Most importantly, is the
$O(\sqrt{n\log(s)}+kw)$ upper bound in the degree-reduction lemma
tight? For Resolution and PC, whose size-width/degree trade-offs adopt
the same form, the bound is known to be tight. In both cases the
Ordering Principle~(OP) witnesses the necessity of the square root of
the number of variables in the upper bound
\cite{BonetGalesi,LauriaGalesi}. In this respect, it should be noted
that it was recently shown that OP$_n$, which has $N = n^2$ variables,
can be refuted in degree $O(\sqrt{n})$, whence degree $O(\sqrt[4]{N})$, in
SOS \cite{Potechin2018}. Since the relationship between $N$ and
$\sqrt{n}$ is a $4$-th root, this means that OP$_n$ cannot be used for
witnessing the necessity of the square root of the number of variables
in our theorem. But can OP$_n$ be used to show that at least some
fixed root $\sqrt[r]{n}$ of $n$ is required? So far, the best SOS degree
lower bound for OP$_n$ known is superconstant \cite{Potechin2018}.

Although it looks unlikely that the dependence of
$O(\sqrt{n\log(s)}+kw)$ on the product-width $w$ could be improved by
refining the current method, it is not even known whether there are
examples that separate PS from SOS. Could PS collapse to SOS with
respect to size or degree? Related to this, a comment worth making is
that there is a general well-known technique for transforming
inequalities $P \geq 0$ into equalities $P - z^2 = 0$, where $z$ is a
fresh variable. This looks relevant since, in the absence of
inequalities, PS collapses to SOS just by definition. On the other
hand, note that the new variable $z$ that is introduced by this method
is not Boolean, which takes us outside the Boolean hypercube.

\bigskip \bigskip \noindent\textbf{Acknowledgments.}  We are grateful
to Michal Garlik, Moritz M\"uller and Aaron Potechin for comments on
an earlier version of this paper. We are also grateful to Jakob
Nordstr\"om for initiating a discussion on the several variants of the
definition of \emph{monomial size} as discussed in
Section~\ref{sec:preliminaries}. Both authors were partially funded by
European Research Council (ERC) under the European Union's Horizon
2020 research and innovation programme, grant agreement ERC-2014-CoG
648276 (AUTAR) and MICCIN grant TIN2016-76573-C2-1P (TASSAT3).

\bibliographystyle{plain}
\bibliography{semialgebraicdegree.bib}

\end{document}